\DeclareMathOperator{\Csp}{CSP}
\newcommand{\cproblem}[3]{
\vspace{.2cm}
\noindent {\bf #1} \\
INSTANCE: #2 \\
QUESTION: #3 \\}
\DeclareMathOperator{\Aut}{Aut}
\DeclareMathOperator{\End}{End}
\DeclareMathOperator{\Sym}{Sym}
\author{Manuel Bodirsky}
    \address{Laboratoire d'Informatique  (LIX), CNRS UMR 7161\\
    \'{E}cole Polytechnique \\91128 Palaiseau\\
    France}
    \email{bodirsky@lix.polytechnique.fr}
    \urladdr{http://www.lix.polytechnique.fr/~bodirsky/}
\thanks{The research leading to these results has received funding from the European Research Council under the European Community's Seventh Framework Programme (FP7/2007-2013 Grant Agreement no.\ 257039) and from the EPSRC grant EP/H00677X/1. The third author was supported by the LIX-Qualcomm Postdoctoral Fellowship.}
\author{H.\ Dugald Macpherson}
    \address{School of Mathematics, University of Leeds\\
      Leeds LS2 9JT\\
      England}
    \email{h.d.macpherson@leeds.ac.uk}
\author{Johan Thapper}
    \address{Laboratoire d'Informatique  (LIX), CNRS UMR 7161\\
    \'{E}cole Polytechnique \\91128 Palaiseau\\
    France}
    \email{thapper@lix.polytechnique.fr}
    \urladdr{http://www.lix.polytechnique.fr/~thapper/}
\title[Semi-lattice Polymorphisms]{Constraint Satisfaction Tractability from Semi-lattice Operations on Infinite Sets}
\newtheorem{theorem}{Theorem}[section]
\newtheorem{lemma}[theorem]{Lemma}
\newtheorem{proposition}[theorem]{Proposition}
\newtheorem{conjecture}[theorem]{Conjecture}
\newtheorem{question}{Question}[section]
\newtheorem{definition}{Definition}[section]
\theoremstyle{remark}
\newtheorem{example}{Example}
\newcommand{\ignore}[1]{}
\begin{document}

\tikzstyle{vertex8}=[draw,circle,fill=black,text=white,minimum size=2pt,inner sep=0pt]
\tikzstyle{edge}=[-latex]
\tikzstyle{edged}=[-latex,dashed]

\begin{abstract}
  A famous result by Jeavons, Cohen, and Gyssens shows that every constraint satisfaction problem (CSP) 
  where the constraints are preserved by
  a \emph{semi-lattice operation} can be solved in polynomial time. This is one of the basic facts for the so-called
  \emph{universal-algebraic approach} to a systematic theory
  of tractability and hardness in finite domain constraint satisfaction.
  
  Not surprisingly, the theorem of Jeavons et al.\ fails for arbitrary infinite domain CSPs. 
  Many CSPs of practical interest, though, and in particular
  those CSPs that are motivated by \emph{qualitative reasoning calculi} from Artificial Intelligence, 
  can be formulated with constraint languages that are rather well-behaved from a model-theoretic point of view. 
  In particular, the automorphism group of these constraint languages tends to be \emph{large} in the sense that 
  the number of orbits of $n$-subsets of the automorphism group is bounded by some function in $n$.
  
  In this paper we present a generalization of the theorem by Jeavons et al.\ to infinite
  domain CSPs where the number of orbits of $n$-subsets grows sub-exponentially in $n$, and prove that
  preservation under a semi-lattice operation for such CSPs implies polynomial-time tractability. 
  Unlike the result of Jeavons et al., this includes many CSPs that cannot be solved by Datalog.
\end{abstract}

\maketitle



\section{Introduction}
Constraint satisfaction problems are fundamental computational problems that arise in many areas of theoretical computer science. 
In recent years, a considerable amount of research has been concentrated on the classification of those constraint satisfaction problems that
can be solved in polynomial time, and those that are computationally hard. In this paper, we contribute to this line of research
and generalize an important tractability condition from finite domain constraint satisfaction to a broad class of infinite domain
constraint satisfaction problems. 

We work with the following definition of constraint satisfaction problems (CSPs),  which is well-adapted to 
treat CSPs over infinite as well as finite domains. The definition is based on the concept of a \emph{homomorphism} between
relational structures, and equivalent to the standard definition for finite domain CSPs. 
A \emph{(relational) structure} $A$ consists of a (not necessarily finite) \emph{domain} $D(A)$
(or simply $A$ when no confusion can arise),
and a set of relations on $D(A)$, each of a finite positive arity.
Each relation is named by a \emph{relation symbol} $R$;
the corresponding relation in $A$ is denoted by $R^A$.
The set of all relation symbols is called the \emph{signature} of the structure.
A \emph{homomorphism} from a relational structure $A$ to a relational structure $B$ over the same signature is a mapping $f : D(A) \rightarrow D(B)$ such that
for each relation symbol $R$, and tuple $t \in R^A$, it holds that
$f(t) \in R^B$, where $f$ is applied component-wise to $t$.
The existence of a homomorphism from $A$ to $B$ is denoted by $A \rightarrow B$.
For a fixed structure, traditionally denoted by $\Gamma$, with finite relational signature $\tau$ 
the \emph{constraint satisfaction problem for $\Gamma$} (denoted by $\Csp(\Gamma)$) is
the following problem.

\cproblem{CSP($\Gamma$)}
{A finite structure $A$ over the signature $\tau$.}
{Is there a homomorphism from $A$ to $\Gamma$?}

To give an example, the three-colorability problem can be formulated as $\Csp(K_3)$, where
$K_3$ is the complete graph with three elements. A basic example of an infinite-domain CSP is $\Csp(({\mathbb Q}; <))$, where $({\mathbb Q};<)$ is the strict linear ordering of the 
rationals.

Jeavons and co-authors~\cite{JeavonsClosure,JBK} made the ground-breaking observation that for finite structures $\Gamma$, the complexity of $\Csp(\Gamma)$ is captured by the \emph{polymorphisms} of $\Gamma$, defined as follows. 
When $f : D^k \rightarrow D$ is a $k$-ary function, and $R$ is an an $n$-ary relation over $D$, then we say that
\emph{$f$ preserves $R$} if  for all $n$-tuples
$t_1, \dots, t_k \in R$, we have
 \[
 (f(t_1[1],\dots,t_k[1]), \dots, f(t_1[n],\dots,t_k[n])) \in R.
 \]
 A \emph{polymorphism} of a relational structure $\Gamma$ with domain $D=D(\Gamma)$ is a function from $D^k$ to $D$ that preserves all relations of $\Gamma$. In other words, a polymorphism
 is a homomorphism from $\Gamma^k$ to $\Gamma$, for some $k$.
The exploitation of polymorphisms for classifying the complexity of CSPs is sometimes referred to as the \emph{universal-algebraic approach}.
Indeed, very often tractability of $\Csp(\Gamma)$ is linked
to polymorphisms of $\Gamma$ with certain `good properties'. 

For a finite domain $D = D(\Gamma)$, it is known that $\Csp(\Gamma)$ is NP-hard unless $\Gamma$ has a polymorphism $f : D^k \rightarrow D$ satisfying
\[
f(x_1,x_2,\dots,x_k) = f(x_2,x_3,\dots,x_k,x_1),
\]
for all $x_1,\dots,x_k \in D$,
and it is conjectured that $\Csp(\Gamma)$ can be solved in polynomial-time
whenever $\Gamma$ has such a polymorphism~\cite{BartoKozikLICS10}\footnote{Barto and Kozik~\cite{BartoKozikLICS10} call $f$ a \emph{cyclic term} when it
satisfies the additional requirement of being idempotent; $f(x,\dots,x) = x$ for all $x \in D$.
They state the conditions for NP-hardness and the conjecture for polynomial-time tractability in terms of the absence or presence of such cyclic terms among the polymorphisms of the \emph{core of $\Gamma$}, cf.~Section~\ref{sect:mc-cores}. It is not hard to verify that their condition is equivalent to the one given here.}.
This conjecture is known to hold in several special cases when $f$ satisfies stronger identities.
We will now look at one such case.

An operation $f: D^2 \rightarrow D$ is called 
\begin{itemize}
\item \emph{idempotent} if $f(x,x)=x$;
\item \emph{commutative} if $f(x,y)=f(y,x)$
for all $x,y \in D$;
\item \emph{associative} if $f(x,f(y,z))=f(f(x,y),z)$ for all $x,y,z \in D$;
\item a \emph{semi-lattice operation} if it is commutative, associative, and idempotent.
\end{itemize}

Jeavons, Cohen, and Gyssens~\cite{JeavonsClosure} proved that for every finite structure $\Gamma$ with a polymorphism that
is a semi-lattice operation, the problem $\Csp(\Gamma)$ can be solved in polynomial time. 
In this paper, we present a generalization of this result to a large class of infinite-domain CSPs. 
All infinite structures considered are assumed to be countably infinite.
To state the result, we need the following definitions.

A bijective homomorphism with an inverse that is also a homomorphism is called an \emph{isomorphism}.
An \emph{automorphism} of a relational structure $\Gamma$ is an isomorphism between $\Gamma$ and itself, and the set of all automorphisms of $\Gamma$ is denoted by $\text{Aut}(\Gamma)$. 
For a subset $S$ of the domain of $\Gamma$, the \emph{orbit of $S$} in $\Gamma$ is the
set $\big \{ \{ \alpha(s) \mid s \in S\} \mid \alpha \in \text{Aut}(\Gamma) \big \}$. 
When $S$ is of cardinality $n$, then
we call the orbit of $S$ in $\Gamma$ an \emph{orbit of $n$-subsets}.
If the number of orbits of $n$-subsets of $\Gamma$ is at least $c^n$ for some $c > 1$ and all sufficiently large $n$, then we say that $\Gamma$ has \emph{exponential growth}.
Otherwise, we say that $\Gamma$ has sub-exponential growth, or that $\Gamma$ is a \emph{sub-exponential structure}.
Note that every finite structure is a sub-exponential structure
since for $n$ greater than the domain size, 
there are no $n$-subsets at all, hence zero orbits.
But also the structure $({\mathbb Q};<)$ (and all structures with domain ${\mathbb Q}$ whose relations are first-order definable in $({\mathbb Q};<)$) is sub-exponential: it has only one orbit of $n$-subsets,
for all $n$ (see e.g.~\cite{Hodges}). 
Our main result is the following.

\begin{theorem}\label{thm:main}
Let $\Gamma$ be a sub-exponential structure with finite relational signature. If $\Gamma$ has a semi-lattice polymorphism, then $\Csp(\Gamma)$ can be be solved in polynomial time.
\end{theorem}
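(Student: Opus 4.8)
The plan is to adapt the classical arc-consistency argument of Jeavons, Cohen, and Gyssens to the infinite domain, the two new ingredients being that all bookkeeping must be done with orbits rather than with elements, and that the final step ``assign to each variable the maximum of its surviving values'' has to be replaced by a compactness argument powered by the sub-exponential growth hypothesis.

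First I would note that sub-exponential growth entails that $\Gamma$ has, for each $n$, only finitely many orbits of $n$-subsets, hence (by the standard equivalence between orbit counts on $n$-subsets and on $n$-tuples) only finitely many orbits of $n$-tuples, so $\Gamma$ is $\omega$-categorical by the Ryll-Nardzewski theorem. Consequently, for the fixed structure $\Gamma$ one may compute with orbits of tuples and run an arc-consistency procedure on an input instance $A$ with variable set $V$ in time polynomial in $|V|$: maintain for each variable $x$ the set $S_x$ of values not yet locally excluded (finitely represented through the $1$-orbits and the definable constraints relating them), and propagate to a fixed point. If some $S_x$ is emptied, reject; this is clearly sound. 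The substance of the proof is completeness: if arc consistency survives, then $A \rightarrow \Gamma$.

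For completeness I would run the Jeavons--Cohen--Gyssens extraction. Since $f$ is a polymorphism, $\mathrm{Hom}(A,\Gamma)$ is closed under coordinatewise application of $f$, hence is a subsemilattice of $(\Gamma^{V}, f^{V})$, so each $S_x$ is closed under $f$. In the finite case one sets $h(x) := \bigvee S_x$ and uses the arc-consistency fixed point to verify that $h$ is a homomorphism: for a constraint $R(x_1,\dots,x_k)$ and a coordinate $i$, arc consistency provides for each $v \in S_{x_i}$ a tuple of $R^{\Gamma}$ with $v$ in coordinate $i$ and all other entries in the corresponding $S_{x_j}$, and $f$-combining all of these tuples (ranging over all coordinates and all such $v$) yields a tuple of $R^{\Gamma}$ whose $i$-th coordinate is exactly $\bigvee S_{x_i}$. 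The obstacle over an infinite domain is that $\bigvee S_x$ need not exist. This is precisely where sub-exponential growth enters: I expect to show that repeatedly replacing, in each coordinate, a finite subset of $S_x$ by the join of that subset with finitely many of its automorphic images can only move the orbit of the resulting $V$-tuple upward in a fixed quasi-order on the finitely many orbits of $|V|$-tuples; once a maximal such orbit is reached — which happens after polynomially many rounds — the $V$-tuple produced is a finite join of homomorphisms, so it lies in $D^{V}$ and is explicitly computable, and re-running the combining argument above with this finite join in the role of the possibly-nonexistent $\bigvee S_x$ forces it to be a homomorphism. Note that this procedure does strictly more than arc consistency, since it constructs a solution; this is consistent with the claim that the resulting tractable class escapes Datalog.

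The main obstacle is making this stabilisation step precise: pinning down the right quasi-order on orbits; showing that finitely many automorphic images suffice to ``saturate'' each $S_x$ against the finitely many constraint behaviours occurring in $A$; and, crucially, showing that the required finite joins do the job at the maximal orbit — it is here that ``sub-exponential'' rather than merely ``oligomorphic'' does the work, since oligomorphic structures with a semi-lattice polymorphism need not have tractable CSP without the growth restriction. The soundness direction and the $\omega$-categorical bookkeeping are routine once this is in place.
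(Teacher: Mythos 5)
Your set-up (semi-lattice operation gives totally symmetric polymorphisms of all arities, sub-exponential growth gives $\omega$-categoricity, arc-consistency bookkeeping via orbits) matches the paper's starting point, but the heart of your argument --- the ``stabilisation/saturation'' step that is supposed to replace the nonexistent $\bigvee S_x$ --- is exactly the content of the theorem and is not supplied. Two concrete problems. First, the orbit-level propagation you describe is a fixed-width symbolic procedure, and such procedures are provably incomplete here: for $\Gamma=({\mathbb Q};\{(x,y,z)\mid x>y \vee x>z\})$, which is sub-exponential and min-closed, no fixed level of consistency solves the CSP (the paper cites exactly this), and already for the single constraint $R(x,x,x)$ your sets $S_x$ never shrink (every rational has something below it) although the instance is unsatisfiable. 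So completeness must come entirely from the unproven extraction step. Second, that step as sketched does not yet use sub-exponentiality at all: merely $\omega$-categorical structures also have finitely many orbits of $|V|$-tuples, so the quasi-order/maximal-orbit scheme, even if it could be made precise, cannot by itself distinguish the tractable sub-exponential case from general oligomorphic templates (where a semi-lattice polymorphism is not known to suffice); moreover the assertion that at the maximal orbit the finite join is a homomorphism is unsupported --- unlike the finite Jeavons--Cohen--Gyssens argument there is no top element, and nothing in your sketch forces finitely many automorphic images to saturate all constraint behaviours simultaneously. You acknowledge this obstacle, but it is the whole theorem.

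The paper closes this gap not by a saturation argument but by a classification plus sampling. One passes to the model-complete core (which keeps totally symmetric polymorphisms and sub-exponential growth), shows each transitive constituent is primitive (via a lemma combining alternating closed walks with totally symmetric polymorphisms), and then applies Macpherson's theorem (an infinite primitive group that is not highly set-transitive has exponentially many orbits of $n$-subsets) together with Cameron's theorem to conclude that each constituent is a single point or first-order interdefinable with $({\mathbb Q};<)$; a further sub-exponentiality argument shows there is no interaction between the orbits. This yields an \emph{efficient sampling algorithm}: on input $|A|=n$ one computes a finite substructure $B\subseteq\Gamma$ of polynomial size (essentially $[n]$ inside each $({\mathbb Q};<)$-like orbit) such that $A\to B$ iff $A\to\Gamma$, and then runs ordinary arc-consistency on the pair $(A,B)$; completeness follows because total symmetry gives ${\mathcal P}(B)\to\Gamma$. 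The finite template $B$ grows with the instance, which is precisely how the method escapes the bounded-consistency/Datalog barrier that blocks the orbit-level procedure; in the $R(x,x,x)$ example above, AC over the sampled $[n]$ does empty the lists after about $n$ rounds. Without the classification (or some other route to a polynomial-size sample), your proposal does not prove the statement.
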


Finite domain structures with a semi-lattice polymorphism can be solved by a standard technique, known as \emph{establishing arc-consistency}~\cite{DechterBook}.
The situation is different for sub-exponential structures with a semi-lattice polymorphism.
Consider for instance the structure $({\mathbb Q}; \{(x,y,z) \mid x>y \vee x>z \})$.
It has the same automorphism group as $({\mathbb Q};<)$ and hence is sub-exponential as well.
This structure has the function $(x,y) \mapsto \min(x,y)$ on ${\mathbb Q}$ as a polymorphism,
but it has been shown that there is no $k$ such that this problem
can be solved by establishing $k$-consistency~\cite{ll}. 
Tractability of $\Csp(({\mathbb Q}; \{(x,y,z) \mid x>y \vee x>z \}))$ 
instead follows from the following more general result of~\cite{tcsps-journal}: 
every structure $\Gamma$ with domain $\mathbb Q$ that has the same automorphism
group as $({\mathbb Q};<)$ and that is preserved by the minimum function
(or the maximum function) has a polynomial-time tractable CSP.
Our result will be a proper generalization of this result and of the mentioned result of Jeavons, Cohen, and Gyssens.

Let us remark that for a general infinite structure, a semi-lattice polymorphism does not suffice to ensure tractability.
For an arbitrary subset $U$ of ${\mathbb N}$, let $\Gamma_U$ be the
structure $({\mathbb N}; \{(x,y) \in {\mathbb N}^2 \; | \; x=y+1\}, \{0\}, U)$.
Every such structure has min (and max) as a semi-lattice polymorphism.
We claim that CSP($\Gamma_U$) and CSP($\Gamma_{V}$) are different problems
for distinct subsets $U$ and $V$ of ${\mathbb N}$.
Let $m$ be any element in $U \, \Delta \, V$, where $\Delta$ denotes the symmetric difference of the sets, and
let $A_m$ be the instance on variables $\{x_0,\dots,x_m\}$ containing the
constraints
$x_0 = 0$, $x_{i} = x_{i-1}+1$ for $1 \leq i \leq m$, and $U(x_m)$.
Then $A_m$ is a satisfiable instance for precisely one of the problems
CSP($\Gamma_U$) and CSP($\Gamma_{V}$).
It follows that there are as many pairwise distinct CSPs of this type as there are subsets of ${\mathbb N}$, i.e., uncountably many.
However, there are no more than countably many algorithms, hence
CSP$(\Gamma_U)$ is undecidable for some $U$.

The example of the previous paragraph relies on the fact that for all structures $\Gamma_U$, there is an \emph{infinite} number
of orbits of $2$-subsets. When the number of orbits of $n$-subsets of $\Gamma$ is finite for all $n$, then the structure is called \emph{$\omega$-categorical} in model theory~\cite{Hodges}. 
For $\omega$-categorical structures, polymorphisms still capture the
computational complexity of $\Csp(\Gamma)$~\cite{BodirskyNesetrilJLC}.
Our main result can thus be seen as a contribution to the further extension of
the universal-algebraic approach from finite to $\omega$-categorical
structures.

\subsection*{Overview}
Our paper is structured as follows. In Section~\ref{sect:alg} we introduce a new algorithmic technique to solve infinite-domain
constraint satisfaction problems, and present a reduction of $\Csp(\Gamma)$ for structures $\Gamma$ with
a semi-lattice polymorphism to an \emph{efficient sampling algorithm} for $\Gamma$. 
The basic idea is that when there is such an efficient sampling algorithm for $\Gamma$, then we can use the 
arc-consistency procedure for finite domain CSPs to solve $\Csp(\Gamma)$ (we actually use the \emph{uniform} version of the
arc-consistency procedure where both $A$ and a finite template $B$ are part of the input). 
In fact, this technique works under a slightly more general assumption on $\Gamma$: instead of requiring the existence
of a semi-lattice polymorphism, we only require that $\Gamma$ has \emph{totally symmetric} polymorphisms of all arities. 

The next part of our paper, Section~\ref{sect:classification}, is devoted to the proof that all sub-exponential structures $\Gamma$ with totally symmetric polymorphisms of all arities admit such an efficient sampling algorithm. Here, our proof is based on a \emph{classification} of those structures $\Gamma$. 
We would like to remark that the general algorithmic technique is applicable also for many structures with totally symmetric
polymorphisms of all arities that are \emph{not} sub-exponential, and this will be illustrated by some examples in Section~\ref{sect:beyond}. In fact, we make the conjecture that when $\Gamma$ is an $\omega$-categorical structure with a semi-lattice polymorphism,
then $\Csp(\Gamma)$ is in P. However, unlike the case of sub-exponential structures, we cannot provide a classification result like the one in Section~\ref{sect:classification} for this more general case, and so this remains an interesting open question.

\section{Algorithm}
\label{sect:alg}

One of the basic building blocks of our algorithm will be
the arc-consistency procedure.
We start this section by recalling, 
in the case of finite relational structures,
the connection between the applicability of this procedure,
homomorphisms from the \emph{set structure},
and the existence of \emph{totally symmetric polymorphisms}.

Let $B$ be a finite structure with a finite relational signature, and
let $A$ be an instance of $\Csp(B)$.
The \emph{arc-consistency procedure} (AC) applied to the problem $(A, B)$ 
works by reducing a set of possible images for each variable in $A$.
If such a set becomes empty during the procedure, it follows that there
can be no homomorphism,
so AC rejects. Otherwise, AC accepts.
An algorithm for AC is shown in Figure~\ref{fig:alg-ac}.
We say that arc-consistency \emph{solves} the problem $\Csp(B)$ if,
for every instance $A$, the procedure accepts \emph{if and only if}
$A \rightarrow B$.
Arc-consistency is sometimes called \emph{hyperarc-consistency} when applied
to structures with relations of arity greater than two.
It can be implemented to run in time that is polynomial in 
$|A|+|B|$, take e.g.\ AC-3~\cite{DechterBook}.

The \emph{set structure} of $B$, denoted by ${\mathcal P}(B)$, 
has as vertices all non-empty subsets of the domain of $B$.
For every $k$-ary relation $R^B$, we have $(U_1, \dots, U_k) \in R^{{\mathcal P}(B)}$
iff for every $i$ and $u_i \in U_i$, there exists $u_1 \in U_1, \dots, u_{i-1} \in U_{i-1}, u_{i+1} \in U_{i+1}, \dots, u_k \in U_k$ such that
$(u_1,\dots,u_k) \in R^B$.

\begin{figure}[ht]
\begin{center}
\small
\fbox{
\begin{tabular}{l}
AC($A,B$) \\
{\rm // Input: Finite relational structures $A$ and $B$ over the same signature} \\
{\rm // Accepts iff $A \rightarrow {\mathcal P}(B)$} \\
\hspace{0cm}for all $x \in D(A)$ do \\
\hspace{.5cm}$h(x) := D(B)$ \\
\hspace{0cm}repeat \\
\hspace{.5cm}for all $R^A \in A$ do \\
\hspace{1cm}for all $(x_1,\dots,x_k) \in R^A$ do \\
\hspace{1.5cm}for all $i = 1, \dots, k$ do \\
\hspace{2cm}$h(x_i) := \pi_i (R^B \cap h(x_1) \times \dots \times h(x_k))$ \\
\hspace{0cm}until $h(x)$ does not change for any $x \in D(A)$ \\
\hspace{0cm}if $h(x) = \emptyset$ for some $x \in D(A)$ then reject \\
\hspace{0cm}else accept \\
\end{tabular}}
\end{center}
\caption{Algorithm for AC($A,B$), closely following the pseudocode given in~\cite{ACandFriends}.}
\label{fig:alg-ac}
\end{figure}

A $k$-ary function $f: D^k \rightarrow D$ is called \emph{totally symmetric}
if for all $x_1, \dots, x_k, y_1, \dots, y_k \in D$ we have
\[
f(x_1,\dots,x_k) = f(y_1,\dots,y_k) \text{ whenever } \{x_1,\dots,x_k\} = \{y_1,\dots,y_k\}.
\]
We say that a structure $B$ has \emph{totally symmetric polymorphisms of all arities} if, for each $k \geq 1$, there is a $k$-ary polymorphism of $\Gamma$ that is totally symmetric.

The following is well-known, cf.~\cite{DalmauPearson,FederVardi}.

\begin{theorem}
\label{thm:equiv}
  Let $B$ be a finite structure with a finite relational signature.
  The following are equivalent.
  \begin{enumerate}
  \item
    The arc-consistency procedure solves $\Csp(B)$.
  \item
    There is a homomorphism ${\mathcal P}(B) \rightarrow B$.
  \item
    The structure $B$ has totally symmetric polymorphisms of all arities.
  \end{enumerate}
\end{theorem}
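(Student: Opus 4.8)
The plan is to establish the two equivalences $(1)\Leftrightarrow(2)$ and $(2)\Leftrightarrow(3)$, the first resting on a direct analysis of the procedure in Figure~\ref{fig:alg-ac} and the second on an explicit translation between homomorphisms ${\mathcal P}(B)\to B$ and totally symmetric polymorphisms. First I would justify the claim in the comment of Figure~\ref{fig:alg-ac}, namely that on input $(A,B)$ the procedure accepts if and only if $A\to{\mathcal P}(B)$. Upon termination the family $x\mapsto h(x)$ satisfies $h(x_i)=\pi_i(R^B\cap h(x_1)\times\dots\times h(x_k))$ for every constraint $(x_1,\dots,x_k)\in R^A$, which is exactly the condition $(h(x_1),\dots,h(x_k))\in R^{{\mathcal P}(B)}$; so if no $h(x)$ is empty then $x\mapsto h(x)$ is a homomorphism $A\to{\mathcal P}(B)$. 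Conversely, an induction on the number of update steps shows that any homomorphism $g\colon A\to{\mathcal P}(B)$ satisfies $g(x)\subseteq h(x)$ throughout the run, since by definition of $R^{{\mathcal P}(B)}$ no update can delete an element of $g(x_i)$; hence the procedure accepts. Granting this, $(1)\Leftrightarrow(2)$ follows quickly: the map $b\mapsto\{b\}$ is a homomorphism $B\to{\mathcal P}(B)$, so $A\to B$ always implies acceptance; therefore if ${\mathcal P}(B)\to B$ then composing turns acceptance into $A\to B$, giving $(2)\Rightarrow(1)$, while running the procedure on the instance $A:={\mathcal P}(B)$ (its own domain as variable set) produces acceptance because of the identity homomorphism, so under $(1)$ this forces ${\mathcal P}(B)\to B$, giving $(2)$.

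For $(2)\Rightarrow(3)$, given a homomorphism $e\colon{\mathcal P}(B)\to B$, define for each $n$ the operation $f_n(x_1,\dots,x_n):=e(\{x_1,\dots,x_n\})$. It is totally symmetric by construction, and it is a polymorphism: for $t_1,\dots,t_n\in R^B$ the ``column sets'' $U_j:=\{t_1[j],\dots,t_n[j]\}$ satisfy $(U_1,\dots,U_k)\in R^{{\mathcal P}(B)}$, because each $u\in U_j$ is some $t_m[j]$ and $t_m$ witnesses the projection condition; applying $e$ then yields $(f_n(t_1[1],\dots,t_n[1]),\dots,f_n(t_1[k],\dots,t_n[k]))\in R^B$. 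For the converse $(3)\Rightarrow(2)$, fix an integer $N$ exceeding the product of $|D(B)|$ with the maximum arity of a relation in $\tau$, and let $f$ be a totally symmetric polymorphism of $B$ of arity $N$. For a nonempty $U=\{u_1,\dots,u_\ell\}\subseteq D(B)$ put $e(U):=f(u_1,\dots,u_1,u_2,\dots,u_\ell)$, padding with $N-\ell\ge 0$ additional copies of $u_1$; by total symmetry this depends only on the set $U$. To see $e\colon{\mathcal P}(B)\to B$ is a homomorphism, take a $k$-ary $R$ and $(U_1,\dots,U_k)\in R^{{\mathcal P}(B)}$; for every $i\le k$ and $u\in U_i$ choose a witness $t^{(i,u)}\in R^B$ with $t^{(i,u)}[i]=u$ and $t^{(i,u)}[j]\in U_j$ for all $j$, arrange these $\sum_i|U_i|\le N$ tuples into a list padded with repetitions up to length $N$, and apply $f$ coordinatewise. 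The result lies in $R^B$ since $f$ is a polymorphism, and since the $j$-th column of the list has underlying set exactly $U_j$ (it contains all of $U_j$ from the block $i=j$ and lies inside $U_j$ everywhere), total symmetry identifies its $j$-th entry with $e(U_j)$; hence $(e(U_1),\dots,e(U_k))\in R^B$.

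The routine parts are the analysis of the procedure in the first step and the polymorphism check in $(2)\Rightarrow(3)$. The only point needing genuine care — and the only place where having totally symmetric polymorphisms of \emph{all} arities, rather than a single one, is used — is the arity bookkeeping in $(3)\Rightarrow(2)$: one must pick the arity $N$ large enough to accommodate the total number $\sum_i|U_i|$ of witnessing tuples being combined over all relations of $\tau$, and then exploit that a totally symmetric operation of that arity ``absorbs'' the padding, so that the definition of $e$ by padding and its verification by padded stacking are consistent.
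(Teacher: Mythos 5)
Your proof is correct. Note that the paper does not prove this statement at all: it is stated as well-known and attributed to the cited works of Dalmau--Pearson and Feder--Vardi, so there is no in-paper argument to compare against. Your argument is the standard one, and it in fact mirrors the ingredients the paper deploys elsewhere: your $(3)\Rightarrow(2)$ padding construction with an arity exceeding $|D(B)|$ times the maximum arity is exactly the device used in the proof of Lemma~\ref{lem:totsymtohomo} (there with $m=k|S|$), and your observation that the arc-consistency fixpoint $h$ with nonempty values is itself a homomorphism $A\rightarrow{\mathcal P}(B)$, while any homomorphism $A\rightarrow{\mathcal P}(B)$ survives every update step, is the same analysis carried out in the proof of Theorem~\ref{thm:correctness}. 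The remaining steps (the singleton embedding $b\mapsto\{b\}$, running AC on the instance ${\mathcal P}(B)$ itself for $(1)\Rightarrow(2)$, and reading off polymorphisms from column sets for $(2)\Rightarrow(3)$) are all sound, so your write-up supplies a complete self-contained proof of the cited equivalence.
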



It is clear that when $\Gamma$ has a semi-lattice operation $f$, 
then it also has a totally symmetric polymorphism $f_n$ of arity $n$, 
for each $n \geq 2$:
\[
f_n(x_1,\dots,x_n) := f(x_1, f(x_2, \dots, f(x_{n-1},x_n) \dots )).
\]
From Theorem~\ref{thm:equiv}, it thus follows that arc-consistency solves $\Csp(B)$ whenever $B$ is a finite relational structure with a semi-lattice polymorphism.
In our arguments, we only need the weaker condition on 
totally symmetric polymorphisms of all arities;
this gives us a stronger result.

The other component of our algorithm will be a procedure to
efficiently ``sample'' appropriate finite substructures of an
infinite structure $\Gamma$.
Formally, we make the following definition.

\begin{definition}
  Let $\Gamma$ be a structure over a finite relational signature.
  We say that an algorithm is a \emph{sampling algorithm for $\Gamma$} if,
  given a positive integer $n$, 
  it computes a finite structure $B$ that is isomorphic to a substructure of $\Gamma$ such that $A \rightarrow B$ if and only if $A \rightarrow \Gamma$,
  for every instance $A$ with $|A| = n$.
  A sampling algorithm is called \emph{efficient} if its running time is
  bounded by a polynomial in $n$.
\end{definition}

We are now ready to give an outline of our algorithm.
Let $\Gamma$ be a sub-exponential structure over a finite relational
signature, and assume that $\Gamma$ has totally symmetric polymorphisms
of all arities.

\begin{figure}[ht]
\begin{center}
\small
\fbox{
\begin{tabular}{l}
CSP($\Gamma$) \\
{\rm // Input: A finite relational structure $A$ over the same signature as $\Gamma$} \\
{\rm // Accepts iff $A \rightarrow \Gamma$} \\
\hspace{0cm}$B$ := Sample-$\Gamma$($|A|$). \\
\hspace{0cm}if AC$(A,B)$ rejects then reject \\
\hspace{0cm}else accept \\
\end{tabular}}
\end{center}
\caption{Algorithm for $\Csp(\Gamma)$, with $\Gamma$ being a sub-exponential structure with totally symmetric polymorphisms of all arities. `Sample-$\Gamma$' is a sampling algorithm for $\Gamma$.}
\label{fig:alg-csp}
\end{figure}

The main idea of our algorithm is to reduce $\Csp(\Gamma)$ to an appropriate 
\emph{uniform} finite domain CSP.
That is, when given an instance $A$ of $\Csp(\Gamma)$, 
we reduce to the following problem:
decide whether $A$ maps homomorphically to $B$, 
where $B$ is a finite substructure of $\Gamma$ returned by
a sampling algorithm for $\Gamma$ on input $|A|$, 
and $B$ is considered as part of the input.
The algorithm is given in Figure~\ref{fig:alg-csp}.
Note that we rely on arc-consistency for deciding whether
$A$ maps homomorphically to $B$.
Hence, for this approach to work,
we need to establish the following.




\begin{enumerate}
\item There should be an efficient sampling algorithm which samples some $B$ from $\Gamma$.
\item The arc-consistency procedure applied to $(A,B)$ should accept if and only if $A \rightarrow B$.
\end{enumerate}
The first condition implies that the size of $B$ is polynomial in the size
of $A$, and since
AC can be implemented to run in time that
is polynomial in $|A|+|B|$,
it follows that our algorithm will be polynomial in $|A|$.
The second condition ensures that the algorithm gives the correct answer
for every instance $A$ of $\Csp(\Gamma)$.

It will be the purpose of Section~\ref{sect:classification} to prove
that an efficient sampling algorithm for $\Gamma$ exists.
We state this result as follows.

\begin{theorem}
\label{thm:alg-cond}
  Let $\Gamma$ be a sub-exponential structure with a finite relational signature and totally symmetric polymorphisms of all arities.
  Then there is an efficient sampling algorithm for $\Gamma$.
\end{theorem}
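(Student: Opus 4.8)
The plan is to build the sampling algorithm around a structural understanding of what a sub-exponential structure with totally symmetric polymorphisms of all arities looks like, and to exploit the totally symmetric polymorphisms to "collapse" large finite substructures into small ones that are CSP-equivalent for instances of bounded size. The key observation that makes sampling possible is the following: if $f$ is a totally symmetric polymorphism of $\Gamma$ of arity $k$ and $S \subseteq D(\Gamma)$ is any finite subset of size at most $k$, then applying $f$ to any enumeration of $S$ produces a single element $g(S) := f(s_1,\dots,s_{|S|},s_1,\dots,s_1)$ (padding to arity $k$) that is "dominated" by $S$ in the sense that the substructure generated by the images $\{g(S) : S \in {\mathcal S}\}$, for a family ${\mathcal S}$ of subsets, receives a homomorphism from anything mapping into the substructure on $\bigcup {\mathcal S}$. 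This is exactly the mechanism behind Theorem~\ref{thm:equiv}(2)--(3) lifted to the infinite setting: the map $S \mapsto g(S)$ is a homomorphism from (a piece of) ${\mathcal P}(\Gamma)$ into $\Gamma$.

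First I would fix $n$ and consider the finite substructures of $\Gamma$ of size at most some bound $N(n)$ to be determined. The goal is to find one such substructure $B$, computable in time polynomial in $n$, with $A \to B \iff A \to \Gamma$ for all $|A| = n$. For the forward-equivalence direction that is easy ($B$ is a substructure of $\Gamma$, so $A \to B$ implies $A \to \Gamma$); the content is the converse: if $A \to \Gamma$ then $A \to B$. Since $|A| = n$, a homomorphism $A \to \Gamma$ has image of size at most $n$, so it suffices that $B$ "contains, up to homomorphism, every $n$-element substructure of $\Gamma$." Here is where sub-exponential growth enters: the number of orbits of $n$-subsets is sub-exponential in $n$, so — invoking the classification promised for Section~\ref{sect:classification}, i.e.\ the detailed description of sub-exponential structures admitting totally symmetric polymorphisms — there is a polynomial-size "representative" set $R_n \subseteq D(\Gamma)$ hitting at least one $n$-subset from each relevant orbit. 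Take $B$ to be generated by closing a polynomial-size seed under the totally symmetric operation $g$ of the appropriate arity; the totally symmetric property guarantees that the closure stabilizes quickly (it depends only on sets, not tuples, so it is idempotent-like and the number of new elements is controlled by the number of orbits, hence polynomial).

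The key steps, in order, are: (i) make precise the "set-domination" lemma — $S \mapsto g(S)$ is a homomorphism from the finite part of ${\mathcal P}(\Gamma)$ on subsets of size $\le k$ into $\Gamma$ — and use it to show that if $A$ (of size $n$) maps into the substructure on any set $U$, then it maps into the substructure on $g[{\mathcal P}_{\le n}(U)]$; (ii) invoke the structural classification of Section~\ref{sect:classification} to get, for each $n$, a polynomially-sized subset $U_n$ of $\Gamma$ such that \emph{every} $n$-element substructure of $\Gamma$ embeds into the substructure on $U_n$ (this is the step where sub-exponential growth is essential and where $\omega$-categoricity alone would not suffice); (iii) set $B := $ substructure of $\Gamma$ on $g_k[{\mathcal P}_{\le n}(U_n)]$ for suitable $k$, check that $|B|$ is polynomial in $n$ (again using that $g$ is totally symmetric, so distinct images correspond to distinct orbits of $\le n$-subsets, of which there are polynomially many), and that it is computable in polynomial time provided $U_n$ and the action of $g$ are computable — which the classification provides; (iv) verify $A \to \Gamma \Rightarrow A \to B$ by combining (i) and (ii), and $A \to B \Rightarrow A \to \Gamma$ trivially.

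The main obstacle is step (ii): producing the polynomial-size universal subset $U_n$. Without the classification this is false in general (the $\Gamma_U$ examples in the introduction show that even $\omega$-categoricity fails to give it), so the real work is the case analysis of Section~\ref{sect:classification}, showing that a sub-exponential structure with totally symmetric polymorphisms of all arities is, up to the relevant notion of equivalence, built from finitely many "coordinates" each behaving like a finite structure or like $({\mathbb Q};<)$-type orders, for which an explicit polynomial-size $U_n$ can be written down. I expect the remaining steps to be routine manipulations with the totally symmetric operations once that classification is in hand.
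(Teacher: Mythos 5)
There is a genuine gap: your proposal defers exactly the part of the argument that constitutes the proof. Theorem~\ref{thm:alg-cond} \emph{is} the statement that Section~\ref{sect:classification} establishes via the classification, so you cannot ``invoke the classification promised for Section~\ref{sect:classification}'' as a black box to produce the polynomial-size universal set $U_n$ in your step (ii). The actual content is: the reduction to model-complete cores (Proposition~\ref{prop:orbit-non-growth} and the transfer of totally symmetric polymorphisms and of samplers to and from the core), the reduction to the transitive constituents together with the gluing result (Theorem~\ref{thm:no-interaction}, proved by an encoding argument using Lemma~\ref{lem:convexorbits}), the proof that the transitive case is primitive (Proposition~\ref{prop:transisprim}, which is where the totally symmetric polymorphisms are really used, via the alternating-closed-walk Lemmas~\ref{lem:aclwalk} and~\ref{lem:k1}), and finally Macpherson's and Cameron's theorems to conclude interdefinability with $({\mathbb Q};<)$, for which the sampler is trivial: take $[n]\subseteq{\mathbb Q}$ and evaluate quantifier-free definitions (Theorem~\ref{thm:primitive}). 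Your sketch contains none of these steps, and the remark that the rest is ``routine manipulations once that classification is in hand'' inverts the situation: the classification is the proof, and the sampler it yields is not a closure construction.

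Moreover, the concrete construction you do propose does not work as stated. Setting $B$ to be the substructure on $g\bigl[{\mathcal P}_{\le n}(U_n)\bigr]$ gives no polynomial size bound: the number of subsets of a polynomial-size $U_n$ of size at most $n$ is exponential in $n$; sub-exponential growth does not mean polynomially many orbits of $n$-subsets (e.g.\ growth $2^{\sqrt n}$ is sub-exponential); and distinct subsets in the same orbit need not have $g$-images in the same orbit, since a totally symmetric polymorphism need not commute with automorphisms, so ``distinct images correspond to distinct orbits'' is unjustified. The closure is also superfluous: if you really had a polynomial-size $U_n$ such that every $n$-element substructure of $\Gamma$ embeds into $\Gamma[U_n]$, then $B:=\Gamma[U_n]$ is already a sampler, with no use of $g$. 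In the paper the totally symmetric polymorphisms never enter the construction of $B$; they are used for the correctness of arc-consistency (Lemma~\ref{lem:totsymtohomo}, Theorem~\ref{thm:correctness}) and inside the classification to force primitivity and to rule out the betweenness-type reducts of $({\mathbb Q};<)$.
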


We next prove one part of a generalization of the equivalence between
the second and third item of Theorem~\ref{thm:equiv} to infinite domains. 
This result has a converse for all $\omega$-categorical structures,
cf.\ Section~\ref{sect:disc}.

\begin{lemma}
\label{lem:totsymtohomo}
  Let $\Gamma$ be a structure over a finite relational signature.
  If $\Gamma$ has totally symmetric polymorphisms of all arities,
  then
  ${\mathcal P}(S) \rightarrow \Gamma$ for all finite substructures
  $S \subseteq \Gamma$.
\end{lemma}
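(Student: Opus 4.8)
The plan is to build the homomorphism $h \colon {\mathcal P}(S) \to \Gamma$ explicitly out of the totally symmetric polymorphisms. The naive attempt is to send a non-empty subset $U = \{u_1,\dots,u_r\} \subseteq D(S)$ to $f_r(u_1,\dots,u_r)$, where $f_r$ is the totally symmetric polymorphism of arity $r$; total symmetry makes this well-defined. The difficulty is that totally symmetric polymorphisms of different arities need not be compatible with one another, so in the verification one is forced to compare values of $f_r$ for varying $r$. I would avoid this by fixing a \emph{single}, sufficiently large arity. Let $m = |D(S)|$ and let $w$ be the maximum arity of a relation symbol of the signature (finite, since the signature is finite), and set $n := wm$. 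Working with the totally symmetric polymorphism $f$ of arity $n$, define, for a non-empty $U = \{u_1,\dots,u_r\} \subseteq D(S)$ (so $r \le m \le n$), the value $h(U) := f(u_1,\dots,u_r,u_r,\dots,u_r)$, padding the argument list to length $n$ with copies of $u_r$. Total symmetry of $f$ guarantees that this depends only on the set $U$, so $h$ is well-defined, and $h(U) \in D(\Gamma)$ because $f$ maps into $D(\Gamma)$.

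The next step is to check that $h$ preserves every relation. Fix a relation symbol $R$, say of arity $k \le w$, and a tuple $(U_1,\dots,U_k) \in R^{{\mathcal P}(S)}$. Unwinding the definition of the set structure, for every $i \in \{1,\dots,k\}$ and every $u \in U_i$ there is a tuple $t^{(i,u)} \in R^S$ whose $i$-th entry is $u$ and all of whose entries lie in the corresponding sets $U_j$. Let $T$ be the set of all these tuples. Then $|T| \le \sum_i |U_i| \le km \le n$, and the key combinatorial observation is that for each coordinate $j$ one has $\{\, t_j : t \in T \,\} = U_j$: the inclusion $\subseteq$ holds because every $t^{(i,u)}$ has its $j$-th entry in $U_j$, and $\supseteq$ is witnessed, for a given $v \in U_j$, by the tuple $t^{(j,v)}$.

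Finally I would apply the polymorphism. Enumerate $T$ as $t^{(1)},\dots,t^{(p)}$ with $p \le n$, pad this list to length $n$ by repeating $t^{(1)}$, and apply $f$ coordinatewise to these $n$ tuples, all of which lie in $R^S \subseteq R^\Gamma$. Since $f$ is a polymorphism of $\Gamma$, the resulting tuple lies in $R^\Gamma$. Moreover, for each coordinate $j$ the $j$-th entries of the $n$ padded tuples form exactly the set $U_j$ (padding only repeats an entry already present), so by total symmetry the $j$-th component of the result equals $f$ applied to some length-$n$ listing of $U_j$, which is precisely $h(U_j)$. Hence $(h(U_1),\dots,h(U_k)) \in R^\Gamma$, completing the verification.

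The main obstacle is exactly the bookkeeping around using one fixed polymorphism throughout: its arity $n$ must be chosen large enough to serve two purposes simultaneously — to accommodate every subset of $D(S)$ as a padded argument tuple, and to accommodate, for every relation, the full collection of witnessing tuples supplied by the definition of ${\mathcal P}(S)$. The choice $n = wm$ does both, and once it is fixed the rest is a routine combination of total symmetry with the defining property of the set structure. (If the signature contains no relation symbols, or if $S$ is empty, the statement is trivial, as every map is then a homomorphism.)
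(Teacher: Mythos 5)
Your proof is correct and follows essentially the same route as the paper's: fix a single totally symmetric polymorphism of arity (max arity)$\cdot|S|$, define the map on a set by padding an enumeration of its elements, and verify preservation of each relation by collecting the witnessing tuples supplied by the definition of ${\mathcal P}(S)$, padding the list of tuples, and applying total symmetry coordinatewise. The only differences are cosmetic (e.g.\ padding with the first rather than the last witnessing tuple).
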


\begin{proof}
  Let $S$ be a finite substructure of $\Gamma$, and let $f$ be an $m$-ary
  totally symmetric polymorphism of $\Gamma$, where $m = k|S|$
  and $k$ is the maximum arity of any relation in $\Gamma$.
  Let $f' : {\mathcal P}(S) \rightarrow \Gamma$ be the function defined on a non-empty set
  $X = \{x_1, \dots, x_i\} \subseteq S$ by $f'(X) = f(x_1,\dots,x_i,x_i,\dots,x_i)$, where the list of arguments of $f$ is padded to length $m$ by elements already occurring in $X$.
  We claim that $f'$ is a homomorphism ${\mathcal P}(S) \rightarrow \Gamma$.
  We must show that $(f'(U_1),\dots,f'(U_k)) \in R^{\Gamma}$ for an arbitrary relation $R$, and tuples $(U_1,\dots,U_k) \in R^{{\mathcal P}(S)}$.
  For each $1 \leq i \leq k$, let $T_i \subseteq R^S \cap (U_1 \times \dots \times U_k)$ be a set of $k$-tuples such that $\{ t[i] \mid t \in T_i \} = U_i$.
  Note that $\{ t[i] \mid t \in T_j, 1 \leq j \leq k \} = U_i$,
  and that we can choose $T_i$ to have size $l(i) := |U_i| \leq |S|$,
  for each $i$.
  Let $u_{i1}, \dots, u_{il(i)}$ be an enumeration of $U_i$ and $t^{i}_{1},\dots,t^{i}_{l(i)}$ be an enumeration of $T_i$.
  We then have
  \[
  f'(U_i) = f'(\{u_{i 1}, \dots u_{i l(i)}\}) = f(t^{1}_1[i],\dots,t^{1}_{l(1)}[i],\dots,t^{k}_1[i],\dots,t^{k}_{l(k)}[i],t^{k}_{l(k)}[i], \dots, t^{k}_{l(k)}[i]),
  \]
  for all $1 \leq i \leq k$.
  Since $f$ is a polymorphism of $\Gamma$, and $t^{i}_j \in R^S \subseteq R^{\Gamma}$ for all $i$ and $j$, it follows that $(f'(U_1),\dots,f'(U_k)) \in R^{\Gamma}$.
\end{proof}

The correctness and efficiency of our algorithm now follows from the previous lemma 
in conjunction with the existence of an efficient sampling algorithm for $\Gamma$. 

\begin{theorem}
\label{thm:correctness}
  Let $\Gamma$ be a structure over a finite relational signature with
  totally symmetric polymorphisms of all arities.
  If there exists an efficient sampling algorithm for $\Gamma$,
  then the algorithm in Figure~\ref{fig:alg-csp} correctly solves $\Csp(\Gamma)$ in polynomial time.
\end{theorem}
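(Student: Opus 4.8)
The plan is to verify the two conditions spelled out just before the statement: that the sampled structure $B$ has polynomially bounded size and is computed in polynomial time, and that the arc-consistency procedure applied to $(A,B)$ accepts precisely when $A \rightarrow B$. The first of these is immediate from the hypothesis: by definition, an efficient sampling algorithm runs in time polynomial in $n = |A|$, so in particular the structure $B$ it outputs has $|B|$ bounded by a polynomial in $|A|$. Since AC can be implemented to run in time polynomial in $|A| + |B|$, the whole algorithm in Figure~\ref{fig:alg-csp} runs in time polynomial in $|A|$. So the bulk of the argument is correctness.

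\textbf{Correctness.} I would argue the chain of equivalences
\[
A \rightarrow \Gamma \quad\Longleftrightarrow\quad A \rightarrow B \quad\Longleftrightarrow\quad A \rightarrow {\mathcal P}(B) \quad\Longleftrightarrow\quad \text{AC}(A,B) \text{ accepts.}
\]
The first equivalence is exactly the defining property of a sampling algorithm on input $n = |A|$: we have $A \rightarrow B$ iff $A \rightarrow \Gamma$. The last equivalence is the specification of the AC procedure recorded in Figure~\ref{fig:alg-ac} (and in Theorem~\ref{thm:equiv}): AC$(A,B)$ accepts iff $A \rightarrow {\mathcal P}(B)$. So it remains to establish the middle equivalence, $A \rightarrow B \Leftrightarrow A \rightarrow {\mathcal P}(B)$. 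The direction $A \rightarrow B \Rightarrow A \rightarrow {\mathcal P}(B)$ is routine: if $h : A \rightarrow B$ is a homomorphism, then $x \mapsto \{h(x)\}$ is a homomorphism $A \rightarrow {\mathcal P}(B)$, since singletons satisfy the projection condition defining $R^{{\mathcal P}(B)}$ whenever $R^B$ holds. For the converse, suppose $g : A \rightarrow {\mathcal P}(B)$ is a homomorphism. The key point is that $B$ is (isomorphic to) a finite substructure $S \subseteq \Gamma$, and $\Gamma$ has totally symmetric polymorphisms of all arities; hence by Lemma~\ref{lem:totsymtohomo} there is a homomorphism $f' : {\mathcal P}(S) \rightarrow \Gamma$, i.e.\ ${\mathcal P}(B) \rightarrow \Gamma$. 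Composing gives $A \rightarrow {\mathcal P}(B) \rightarrow \Gamma$, so $A \rightarrow \Gamma$, and then by the sampling property again $A \rightarrow B$. This closes the middle equivalence.

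\textbf{Where the work sits.} There is essentially no hard step in this proof once Lemma~\ref{lem:totsymtohomo} and the specification of AC are in hand; the theorem is really an assembly of already-established pieces. The one place to be slightly careful is the composition ${\mathcal P}(B) \rightarrow \Gamma$: Lemma~\ref{lem:totsymtohomo} produces a homomorphism into $\Gamma$, not into $B$, so we genuinely need the sampling property a second time to land back in $B$ — it is this round trip through $\Gamma$ that makes the argument go. One should also note that identifying $B$ with a substructure $S$ of $\Gamma$ is legitimate because the definition of a sampling algorithm requires the output to be isomorphic to a substructure of $\Gamma$, and both ${\mathcal P}(-)$ and the homomorphism relation are isomorphism-invariant. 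The genuinely substantive content — the existence of an efficient sampling algorithm for every sub-exponential $\Gamma$ — is deferred to Section~\ref{sect:classification} and is not needed here; combined with Theorem~\ref{thm:alg-cond} it yields the main result, Theorem~\ref{thm:main}.
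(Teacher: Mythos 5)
Your proposal is correct and follows essentially the same route as the paper: polynomial running time from the efficiency of the sampler plus the polynomial bound on AC, and correctness via the round trip $A \rightarrow {\mathcal P}(B) \rightarrow \Gamma$ using Lemma~\ref{lem:totsymtohomo} together with the defining property of the sampling algorithm. The only cosmetic difference is that you package the argument as a chain of equivalences citing the specification ``AC$(A,B)$ accepts iff $A \rightarrow {\mathcal P}(B)$'' from Figure~\ref{fig:alg-ac}, whereas the paper re-derives the accepting direction from the final sets $h(x)$ and uses soundness of AC for rejection; the substance is identical.
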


\begin{proof}
  Let $A$ be the input structure and
  let $B$ be the structure returned by the sampling algorithm for $\Gamma$
  on input $|A|$.
  The sampling algorithm runs in polynomial time in $|A|$,
  so the size of $B$ will be polynomial in $|A|$.
  Since AC($A,B$) can be implemented to run in time polynomial
  in $|A|+|B|$, it follows that the entire algorithm runs in polynomial time.

  To show correctness, note that if AC($A,B$) rejects, 
  then $A \not\rightarrow B$ which is equivalent to $A \not\rightarrow \Gamma$ since 
  $B$ was produced by the sampling algorithm for $\Gamma$ on input $|A|$.
  We may therefore safely reject.
  Otherwise, AC$(A,B)$ terminates with a non-empty list $h(x) \subseteq B$ for each $x \in A$.
  Furthermore, for each $k$-ary relation $R^A$, tuple $(x_1, \dots, x_k) \in R^A$, index $i$, and element $d \in h(x_i)$, we know that $R^B \cap h(x_1) \times \dots \times h(x_k)$ contains a tuple $(d_1, \dots, d_k)$ with $d_i = d$.
  In other words $(h(x_1),\dots,h(x_k)) \in R^{{\mathcal P}(B)}$, so the function $h : A \rightarrow {\mathcal P}(B)$ is a homomorphism.
  By assumption, $\Gamma$ has totally symmetric polymorphisms of all arities, 
  so Lemma~\ref{lem:totsymtohomo} implies the existence of a homomorphism 
  $g : {\mathcal P}(B) \rightarrow \Gamma$.
  In conclusion, we have a homomorphism $g \circ h : A \rightarrow \Gamma$.
\end{proof}

As a direct corollary of Theorem~\ref{thm:alg-cond} and Theorem~\ref{thm:correctness}, we get the following result, which also implies our main result, Theorem~\ref{thm:main}.

\begin{theorem}
\label{thm:yyy}
  Let $\Gamma$ be a sub-exponential structure with a finite relational signature.
  If $\Gamma$ has totally symmetric polymorphisms of all arities, 
  then $\Csp(\Gamma)$ is solvable in polynomial time.
\end{theorem}

\section{Classification}
\label{sect:classification}
The purpose of this section is to prove Theorem~\ref{thm:alg-cond}, showing that every sub-exponential structure has an efficient sampling algorithm. 
Our approach is based on a \emph{classification} of sub-exponential structures with totally symmetric polymorphisms of all arities. 
The general outline of this classification is as follows. We first present an argument that reduces the classification task to
those sub-exponential structures $\Gamma$ 
that are \emph{model-complete cores} and have totally symmetric polymorphisms of all arities (Section~\ref{sect:mc-cores}).
From there on, the classification follows a decomposition of the automorphism group of $\Gamma$.
The next step is the reduction to those structures $\Gamma$ having a \emph{transitive} automorphism group (Section~\ref{ssect:general}). We then use the fact that the automorphism group of $\Gamma$ has only finitely many congruence
relations to further reduce the classification task to the case that the automorphism group of $\Gamma$ is \emph{primitive}
(Section~\ref{ssect:finest}). 
Combining the central theorem from~\cite{MacphersonOrbits} on primitive
permutation groups with a sub-exponential number of orbits of $n$-subsets
with Cameron's theorem on highly set-transitive permutation groups~\cite{Cameron5} 
we finish the classification in Section~\ref{ssect:primitive}. 

We find it instructive to give a `top-down' presentation of the classification proof, rather than starting from special cases,
and to assemble more general sub-exponential structures from specific ones.
We thus take a
decomposition approach, and show first how to describe the most general case in terms of its components.
This sometimes leads to forward-references of results, but we believe that the reader will be compensated by a more accessible presentation.

\subsection{Preliminaries}
Before we start, we recall a basic fact which will frequently be used in the following arguments, and 
which explains the interaction between permutation group theory and logic for $\omega$-categorical structures.
Let $\Gamma$ be a relational structure.
In this paper, we say that a relation $R$ over the domain of $\Gamma$ is \emph{first-order definable} in $\Gamma$ if there is a first-order formula $\phi({\bar x})$ such that $\phi({\bar a})$ is true in $(\Gamma,{\bar a})$ if and only if ${\bar a} \in R$.
We say that a relational structure $\Gamma'$ is first-order definable in $\Gamma$ if $\Gamma'$ and $\Gamma$ have the same domain and every relation $R$ in $\Gamma'$ is first-order definable in $\Gamma$.
We say that two relational structures are \emph{first-order interdefinable} if one of them is first-order definable in the other, and vice versa.

\begin{theorem}[see e.g.\ Theorem~6.3.5 in~\cite{Hodges}]\label{thm:ryll}
Let $\Gamma$ be $\omega$-categorical. Then a relation $R$ is preserved by all automorphisms of
$\Gamma$ if and only if $R$ has a first-order definition in $\Gamma$. In particular, two structures are first-order interdefinable if and only if they have the same automorphism group. 
\end{theorem}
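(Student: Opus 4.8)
The plan is to derive the statement from the Ryll--Nardzewski theorem together with a standard back-and-forth argument. The easy direction is immediate and does not use $\omega$-categoricity: if $R$ has a first-order definition $\phi(\bar x)$ in $\Gamma$, then for every $\alpha \in \Aut(\Gamma)$ and every tuple $\bar a$ we have $\Gamma \models \phi(\bar a)$ iff $\Gamma \models \phi(\alpha \bar a)$, since automorphisms preserve the truth of all first-order formulas; hence $\alpha$ maps $R$ onto itself, so $R$ is preserved by all automorphisms.

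For the converse, suppose $R$ is $n$-ary and invariant under $\Aut(\Gamma)$, so that $R$ is a union of orbits of $\Aut(\Gamma)$ acting on $D(\Gamma)^n$. The key claim is that two $n$-tuples $\bar a, \bar b$ lie in the same $\Aut(\Gamma)$-orbit if and only if they realize the same complete $n$-type over $\emptyset$ in $\Gamma$. One implication holds trivially (automorphisms preserve types); for the other, one constructs an automorphism of $\Gamma$ sending $\bar a$ to $\bar b$ by a back-and-forth construction along an enumeration of $D(\Gamma)$, extending a partial map one point at a time. The crucial extension step is possible precisely because, by $\omega$-categoricity, every complete type over finitely many parameters is isolated, so a point of $\Gamma$ realizing the required type over the current finite domain exists and can be matched. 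This back-and-forth step is where the main work lies, and it is the heart of the Ryll--Nardzewski circle of equivalences (finitely many orbits on $n$-tuples for all $n$, iff finitely many complete $n$-types for all $n$, iff all types are isolated).

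Granting the claim, $\omega$-categoricity yields only finitely many complete $n$-types $p_1,\dots,p_m$ realized in $\Gamma$, hence only finitely many orbits on $n$-tuples, and each $p_j$ is isolated by a single formula $\varphi_j(\bar x)$ whose realizations in $\Gamma$ are exactly the tuples of type $p_j$. Let $J$ index those orbits contained in $R$; then $R$ is defined in $\Gamma$ by the finite disjunction $\bigvee_{j \in J} \varphi_j(\bar x)$, so $R$ is first-order definable.

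For the final assertion: if $\Gamma'$ is first-order definable in $\Gamma$, then by the easy direction every automorphism of $\Gamma$ preserves all relations of $\Gamma'$, so $\Aut(\Gamma) \subseteq \Aut(\Gamma')$; interdefinability gives the reverse inclusion, hence $\Aut(\Gamma)=\Aut(\Gamma')$. Conversely, suppose $\Gamma$ and $\Gamma'$ have the same (countable) domain and $\Aut(\Gamma)=\Aut(\Gamma')$. Since $\omega$-categoricity depends only on the automorphism group being oligomorphic, $\Gamma'$ is also $\omega$-categorical; each relation of $\Gamma'$ is invariant under $\Aut(\Gamma')=\Aut(\Gamma)$, hence first-order definable in $\Gamma$ by the main part, and symmetrically each relation of $\Gamma$ is first-order definable in $\Gamma'$. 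Thus $\Gamma$ and $\Gamma'$ are first-order interdefinable.
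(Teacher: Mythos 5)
The paper does not give a proof of this statement: it is quoted as a standard consequence of the Ryll--Nardzewski theorem, with a pointer to Theorem~6.3.5 of~\cite{Hodges}. Your argument is precisely that standard proof --- an invariant relation is a union of orbits of $n$-tuples, orbits coincide with the realization sets of the finitely many (isolated) complete $n$-types via the back-and-forth construction (which uses the paper's standing assumption that the domain is countable), so the relation is defined by a finite disjunction of isolating formulas --- and it is correct, including the deduction of the interdefinability claim from the fact that $\omega$-categoricity depends only on the common automorphism group being oligomorphic.
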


This theorem makes possible a translation of terminology between logic and permutation groups.
We illustrate its use with the following, which will be needed later on. 
A \emph{congruence} of a permutation group 
is an equivalence relation that is preserved by all permutations in the group.
A permutation group is called \emph{primitive} if the only congruences are
the equivalence relation with just one
equivalence class, and the equivalence relation where all equivalence classes are of size one; it is called \emph{imprimitive} otherwise.
By Theorem~\ref{thm:ryll}, the congruences
of the automorphism group of an $\omega$-categorical structure $\Gamma$ are precisely the first-order definable equivalence relations of $\Gamma$. 
We will say that $\Gamma$ is \emph{primitive} if its automorphism group is primitive.

A permutation group $G$ on a countable set $X$ is called \emph{closed} if and only if it is the automorphism group of a relational 
structure with domain $X$. The topological explanations for this terminology can be found in~\cite{Gao}, Theorem 2.4.4.
The corresponding topology is called the \emph{topology of pointwise convergence} on $\Sym(X)$,
where $\Sym(X)$ denotes the set of all permutations of $X$. In this topology, the open sets are unions of sets of the form
$\{ \alpha \in \Sym(X) \mid \alpha(\bar x)=\bar y\}$, for $n$-tuples  $\bar x$, $\bar y$  of elements of $X$.
A subset $H$ of $G \subseteq \Sym(X)$ is \emph{dense (in $G$)} if the closure
of $H$ with respect to this topology equals $G$. 

\subsection{Model-Complete Cores}
\label{sect:mc-cores}
An endomorphism of a relational structure $\Gamma$ is a homomorphism from $\Gamma$ to $\Gamma$; we denote the set of all endomorphisms of $\Gamma$ by $\End(\Gamma)$. A relational structure is called a \emph{core} if every endomorphism of $\Gamma$ is an embedding\footnote{An embedding of a relational structure $\Gamma$ into a relational structure $\Delta$ is an isomorphism between $\Gamma$ and an induced substructure of $\Delta$.}.
For a relational structure $\Delta$, a \emph{core of $\Delta$} is a core structure $\Gamma$ that is homomorphically equivalent to
$\Delta$, that is, there is a homomorphism from $\Gamma$ to $\Delta$ and vice versa.
A first-order formula is called \emph{primitive positive} if it is of the form
$$ \exists x_1,\dots, x_n (\psi_1 \wedge \dots \wedge \psi_n),$$
where $\psi_1,\dots,\psi_n$ are atomic formulas. 
The importance of primitive positive definitions in this paper comes from the fact that relations with a primitive positive definition
in a relational structure $\Gamma$ are preserved by the polymorphisms of $\Gamma$.

The motivation of these definitions for constraint satisfaction with finite templates $\Delta$ comes from the following facts.
\begin{itemize}
\item Every finite relational structure $\Delta$ has a core $\Gamma$, and $\Gamma$ is unique up to isomorphism.
\item When $\Gamma$ is a core of $\Delta$, then $\Gamma$ and $\Delta$ have the same CSP. 
\item In a finite core structure $\Gamma$, every orbit of $n$-tuples\footnote{When $(t_1,\dots,t_n)$ is an $n$-tuple of elements of $\Gamma$, then the \emph{orbit of $t$ (in $\Gamma$)} is the set $\{ (\alpha(t_1),\dots,\alpha(t_n)) \; | \; \alpha \in \text{Aut}(\Gamma) \}$.} is primitive positive definable. 
\end{itemize}

These properties have been generalized to 
$\omega$-categorical structures.
To precisely state the result, we need the following concepts. 
A structure $\Gamma$ is \emph{model-complete} if every embedding of $\Gamma$ into itself preserves all first-order formulas.
We later need the following.

\begin{lemma}[Combination of Theorem 18 in~\cite{Cores-journal} and proof of Corollary 7 in~\cite{BodirskyPinskerRandom}]\label{lem:mc-cores}
An $\omega$-categorical relational structure $\Delta$ is a model-complete core if and only if the group of automorphisms of $\Delta$ is dense in the endomorphism monoid of $\Delta$, that is, for every endomorphism $f$ and finite subset $U$ of $D(\Delta)$, there is an automorphism $g$ of $\Delta$ agreeing with $f$ on $U$.
\end{lemma}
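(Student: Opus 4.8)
The statement to prove is Lemma~\ref{lem:mc-cores}, which characterizes when an $\omega$-categorical structure $\Delta$ is a model-complete core in terms of $\Aut(\Delta)$ being dense in $\End(\Delta)$. Since the lemma is explicitly attributed to a combination of results in the cited literature, my plan is to reconstruct the two implications from standard $\omega$-categoricity machinery, principally the theorem of Ryll-Nardzewski (here Theorem~\ref{thm:ryll}) and the fact that in an $\omega$-categorical structure every orbit of $n$-tuples under $\Aut(\Delta)$ is first-order definable.

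Here is the plan. Suppose first that $\Aut(\Delta)$ is dense in $\End(\Delta)$; I must show $\Delta$ is a model-complete core. For the \emph{core} part: let $e$ be an endomorphism; given any finite tuple $\bar a$, density gives an automorphism $g$ agreeing with $e$ on $\bar a$, so $e$ preserves on $\bar a$ every relation and its complement that $g$ preserves. Since $\omega$-categoricity makes every first-order formula equivalent to a finite disjunction of orbit-descriptions, $e$ locally looks like an automorphism, hence is injective and an embedding; so $\Delta$ is a core. For \emph{model-completeness}: I would use the standard fact (a theorem of essentially the same flavor, e.g.\ from \cite{Hodges}) that an $\omega$-categorical structure is model-complete iff every self-embedding is elementary, and reduce this again to the local approximation by automorphisms. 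Concretely, a self-embedding $h$, restricted to a finite tuple, must be approximated by an element of $\Aut(\Delta)$ using density of $\Aut(\Delta)$ in $\End(\Delta)$ together with the observation that embeddings are in particular endomorphisms; then preservation of first-order formulas follows orbit-by-orbit.

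For the converse, suppose $\Delta$ is a model-complete core; I must show that every endomorphism $f$ can be approximated on each finite $U \subseteq D(\Delta)$ by an automorphism. The key tool is that in an $\omega$-categorical core, the orbit of every $n$-tuple $\bar a$ is primitive positive definable — this uses that a core has no proper retract and a compactness/back-and-forth argument showing the orbit equals the set of tuples satisfying the conjunction of all atomic formulas holding of $\bar a$ (the "canonical query"), together with model-completeness to upgrade first-order equivalence to the pp-type. Given $f$ and a finite $U = \{u_1,\dots,u_m\}$, the tuple $(f(u_1),\dots,f(u_m))$ satisfies every pp-formula that $(u_1,\dots,u_m)$ does (endomorphisms preserve pp-formulas), in particular it lies in the pp-definable orbit of $(u_1,\dots,u_m)$; hence there is an automorphism sending $u_i \mapsto f(u_i)$ for all $i$. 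That is exactly the required density.

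The main obstacle, and the step requiring the most care, is establishing that orbits of $n$-tuples are primitive positive definable in an $\omega$-categorical \emph{model-complete core} — this is really the content of Theorem~18 of \cite{Cores-journal} and is where model-completeness (as opposed to merely being a core) is essential: being a core gives that the canonical conjunctive query of $\bar a$ is preserved by all endomorphisms, but one needs model-completeness (or an equivalent back-and-forth saturation argument) to see that this query in fact \emph{isolates} the orbit rather than a larger first-order type. I would either cite Theorem~18 of \cite{Cores-journal} directly for this fact and then run the short argument above, or, if a self-contained treatment is wanted, spell out the back-and-forth: build the approximating automorphism as a union of an increasing chain of finite partial isomorphisms, using at each step that the relevant pp-type is realized and that $\omega$-categoricity guarantees homogeneity of the expansion by finitely many constants. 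Everything else (Ryll-Nardzewski, preservation of pp-formulas by endomorphisms, finite disjunction normal form for first-order formulas) is routine.
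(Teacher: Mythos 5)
The paper itself gives no proof of this lemma (it is imported from the cited references), and your reconstruction is exactly the intended argument: for one direction, pp-definability of orbits of tuples in an $\omega$-categorical model-complete core (Theorem~18 of \cite{Cores-journal}, restated as Theorem~\ref{thm:mc-cores}(3)) plus the fact that endomorphisms preserve primitive positive formulas puts $f(\bar u)$ in the orbit of $\bar u$, giving density; for the other, local agreement with automorphisms makes every endomorphism an injective, relation- and first-order-formula-preserving map, so $\Delta$ is a model-complete core. The only caveat is your parenthetical sketch of why orbits are pp-definable, which is imprecise as stated (the conjunction of atomic formulas holding of $\bar a$ does not isolate its orbit; one needs the full pp-type with existential witnesses, a compactness argument producing a self-homomorphism, then the core property, model-completeness, and Ryll--Nardzewski), but since you primarily cite Theorem~18 for that fact---just as the paper does---your proof stands.
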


\begin{theorem}[Theorem 16 in~\cite{Cores-journal}; see also~\cite{BodHilsMartin-Journal}]
\label{thm:mc-cores} Let $\Delta$ be an $\omega$-categorical relational structure. Then
\begin{enumerate}
\item\label{enum:hascore}
$\Delta$ is homomorphically equivalent to a model-complete core $\Gamma$;
\item\label{enum:unique} 
the structure $\Gamma$ is unique up to isomorphism, and $\omega$-categorical or finite;
\item\label{enum:ppdef} 
in $\Gamma$, every orbit of $n$-tuples is primitive positive definable. 
\end{enumerate}
\end{theorem}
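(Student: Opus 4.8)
The plan is to run everything through the Ryll--Nardzewski characterisation recalled in Theorem~\ref{thm:ryll}: since $\Delta$ is $\omega$-categorical, $\Aut(\Delta)$ is \emph{oligomorphic}, i.e.\ it has only finitely many orbits on $n$-tuples for each $n$. The model-complete core $\Gamma$ will be produced as an endomorphic image of $\Delta$ that is ``collapsed as far as possible'', and all of its good properties will be read off from this minimality. Concretely, for the existence of a core (the first assertion) I would attach to each endomorphic image $\Gamma'$ of $\Delta$ that is realised as an induced substructure of $\Delta$ a well-founded invariant which can only decrease when $\Gamma'$ is collapsed further by an endomorphism --- for instance, suitably truncated data extracted from the orbit counts of $\Aut(\Delta)$ on the tuples hit by $\Gamma'$. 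Since endomorphisms compose, this invariant is monotone along chains of collapses; pick $\Gamma$ realising a minimal value together with an endomorphism $e\colon\Delta\to\Gamma$. Composing $e$ with an inclusion $\Gamma\hookrightarrow\Delta$ shows $\Gamma$ and $\Delta$ are homomorphically equivalent, and minimality forces $\Gamma$ to be a core, since an endomorphism of $\Gamma$ that failed to be an embedding would, suitably pre- and post-composed, produce an endomorphic image of $\Delta$ with strictly smaller invariant. I expect the genuine obstacle to sit exactly here: one must ensure the minimal value is actually \emph{attained}, rather than only approached by an infinite descending chain of endomorphisms of $\Delta$ admitting no common ``limit''. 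This is where oligomorphy is indispensable --- the invariant can strictly drop only finitely often, so a compactness/K\"onig-type argument lets one diagonalise a countable sequence of endomorphisms of $\Delta$ into a single endomorphism whose image is a core.

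For the second assertion, that $\Gamma$ is again $\omega$-categorical (or finite) follows because $\Gamma$ embeds into $\Delta$ and the orbits of $\Aut(\Gamma)$ on $n$-tuples are controlled, through the homomorphisms in both directions, by the finitely many orbits of $\Aut(\Delta)$. Uniqueness is a back-and-forth argument: given two cores $\Gamma_1,\Gamma_2$ homomorphically equivalent to $\Delta$, fix homomorphisms $h\colon\Gamma_1\to\Gamma_2$ and $h'\colon\Gamma_2\to\Gamma_1$; the core property makes $h'h$ and $hh'$ self-embeddings, and $\omega$-categoricity upgrades $h$ to an isomorphism. To see that this unique core is \emph{model-complete}, I would invoke Lemma~\ref{lem:mc-cores} and verify that $\Aut(\Gamma)$ is dense in $\End(\Gamma)$: every endomorphism of a core is an embedding, and a further back-and-forth --- using once more that $\Gamma$ maps homomorphically onto $\Delta$ and back, so that on any finite set a self-embedding of $\Gamma$ carries no more information than an automorphism does --- produces, for each finite $U\subseteq D(\Gamma)$, an automorphism agreeing with the given endomorphism on $U$.

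For the third assertion, fix an orbit $O$ of $n$-tuples of the model-complete core $\Gamma$. Being $\Aut(\Gamma)$-invariant, $O$ is first-order definable in $\Gamma$ by Theorem~\ref{thm:ryll}, so by the $\omega$-categorical Galois connection between relations and polymorphisms~\cite{BodirskyNesetrilJLC} it suffices to show that $O$ is preserved by every polymorphism $g$ of $\Gamma$. Given $\bar a^{(1)},\dots,\bar a^{(k)}\in O$, choose automorphisms $\alpha_j$ with $\bar a^{(j)}=\alpha_j(\bar a^{(1)})$ and $\alpha_1=\mathrm{id}$; then $x\mapsto g(\alpha_1 x,\dots,\alpha_k x)$ is a unary polymorphism, i.e.\ an endomorphism $h$ of $\Gamma$, and $g(\bar a^{(1)},\dots,\bar a^{(k)})=h(\bar a^{(1)})$. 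Since $\Gamma$ is a model-complete core, Lemma~\ref{lem:mc-cores} supplies an automorphism $\beta$ agreeing with $h$ on the finitely many entries of $\bar a^{(1)}$, whence $g(\bar a^{(1)},\dots,\bar a^{(k)})=\beta(\bar a^{(1)})\in O$. This final step is clean once model-completeness is available, and it is also the place where model-completeness is genuinely used, so the real weight of the proof lies in producing a model-complete --- rather than merely arbitrary --- core.
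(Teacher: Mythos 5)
First, a point of comparison: the paper does not prove Theorem~\ref{thm:mc-cores} at all --- it is imported verbatim from Theorem~16 of~\cite{Cores-journal} (see also~\cite{BodHilsMartin-Journal}) --- so your proposal can only be measured against that external proof, and against correctness.

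There is a genuine gap, and it sits exactly where you place the ``real weight'': producing a \emph{model-complete} core and its uniqueness. Your plan builds $\Gamma$ as a minimal endomorphic image of $\Delta$ and then asserts (a) that on finite sets a self-embedding of such a core ``carries no more information than an automorphism'', and (b) that any two cores homomorphically equivalent to $\Delta$ are isomorphic by a back-and-forth. Both assertions are false. Take $\Delta = ({\mathbb Q}\cap(0,1];<)$, a countable dense order with a maximum and no minimum; it is $\omega$-categorical. Every endomorphism is strictly increasing, hence an embedding, so $\Delta$ is already a core and your minimisation over endomorphic images terminates with $\Gamma\cong\Delta$ (every endomorphic image has a maximum, namely the image of $1$, so all images are isomorphic to $\Delta$). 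But $\Delta$ is \emph{not} a model-complete core: the self-embedding sending $1$ to $1/2$ agrees with no automorphism on $\{1\}$, and correspondingly the orbit $\{1\}$ is not primitive positive definable, so item~(\ref{enum:ppdef}) fails for this $\Gamma$. The actual model-complete core of $\Delta$ is $({\mathbb Q};<)$, which is homomorphically equivalent to $\Delta$, is itself a core, yet is not isomorphic to $\Delta$ --- refuting your uniqueness argument for mere cores (the composites $h'h$, $hh'$ being self-embeddings does not make $h$ onto) --- and, fatally for the strategy, $({\mathbb Q};<)$ is not an endomorphic image of $\Delta$ at all. So no well-founded invariant minimised over endomorphic images inside $\Delta$ can reach the model-complete core; in general it must be constructed outside $\Delta$.

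That construction is the substance of the cited result: in~\cite{Cores-journal} (and via model companions in~\cite{BodHilsMartin-Journal}) one builds $\Gamma$ from the existential-positive theory of $\Delta$, saturating existential-positive types rather than minimising realised orbits (note that endomorphisms need not behave monotonically with respect to orbits of $\Aut(\Delta)$, only with respect to existential-positive formulas, so even your invariant's monotonicity is doubtful); $\omega$-categoricity, density of $\Aut(\Gamma)$ in $\End(\Gamma)$, and uniqueness (by a back-and-forth interleaving the two homomorphisms \emph{with} that density property) are consequences of this construction. Your remark that $\omega$-categoricity of $\Gamma$ is ``controlled through the homomorphisms'' is likewise not immediate: the paper's own Proposition~\ref{prop:orbit-non-growth} obtains such control only by using item~(\ref{enum:ppdef}). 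By contrast, your argument for item~(\ref{enum:ppdef}) itself --- reduce to preservation under polymorphisms via the Galois connection of~\cite{BodirskyNesetrilJLC}, convert a polymorphism applied to an orbit into an endomorphism by composing with automorphisms, then invoke Lemma~\ref{lem:mc-cores} --- is correct, but it presupposes precisely the model-completeness that the earlier steps of your proposal do not deliver.
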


For our classification project (and our algorithmic result), it therefore suffices to study the CSPs for model-complete cores of sub-exponential structures. 
Let us first show that the model-complete core of a sub-exponential structure is again sub-exponential. This follows from the following
more general result.

\begin{proposition}\label{prop:orbit-non-growth}
Let $\Delta$ be an $\omega$-categorical relational structure, and let $\Gamma$ be its model-complete core. Then for every $n$,
the number of orbits of $n$-subsets in $\Gamma$ is at most the number of orbits of $n$-subsets in $\Delta$.
\end{proposition}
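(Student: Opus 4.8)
The plan is to exhibit a surjection from orbits of $n$-subsets in $\Delta$ onto orbits of $n$-subsets in $\Gamma$, which immediately yields the inequality. The key tool is Lemma~\ref{lem:mc-cores}, which tells us that, since $\Gamma$ is an $\omega$-categorical model-complete core, $\Aut(\Gamma)$ is dense in $\End(\Gamma)$; combined with the fact that $\Gamma$ and $\Delta$ are homomorphically equivalent, this lets me transport automorphisms back and forth. Let $h_1 : \Delta \to \Gamma$ and $h_2 : \Gamma \to \Delta$ be homomorphisms, so that $h_2 \circ h_1 \in \End(\Delta)$ and $h_1 \circ h_2 \in \End(\Gamma)$; the latter is, after composing with a suitable automorphism of $\Gamma$ on each finite set (again by the density in Lemma~\ref{lem:mc-cores}, or rather by the core property of $\Gamma$ applied to $\omega$-categoricity), well-approximated by automorphisms of $\Gamma$.

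The main step is to show that the map sending the $\Aut(\Delta)$-orbit of an $n$-subset $S \subseteq D(\Delta)$ to the $\Aut(\Gamma)$-orbit of $h_1(S)$ is well-defined and surjective. For well-definedness: if $S' = \alpha(S)$ for some $\alpha \in \Aut(\Delta)$, I need $h_1(S')$ and $h_1(S)$ to lie in the same $\Aut(\Gamma)$-orbit. One subtlety is that $h_1$ need not be injective, so $|h_1(S)|$ could drop below $n$; I will handle this by first arguing (using that $h_2 \circ h_1$ is an endomorphism of $\Delta$, and that one may enlarge $\Delta$'s structure harmlessly, or directly via $\omega$-categoricity and the classification setup) that we may take $h_1$ injective on the finitely many relevant points — more robustly, one observes $h_2 \circ h_1$ restricted to $S$ extends to an automorphism of $\Delta$ by the same density argument applied to $\Delta$'s own core, but since $\Delta$ itself need not be a core this needs care. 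The cleaner route: $h_1 \circ h_2 \in \End(\Gamma)$ and $\Gamma$ is a core, so $h_1 \circ h_2$ is an embedding; hence $h_1$ is injective on the image $h_2(\Gamma) = D(\Gamma)$'s worth of points, i.e.\ $h_1$ restricted to $h_2(D(\Gamma))$ is injective, and for a general $n$-subset $S$ of $\Delta$ we instead start from an $n$-subset $T$ of $\Gamma$ and use $h_2(T)$.

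So I would reverse the direction: define the map on orbits of $n$-subsets of $\Delta$ by $\mathrm{orb}_\Delta(S) \mapsto \mathrm{orb}_\Gamma(\beta(h_1(S)))$ where $\beta \in \Aut(\Gamma)$ is chosen, via Lemma~\ref{lem:mc-cores}, to agree with $h_1 \circ h_2$ on the finite set $h_1(S)$; then check surjectivity by verifying that for any $n$-subset $T$ of $\Gamma$, the orbit of the $n$-subset $h_2(T)$ of $\Delta$ maps to $\mathrm{orb}_\Gamma(T)$, using that $h_1 \circ h_2$ is an embedding of the core $\Gamma$ and hence, on $T$, differs from an automorphism of $\Gamma$ only by an element of $\Aut(\Gamma)$ (again Lemma~\ref{lem:mc-cores}). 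The main obstacle I anticipate is precisely the bookkeeping around non-injectivity of $h_1$: ensuring the images $h_1(S)$ genuinely have $n$ elements and that $\Aut(\Delta)$-equivalence upstairs is faithfully reflected as $\Aut(\Gamma)$-equivalence downstairs, without collapsing distinct orbits incorrectly. Once surjectivity of a well-defined map $\{\text{orbits of }n\text{-subsets of }\Delta\} \to \{\text{orbits of }n\text{-subsets of }\Gamma\}$ is established, the stated bound follows at once, and as noted in the text this shows in particular that model-complete cores of sub-exponential structures are sub-exponential.
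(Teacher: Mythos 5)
There is a genuine gap: the map you finally write down is not well-defined, and the one step that actually carries the proof is never established. You send $\mathrm{orb}_\Delta(S)$ to $\mathrm{orb}_\Gamma(\beta(h_1(S)))$ for an arbitrary $n$-subset $S$ of $\Delta$, but (i) $h_1$ need not be injective on $S$, so $h_1(S)$ need not be an $n$-subset at all, and composing with any $\beta\in\Aut(\Gamma)$ cannot repair this, since an automorphism of $\Gamma$ changes neither the cardinality of $h_1(S)$ nor its $\Aut(\Gamma)$-orbit --- the $\beta$ you choose via Lemma~\ref{lem:mc-cores} is vacuous for your purposes; and (ii) even when cardinality is preserved, well-definedness on orbits requires that $S'=\alpha(S)$ with $\alpha\in\Aut(\Delta)$ forces $h_1(S)$ and $h_1(S')$ into the same $\Aut(\Gamma)$-orbit, which you never prove and which can genuinely fail (take $\Delta$ a triangle plus infinitely many isolated vertices, $\Gamma=K_3$, and $h_1$ sending some isolated vertices to one vertex of the triangle and others to another: two $2$-subsets of isolated vertices lie in one $\Delta$-orbit but have images of different sizes). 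Your ``cleaner route'' --- starting from an $n$-subset $T$ of $\Gamma$ and passing to the injectively embedded $h_2(T)\subseteq\Delta$ --- is the right direction (it is the paper's), but you then abandon it, and in that direction the crucial implication is still missing: one must show that if $h_2(T_1)$ and $h_2(T_2)$ lie in the same orbit of $n$-subsets of $\Delta$, then $T_1$ and $T_2$ lie in the same orbit in $\Gamma$; the inequality then follows by counting fibers, with no need for a globally defined surjection from $\Delta$-orbits.

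The paper proves exactly this implication using Theorem~\ref{thm:mc-cores}(3): in the model-complete core $\Gamma$ every orbit of $n$-tuples is primitive positive definable, and since the homomorphisms $\Delta\to\Gamma$, $\Gamma\to\Delta$ and the automorphism $\alpha$ of $\Delta$ all preserve primitive positive formulas, the tuple $f(\alpha(g(\bar t_1)))=f(g(\bar t_2))$ satisfies the orbit-defining formulas of both $\bar t_1$ and $\bar t_2$, so these orbits coincide. Alternatively, your density idea can be made to work, but not in the way you deploy it: given $\alpha\in\Aut(\Delta)$ with $\alpha(h_2(T_1))=h_2(T_2)$, note that $h_1\circ\alpha\circ h_2$ and $h_1\circ h_2$ are endomorphisms of $\Gamma$, so by Lemma~\ref{lem:mc-cores} they agree on the finite sets $T_1$ and $T_2$ with automorphisms $\beta$ and $\gamma$ of $\Gamma$, giving $\beta(T_1)=h_1(h_2(T_2))=\gamma(T_2)$ and hence one orbit. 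As written, your argument contains neither of these steps: Lemma~\ref{lem:mc-cores} is invoked only to produce an automorphism that does not affect any orbit, and no step transfers $\Aut(\Delta)$-equivalence of the images back to $\Aut(\Gamma)$-equivalence of the originals.
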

\begin{proof}
Let $f$ be a homomorphism from $\Delta$ to $\Gamma$, and $g$ be a homomorphism from $\Gamma$ to $\Delta$.
Since $\Gamma$ is a core, it follows that $f \circ g$ is an embedding, so $g$ is injective.
It now suffices to show that when two $n$-subsets $t_1, t_2$ of $\Gamma$
are mapped by $g$ to two $n$-subsets $s_1, s_2$ in the same orbit of
$n$-subsets in $\Delta$, 
then $t_1$ and $t_2$ lie in the same orbit of $n$-subsets in $\Gamma$.
Let ${\bar t_1}$ be an $n$-tuple listing all the elements in $t_1$,
let $\alpha$ be an automorphism of $\Delta$ that maps $s_1$ to $s_2$, and
let ${\bar s_2} = \alpha(g({\bar t_1}))$.
Since ${\bar s_2}$ lists all the elements of $s_2$, 
we can arrange the elements of $t_2$ into an $n$-tuple ${\bar t_2}$
such that $g({\bar t_2}) = {\bar s_2}$.
By Theorem~\ref{thm:mc-cores}(\ref{enum:ppdef}), 
there are primitive positive definitions $\phi_1$ and $\phi_2$ of the orbits of ${\bar t_1}$ and ${\bar t_2}$. 
Since $g$, $\alpha$, and $f$ preserve primitive positive formulas, the tuple ${\bar t_3} := f(\alpha(g({\bar t_1})))$ satisfies $\phi_1$.
But $f(\alpha(g({\bar t_1}))) = f(g({\bar t_2}))$, and hence ${\bar t_3}$ also satisfies $\phi_2$.
Therefore, $\phi_1$ and $\phi_2$ define the same orbit of $n$-tuples, and so ${\bar t_1}$ and ${\bar t_2}$ are in the same orbit.
This implies that $t_1$ and $t_2$ are in the same orbit of $n$-subsets.
\end{proof}


In general it might not be true that the model-complete core of a sub-exponential structure with a semi-lattice polymorphism has again a semi-lattice polymorphism. A finite example of this situation can be derived from Proposition~5.2 in~\cite{LZFinitePosets}. This example shows a finite poset with a semi-lattice polymorphism which retracts to a poset without a semi-lattice polymorphism. By introducing constants, the latter structure can be turned into a core of the former.
However, we always have the following.

\begin{proposition}
Let $\Delta$ be an $\omega$-categorical relational structure with an $n$-ary  totally symmetric polymorphism. 
Then the model-complete core of $\Delta$ also has an $n$-ary totally symmetric polymorphism.
\end{proposition}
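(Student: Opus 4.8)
The plan is to transport the given polymorphism of $\Delta$ to the model-complete core $\Gamma$ along the homomorphisms that witness homomorphic equivalence, and to observe that total symmetry survives this transport. By Theorem~\ref{thm:mc-cores}, $\Delta$ is homomorphically equivalent to a model-complete core $\Gamma$, so fix homomorphisms $f \colon \Delta \to \Gamma$ and $g \colon \Gamma \to \Delta$, and let $t$ be the $n$-ary totally symmetric polymorphism of $\Delta$. I would define $h \colon \Gamma^n \to \Gamma$ by $h(x_1,\dots,x_n) := f\big(t(g(x_1),\dots,g(x_n))\big)$, that is, $h$ is the composition of $g$ applied coordinatewise, then $t$, then $f$.

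The argument then has two routine verifications. First, $h$ is a polymorphism of $\Gamma$: given an $m$-ary relation symbol $R$ and tuples $r_1,\dots,r_n \in R^\Gamma$, applying $g$ coordinatewise yields tuples $g(r_1),\dots,g(r_n) \in R^\Delta$ since $g$ is a homomorphism; since $t$ is a polymorphism of $\Delta$, the coordinatewise application of $t$ to these tuples lies in $R^\Delta$; and applying $f$ coordinatewise, which preserves $R$, gives a tuple in $R^\Gamma$, which is exactly the coordinatewise application of $h$ to $r_1,\dots,r_n$. Second, $h$ is totally symmetric: if $\{x_1,\dots,x_n\} = \{y_1,\dots,y_n\}$, then the sets $\{g(x_1),\dots,g(x_n)\}$ and $\{g(y_1),\dots,g(y_n)\}$ both equal the image $g(\{x_1,\dots,x_n\})$, hence coincide, so by total symmetry of $t$ we get $t(g(x_1),\dots,g(x_n)) = t(g(y_1),\dots,g(y_n))$, and applying $f$ gives $h(x_1,\dots,x_n) = h(y_1,\dots,y_n)$.

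I do not expect a genuine obstacle here; the construction is a straightforward ``sandwiching'' of $t$ between the two homomorphisms, and the only point deserving a moment's care is the second verification: total symmetry depends only on the underlying \emph{set} of arguments, and this is exactly the feature that is compatible with applying $g$ coordinatewise. By contrast, associativity is not in general preserved by such a composition, which is why this argument does not extend to semi-lattice polymorphisms and is consistent with the counterexample cited just before the proposition. One may also note that the same reasoning shows, more generally, that an $n$-ary totally symmetric polymorphism is inherited by \emph{any} structure homomorphically equivalent to $\Delta$; the hypothesis of $\omega$-categoricity enters only through Theorem~\ref{thm:mc-cores}, to guarantee that a (model-complete) core exists.
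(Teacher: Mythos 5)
Your construction is exactly the paper's proof: the paper defines $f'(x_1,\dots,x_n) = g(f(h(x_1),\dots,h(x_n)))$ with $h\colon\Gamma\to\Delta$ and $g\colon\Delta\to\Gamma$, i.e.\ the same sandwiching of the totally symmetric polymorphism between the two homomorphisms, and simply asserts the two verifications you spell out. Your proposal is correct and follows the same route, just with more explicit checking.
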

\begin{proof}
Let $\Gamma$ be the model-complete core of $\Delta$, and
let $g : \Delta \rightarrow \Gamma$ and $h : \Gamma \rightarrow \Delta$
be homomorphisms.
When $f$ is an $n$-ary totally symmetric polymorphism of $\Delta$, 
then $f': D(\Gamma)^n \rightarrow D(\Gamma)$ 
defined by $(x_1,\dots,x_n) \mapsto g(f(h(x_1),\dots,h(x_n)))$
is totally symmetric, and a polymorphism of $\Gamma$.
\end{proof}

Once we have a sampling algorithm for the core $\Gamma$ of a structure $\Delta$, we obtain a sampling algorithm for $\Delta$ as follows.
Let $h : \Gamma \rightarrow \Delta$ be a homomorphism
and let $B'$ be a sample returned by the sampling algorithm for $\Gamma$ on input $n$.
We then let the sample $B$ of $\Delta$ be the substructure induced by the image of $B'$ under $h$.
For any structure $A$ of size $n$ it follows that $A \rightarrow \Delta$
implies $A \rightarrow \Gamma$ which implies $A \rightarrow B' \rightarrow B$, so $B$ is indeed a sample of $\Delta$.
The structure $B$ can be computed in polynomial time in the size of $B'$
which in turn is polynomial in $n$, so
$B$ can be computed in polynomial time in $n$.
We conclude that it suffices to show Theorem~\ref{thm:alg-cond} for the special case of sub-exponential structures that are model-complete cores 
and have totally symmetric polymorphisms of all arities. 
\subsection{Reduction to the Transitive Case}
\label{ssect:general}
Let $\Gamma$ be a sub-exponential model-complete core. 
Since $\Gamma$ is sub-exponential, it has in particular a finite number of orbits of 1-subsets, called \emph{orbits} for short.
A structure is called \emph{transitive} if it has only one orbit.

\begin{proposition}\label{prop:trans-mc-cores}
Let $\Gamma$ be an $\omega$-categorical 
model-complete core, and 
let $\Gamma'$ be the expansion of $\Gamma$ by all
primitive positive definable relations. 
Let $U$ be an orbit of $\Gamma$, and $\Delta'$ be 
the restriction of $\Gamma'$ to $U$. Then 
$\Delta'$ is a transitive model-complete core. 
\end{proposition}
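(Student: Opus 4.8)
The plan is to prove the two assertions — that $\Delta'$ is transitive and that it is a model-complete core — after first pinning down the relations and the automorphism group of $\Delta'$. Since every relation of $\Gamma'$ is primitive positive, hence first-order, definable in $\Gamma$, Theorem~\ref{thm:ryll} gives $\Aut(\Gamma')=\Aut(\Gamma)$; in particular $U$ is invariant under $\Aut(\Gamma')$, so each $\alpha\in\Aut(\Gamma')$ restricts to a permutation $\alpha|_U$ of $U$, and this restriction is an automorphism of the induced substructure $\Delta'$. Thus $\Aut(\Gamma)|_U\subseteq\Aut(\Delta')$. As $U$ is a single $\Aut(\Gamma)$-orbit, it is a single orbit under $\Aut(\Gamma)|_U$, hence under $\Aut(\Delta')$, which is exactly transitivity of $\Delta'$.

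Next I would verify that $\Delta'$ is $\omega$-categorical. The orbits of $\Aut(\Gamma)|_U$ on $U^n$ are precisely the nonempty sets $O\cap U^n$ where $O$ ranges over the orbits of $\Aut(\Gamma)$ on $D(\Gamma)^n$; since $\Gamma$ is $\omega$-categorical there are only finitely many such $O$, so $\Aut(\Gamma)|_U$ is oligomorphic and hence so is every group containing it, in particular $\Aut(\Delta')$. Because a countable structure is $\omega$-categorical exactly when its automorphism group is oligomorphic (Ryll--Nardzewski), $\Delta'$ is $\omega$-categorical (and if $U$ is finite this is trivial). I would also record the key structural fact: by Theorem~\ref{thm:mc-cores}(\ref{enum:ppdef}), applied to the model-complete core $\Gamma$, the $\Aut(\Gamma)$-orbit of every $n$-tuple of elements of $U$ is primitive positive definable in $\Gamma$; such an orbit is contained in $U^n$, so it is a relation of $\Gamma'$ and therefore a relation of $\Delta'$.

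To conclude that $\Delta'$ is a model-complete core I would invoke Lemma~\ref{lem:mc-cores}, so it suffices to show that $\Aut(\Delta')$ is dense in $\End(\Delta')$. Let $e\in\End(\Delta')$ and let $F=\{a_1,\dots,a_n\}\subseteq U$ be finite. Since $e$ preserves the orbit relations identified above, $(e(a_1),\dots,e(a_n))$ lies in the same $\Aut(\Gamma)$-orbit as $(a_1,\dots,a_n)$; picking $\alpha\in\Aut(\Gamma)$ with $\alpha(a_i)=e(a_i)$ for all $i$, the automorphism $\alpha|_U$ of $\Delta'$ agrees with $e$ on $F$. Hence $\Aut(\Delta')$ is dense in $\End(\Delta')$, and Lemma~\ref{lem:mc-cores} gives that $\Delta'$ is a model-complete core.

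I expect the main obstacle to be the bookkeeping around the relations of $\Delta'$: one must make sure that the relevant relations — the orbits of tuples from $U$ — really are present, which is exactly where Theorem~\ref{thm:mc-cores}(\ref{enum:ppdef}) is needed, and that forming the primitive positive expansion before restricting does not disturb the automorphism group, which is Theorem~\ref{thm:ryll}. Once these are in hand, transitivity, $\omega$-categoricity, and the density argument are all routine.
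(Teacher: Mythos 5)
Your proof is correct and follows essentially the same route as the paper: restrictions of automorphisms of $\Gamma$ to $U$ give transitivity, and primitive positive definability of orbits of tuples (Theorem~\ref{thm:mc-cores}(\ref{enum:ppdef})), combined with the fact that endomorphisms of $\Delta'$ preserve these orbit relations, shows $\Aut(\Delta')$ is dense in $\End(\Delta')$, whence Lemma~\ref{lem:mc-cores} applies. Your explicit verification that $\Delta'$ is $\omega$-categorical (which Lemma~\ref{lem:mc-cores} requires) is a detail the paper leaves implicit, not a divergence in approach.
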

\begin{proof}
First observe that every automorphism $\alpha$ of $\Gamma$
is also an automorphism of $\Gamma'$ and preserves $U$,
and hence $\alpha|_U$ is an endomorphism
of $\Delta'$. Since the same also applies to the inverse of $\alpha$, we have that also $\alpha|_U$ has an inverse in $\End(\Delta')$, and therefore is an automorphism of $\Delta'$. 
So, the restriction of an automorphism of $\Gamma$ to $U$ is an automorphism of $\Delta'$, and therefore $\Delta'$ is transitive.
 
To show that $\Delta'$ is a model-complete core, 
let $e$ be an endomorphism of $\Delta'$, and
let $t$ be a $k$-tuple of elements from $U$. Then any
primitive positive formula that holds on $t$ in $\Gamma$
also holds on $e(t)$, since $\Delta'$ is the restriction
of an expansion of $\Gamma$ by all primitive positive
definable relations. Since $\Gamma$ is a model-complete core, the orbits of $k$-tuples are primitive positive definable in $\Gamma$ by Theorem~\ref{thm:mc-cores}(3), and hence there is an automorphism $\alpha$ of 
$\Gamma$ that maps $e$ to $e(t)$. Then $\alpha|_U$
is an automorphism of $\Delta'$. This shows that $\Aut(\Delta')$ is dense in $\End(\Delta')$, and the statement 
follows from Lemma~\ref{lem:mc-cores}.
\end{proof}

The following result is proved in Section~\ref{ssect:finest} and Section~\ref{ssect:primitive}. 

\begin{theorem}\label{thm:transitive}
Let $\Delta$ be a transitive sub-exponential model-complete core with totally symmetric polymorphisms of all arities.
Then $\Delta$ is either a structure of size 1, or it is isomorphic to a structure which is first-order interdefinable with $({\mathbb Q};<)$.
Any such structure
with a finite relational signature has an efficient sampling procedure. 
\end{theorem}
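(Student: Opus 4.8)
The plan is to read off the structure of $\Delta$ from its automorphism group $G:=\Aut(\Delta)$. First I note that $\Delta$ is $\omega$-categorical: a sub-exponential structure has only finitely many orbits of $n$-subsets, hence only finitely many orbits of $n$-tuples, for every $n$. Thus Theorem~\ref{thm:ryll} is available; in particular the congruences of $G$ are precisely the first-order definable equivalence relations of $\Delta$, of which there are only finitely many, forming a finite lattice. I would split the argument into a reduction to the case that $G$ is primitive (Section~\ref{ssect:finest}) and a classification of the primitive case via theorems of Macpherson and Cameron (Section~\ref{ssect:primitive}).

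\emph{Reduction to the primitive case.} If $G$ is imprimitive, I would pick a \emph{finest} nontrivial congruence $\sim$ of $G$, let $\Gamma'$ be the expansion of $\Delta$ by all primitive positive definable relations, and look at both the structure $\Delta_0$ induced by $\Gamma'$ on a single $\sim$-class and the quotient $\Delta/{\sim}$. Then $\Delta_0$ is primitive (a nontrivial congruence of $\Delta_0$ would pull back to one of $\Delta$ strictly between equality and $\sim$), and both $\Delta_0$ and $\Delta/{\sim}$ are transitive model-complete cores by the argument of Proposition~\ref{prop:trans-mc-cores}, sub-exponential by the argument of Proposition~\ref{prop:orbit-non-growth}, with $\Delta/{\sim}$ having strictly fewer congruences than $\Delta$ — so an induction on the height of $\sim$ in the congruence lattice becomes possible. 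The hard part here, which I expect to be the main obstacle of the whole proof, is to show that $\Delta_0$ and $\Delta/{\sim}$ still have totally symmetric polymorphisms of all arities: a totally symmetric polymorphism of $\Delta$ need not send a $\sim$-class into itself, so one cannot simply restrict, and the needed polymorphisms must be constructed by hand, using that the orbits of tuples inside a class are primitive positive definable in $\Delta$ by Theorem~\ref{thm:mc-cores}(3). I would also record that a finite transitive model-complete core with totally symmetric polymorphisms of all arities has exactly one element — applying such a polymorphism of arity $|\Delta|$ to an enumeration of the whole domain yields an $\Aut$-fixed point, which by transitivity forces $|\Delta|=1$ — so $\Delta_0$ one-element would make $\sim$ the equality congruence and $\Delta/{\sim}$ one-element would make $\sim$ the full congruence, both excluded by the choice of $\sim$; hence $\Delta_0$ and $\Delta/{\sim}$ are infinite, so by the primitive case each is first-order interdefinable with $(\mathbb{Q};<)$, and a short analysis of how the classes sit over the quotient then shows $\Delta$ itself to be first-order interdefinable with $(\mathbb{Q};<)$ (ordered copies of $\mathbb{Q}$ indexed by $\mathbb{Q}$ form a dense order without endpoints).

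\emph{The primitive case.} Assume $G$ is primitive. Since $\Delta$ is sub-exponential, the central theorem of~\cite{MacphersonOrbits} forces $G$ to be \emph{highly set-transitive}, i.e.\ $n$-set-transitive for all $n$, and by Cameron's theorem~\cite{Cameron5} a closed highly set-transitive permutation group on a countably infinite set is one of exactly five: $\Sym(\mathbb{Q})$, $\Aut(\mathbb{Q};<)$, and the automorphism groups of the betweenness relation, the circular order, and the separation relation on $\mathbb{Q}$. I would eliminate the last four using that $\Delta$ has a binary \emph{commutative} polymorphism $f$ and that, $\Delta$ being a model-complete core, the orbit of every tuple is primitive positive definable in $\Delta$ (Theorem~\ref{thm:mc-cores}(3)) and hence preserved by $f$: for the betweenness relation the orbit of $(0,1,2)$ is the betweenness relation itself, so $f$ preserving it would put $(f(0,2),f(1,1),f(2,0))$ in it, contradicting $f(0,2)=f(2,0)$; the circular order and the separation relation are eliminated by analogous short computations; and for $\Sym(\mathbb{Q})$ the orbit of a pair of distinct elements is $\{(x,y):x\ne y\}$, which no commutative binary operation preserves. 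Therefore $G=\Aut(\mathbb{Q};<)$, and by Theorem~\ref{thm:ryll} $\Delta$ is first-order interdefinable with $(\mathbb{Q};<)$.

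\emph{The sampling procedure.} For the last assertion: if $|\Delta|=1$ take $B:=\Delta$; if $\Delta$ has a finite relational signature and is first-order interdefinable with $(\mathbb{Q};<)$, then $G$ is highly set-transitive, so any two $n$-element substructures of $\Delta$ are isomorphic, and on input $n$ the sampler outputs the substructure $B$ of $\Delta$ induced on a fixed $n$-element set — say $1<2<\dots<n$ inside a copy of $(\mathbb{Q};<)$ — computing each of the finitely many relations of $B$ by evaluating the corresponding first-order formula over $(\mathbb{Q};<)$, which is decidable and is fixed data of the fixed template $\Delta$. This runs in time polynomial in $n$, and for every $A$ with $|A|\le n$ one has $A\to B$ iff $A\to\Delta$: one direction is immediate as $B\subseteq\Delta$, and for the other any homomorphic image of $A$ in $\Delta$ has at most $n$ elements and so embeds into $B$.
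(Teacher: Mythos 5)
Your primitive case and your sampling step are essentially sound and close to the paper's (they correspond to the use of Theorems~\ref{thm:macpherson} and~\ref{thm:cameron} plus Theorem~\ref{thm:primitive}; your elimination of the order-reversing reducts via a commutative polymorphism and pp-definable orbits is a legitimate variant of the paper's use of Lemma~\ref{lem:k1}). The reduction to the primitive case, however, does not work as proposed. The step you yourself flag as the ``hard part'' --- equipping the class structure $\Delta_0$ and the quotient $\Delta/{\sim}$ with totally symmetric polymorphisms of all arities --- is left entirely open, and it is genuinely problematic: a congruence of $\Aut(\Delta)$ is first-order definable but in general not primitive positive definable, so a polymorphism of $\Delta$ need not induce any well-defined operation on $D/{\sim}$, and (as you note) it need not preserve a class either. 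The paper avoids this issue altogether: in Proposition~\ref{prop:transisprim} it takes an inclusion-\emph{maximal} congruence $E \neq E_1$, applies Theorems~\ref{thm:macpherson} and~\ref{thm:cameron} only to the quotient \emph{permutation group} $H=\Aut(\Gamma)/E$ (no polymorphisms on the quotient are ever needed), and then pulls the resulting linear order on the $E$-classes back to $\Gamma$ itself, where the polymorphism hypothesis is exploited through Lemmas~\ref{lem:aclwalk} and~\ref{lem:k1}.

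More seriously, the conclusion you want to draw in the imprimitive case is false, so no repair of the polymorphism transfer can rescue the plan. If both $\Delta_0$ and $\Delta/{\sim}$ were interdefinable with $({\mathbb Q};<)$, then $\Delta$ would be a lexicographic-product-like structure in which $\sim$ is a first-order definable proper nontrivial equivalence relation; such a structure cannot be first-order interdefinable with $({\mathbb Q};<)$, since by quantifier elimination $({\mathbb Q};<)$ has no definable equivalence relations besides equality and the full relation. So ``ordered copies of ${\mathbb Q}$ indexed by ${\mathbb Q}$'' does not give the theorem's conclusion; the correct resolution is that this configuration contradicts \emph{sub-exponentiality}. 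With a definable linear order on the classes and classes of size greater than $1$, one encodes $\{0,1\}$-sequences of length $n$ into $2n$-subsets using $E$, $\neq$ and that order, producing at least $2^n$ orbits of $2n$-subsets --- this is exactly the final step of the paper's Proposition~\ref{prop:transisprim}, and it is also why the exponential-growth structure $\Gamma_2$ of Section~\ref{sect:beyond} (which is precisely of this lexicographic type) is not a counterexample. Your proposal never derives any contradiction with sub-exponential growth in the imprimitive case, so it cannot show that the nontrivial congruence is absent, which is the heart of the theorem. A smaller flaw: your argument that a finite transitive core with totally symmetric polymorphisms has one element assumes that automorphisms commute with polymorphisms, which they need not; the claim is true, but the paper instead excludes finite quotients via Lemma~\ref{lem:k1}.
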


So we are left with the task to design an efficient sampling algorithm
for $\Gamma$ using the efficient sampling algorithms for each
substructure of $\Gamma$ induced by an orbit of $\Gamma$. For this, we
need to analyze how the automorphism group $G$ of $\Gamma$ is built from its \emph{transitive constituents},
that is, from 
the permutation groups of the form $\big \{ \alpha|_{U} \mid \alpha \in \Aut(\Gamma) \big \}$ on $U$ where $U$ is an orbit of $\Gamma$. 
In general, we only know that $G$ is a subdirect product of its transitive constituents (see, e.g.,~\cite{CameronPermutationGroups}).
In our case, we can make this decomposition more precise, 
since we have a good knowledge of the group $\Aut(({\mathbb Q};<))$.

\begin{lemma}\label{lem:convex}
\label{lem:convexorbits}
  Let $\Gamma$ be a sub-exponential structure with
  an $\Aut(\Gamma)$-invariant linear order $<$ defined on
  the union of two orbits $U$ and $V$.
  Then $U$ and $V$ are convex with respect to $<$.
\end{lemma}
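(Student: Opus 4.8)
The plan is to argue by contradiction. Suppose one of the two orbits, say $U$, is \emph{not} convex with respect to $<$. Since $U \cup V$ is linearly ordered by $<$, this means there are elements $a < b < c$ with $a,c \in U$ and $b \in V$ (the case that $V$ fails to be convex is handled symmetrically, exchanging the roles of $U$ and $V$ below). The invariant we exploit is the \emph{pattern} of a finite set: for $S = \{s_1 < s_2 < \cdots < s_n\} \subseteq U \cup V$, define $\pi(S) \in \{U,V\}^n$ by $\pi(S)_i = U$ if $s_i \in U$ and $\pi(S)_i = V$ if $s_i \in V$. Every $\alpha \in \Aut(\Gamma)$ preserves $<$ (it is $\Aut(\Gamma)$-invariant) and maps each of the orbits $U$, $V$ onto itself, so $\pi(\alpha(S)) = \pi(S)$; hence $\pi$ is constant on each orbit of $n$-subsets, and the number of orbits of $n$-subsets of $\Gamma$ is at least the number of distinct patterns realized by $n$-element subsets of $U \cup V$. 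It therefore suffices to produce exponentially many such patterns for every large $n$.

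The first step is to build arbitrarily long alternating configurations. Pick $\alpha \in \Aut(\Gamma)$ with $\alpha(a) = c$; this exists because $a$ and $c$ lie in the same orbit $U$. Applying the order-automorphism $\alpha^i$ to the chain $a < b < c$ and using $\alpha^i(c) = \alpha^{i+1}(a)$ yields the strictly increasing sequence
\[
\alpha^0(a) < \alpha^0(b) < \alpha^1(a) < \alpha^1(b) < \alpha^2(a) < \alpha^2(b) < \cdots ,
\]
in which every term $\alpha^i(a)$ lies in $U$ and every term $\alpha^i(b)$ lies in $V$. In particular, for each $m \ge 1$ there is a configuration $x_1 < y_1 < x_2 < y_2 < \cdots < x_{m-1} < y_{m-1} < x_m$ with all $x_j \in U$ and all $y_j \in V$.

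Finally I would count patterns. Given a large $n$, set $m := \lfloor (2n+1)/3 \rfloor$, so that $0 \le n-m \le m-1$ and $n-m$ differs from $(m-1)/2$ by at most a constant. Take the configuration above with $m$ many $U$-points $x_1 < \cdots < x_m$ and $m-1$ interleaved $V$-points $y_1,\ldots,y_{m-1}$, and for each $T \subseteq \{1,\ldots,m-1\}$ with $|T| = n-m$ put $S_T := \{x_1,\ldots,x_m\} \cup \{y_j : j \in T\}$. Then $|S_T| = n$, and from $\pi(S_T)$ one recovers $T$ (index $j$ belongs to $T$ iff a $V$-point occurs between the $j$-th and $(j{+}1)$-st $U$-point of $S_T$), so distinct $T$ give $n$-subsets lying in distinct orbits. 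This produces at least $\binom{m-1}{n-m}$ orbits of $n$-subsets, and by the choice of $m$ together with Stirling's formula this quantity exceeds $c^n$ for a fixed $c > 1$ and all sufficiently large $n$ — contradicting that $\Gamma$ is sub-exponential. Hence $U$ (and, symmetrically, $V$) must be convex.

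I expect the crux of the argument to be the first step: extracting arbitrarily long strictly increasing $U/V$-alternating chains from a \emph{single} failure of convexity, which is precisely what the iteration of one automorphism shifting $a$ to $c$ accomplishes. Granting that, the final ``exponentially many orbits'' estimate is a routine binomial/Stirling computation; the only mild care it requires is to arrange that the lower bound holds for every large $n$ rather than merely for $n$ of a special form, which is why $m$ and the subset size $n-m$ are chosen as above.
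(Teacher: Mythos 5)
Your proof is correct and follows essentially the same route as the paper: assume non-convexity, iterate an automorphism sending one $U$-endpoint above the other to build arbitrarily long alternating $U/V$-chains, and then distinguish exponentially many orbits of $n$-subsets via their (automorphism-invariant) $U/V$-patterns. The only immaterial difference is the final count: the paper picks one element from each block $\{u_i,v_i\}$, giving $2^m$ distinct orbits of $m$-subsets for every $m$ directly, whereas you fix all $U$-points and select interleaved $V$-points, arriving at a central binomial coefficient estimate.
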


\begin{proof}
  Assume to the contrary that there are elements $u_1, u_2 \in U$, $v_1 \in V$ such that $u_1 < v_1 < u_2$.
  Since $u_1$ and $u_2$ lie in the same orbit, there is an automorphism $\alpha$ of $\Gamma$ such that $\alpha(u_1) = u_2$.
  Let $v_{i+1} = \alpha(v_i)$ for $i = 1, \dots, m-1$, and let
  $u_{i+1} = \alpha(u_i)$ for $i = 2, \dots, m-1$.
  Since $\alpha$ preserves the order $<$, we have $u_2 = \alpha(u_1) < \alpha(v_1) = v_2$, and $v_2 < \alpha(u_2) = u_3$.
  By repeated application of $\alpha$, we obtain $u_i < v_i < u_{i+1}$ for all $i = 1, \dots, m-1$, and finally $u_m < v_m$.
  Hence, we can encode sequences $s \in \{0,1\}^m$ in subsets $S \subseteq \Gamma$ of size $m$ by letting $u_i \in S$ iff $s_i = 0$ and $v_i \in S$ iff $s_i = 1$. Different sequences of length $m$ then correspond to different orbits of $m$-subsets in $\Gamma$. 
  This contradicts the assumption that $\Gamma$ is sub-exponential.
\end{proof}

We now describe the automorphism groups of sub-exponential structures where the transitive constituents $G_1,\dots,G_k$
for all orbits $U_1,\dots,U_k$ of $G$
are isomorphic to $\Aut(({\mathbb Q};<))$: the following theorem shows that in this case $G$ is precisely what Cameron~\cite{Oligo} calls the \emph{intransitive action} of the direct product $G_1 \times \dots \times G_k$ on $U_1 \cup \dots \cup U_k$. 
We will say that a permutation group $G \subseteq \Sym(X)$ is \emph{transitive on a subset $Y \subseteq X$} if for all $x, y \in Y$, there exists a permutation $\alpha \in G$ such that $\alpha(x) = y$. Otherwise, we say that $G$ is \emph{intransitive on $Y$}.

\begin{theorem}\label{thm:no-interaction}
Let $\Gamma$ be a sub-exponential structure with automorphism group $G$ and
orbits $U_1, \dots, U_k$,
and for $1\leq i \leq k$ let $G_i$ be the transitive constituent
of $G$ on $U_i$.
Assume that $G_i$ is isomorphic to $\Aut(({\mathbb Q};<))$ 
for all $1 \leq i \leq k$.
Then, $\alpha \in \Sym(D)$ is an automorphism of $\Gamma$ if and only if
$\alpha|_{U_i} \in G_i$ for each $1 \leq i \leq k$.
 \end{theorem}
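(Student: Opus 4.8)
The plan is to prove the nontrivial inclusion: if $\alpha \in \Sym(D)$ satisfies $\alpha|_{U_i} \in G_i$ for every $i$, then $\alpha \in G = \Aut(\Gamma)$. The converse is immediate, since an automorphism restricts on each orbit $U_i$ to a permutation lying in the transitive constituent $G_i$ by definition. For the forward direction, since $G$ is closed (it is an automorphism group, hence closed in the topology of pointwise convergence), it suffices to show that $G$ is \emph{dense} among all such $\alpha$: for every finite tuple $\bar a$ of elements of $D$, there is a genuine automorphism $\beta \in G$ with $\beta(\bar a) = \alpha(\bar a)$. Equivalently, by Theorem~\ref{thm:ryll}, we must show that the only relations on $D$ preserved by all $\alpha$ with $\alpha|_{U_i}\in G_i$ are the relations first-order definable in $\Gamma$; since the hypothesis is about a sub-exponential structure that is an automorphism group, working with the density/back-and-forth formulation is cleanest.

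The key structural input is Lemma~\ref{lem:convex}: whenever $\Gamma$ carries an $\Aut(\Gamma)$-invariant linear order on the union of two orbits, those orbits are convex with respect to it. Here each $G_i \cong \Aut(({\mathbb Q};<))$, so on each $U_i$ there is a $G_i$-invariant dense linear order $<_i$ without endpoints, which is first-order definable in $\Gamma$ restricted to $U_i$ by the argument in Section~\ref{ssect:general} (via Theorem~\ref{thm:ryll} applied to the transitive constituent, or more directly since the two orbits of pairs in $U_i$ under $G_i$ give the order and its reverse). The crux is to understand whether $\Gamma$ can impose any $\Aut(\Gamma)$-invariant relation \emph{across} distinct orbits $U_i, U_j$. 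Suppose $R \subseteq U_i \times U_j$ is such a relation, $R \ne \emptyset$ and $R \ne U_i \times U_j$. Using transitivity of $G_i$ on $U_i$ and of $G_j$ on $U_j$, and the fact that $G_i, G_j$ are highly homogeneous (every order-preserving bijection between finite subsets of $\mathbb{Q}$ extends to an automorphism), I would argue that $R$ must be defined by an order condition: e.g.\ after identifying $U_i$ and $U_j$ with copies of $(\mathbb{Q};<)$, $R$ is a Boolean combination of conditions of the form "$x$ and $y$ are in corresponding positions" — but there is no $\Aut(\Gamma)$-invariant way to correlate positions in $U_i$ with positions in $U_j$ unless the two constituents are "locked together". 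The point is that, because we only assume $G_i \cong \Aut(({\mathbb Q};<))$ as abstract groups and \emph{not} that $G$ acts diagonally, any such locking would give an $\Aut(\Gamma)$-invariant linear order on $U_i \cup U_j$ extending $<_i$ and $<_j$; then Lemma~\ref{lem:convex} forces $U_i$ and $U_j$ convex, i.e.\ the order puts all of $U_i$ below all of $U_j$ (or vice versa), which is preserved by the full intransitive action $G_1 \times \dots \times G_k$ — so it is not actually a new constraint. Any order that \emph{interleaves} $U_i$ and $U_j$ contradicts sub-exponentiality exactly as in the proof of Lemma~\ref{lem:convex}, since one could then encode $2^m$ distinct orbits of $m$-subsets. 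Hence no nontrivial binary relation across orbits survives, and by a standard amalgamation over tuples the same holds for relations of higher arity spanning several orbits.

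Making this precise, I would proceed as follows. First, fix a finite tuple $\bar a = (a_1,\dots,a_r)$ and its image $\alpha(\bar a)$; partition the indices according to which orbit $U_i$ each $a_j$ lies in. Within each orbit, $\alpha|_{U_i} \in G_i$ already maps the $U_i$-part of $\bar a$ correctly and this is realized by an automorphism of the constituent, hence (by Proposition~\ref{prop:trans-mc-cores} and $\omega$-categoricity, or directly) the two subtuples lie in the same orbit of $\Aut(\Gamma)$ restricted to $U_i$. Second, I would show by induction on the number of occupied orbits that the combined tuple and its image lie in the same $\Aut(\Gamma)$-orbit: the inductive step adds one more orbit $U_{i}$ worth of elements, and the obstruction to extending the partial automorphism is precisely the existence of an $\Aut(\Gamma)$-invariant relation between the already-handled orbits and $U_i$ that the proposed extension violates; by the previous paragraph the only such relations are "convex" ones, i.e.\ of the form "everything in $U_j$ precedes everything in $U_i$", which depend only on the pair of orbits and not on the specific elements, hence are automatically respected. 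Third, conclude density of $G$ in the group of all $\alpha$ with $\alpha|_{U_i} \in G_i$, and then closedness of $G$ gives equality.

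The main obstacle I anticipate is the rigorous classification of $\Aut(\Gamma)$-invariant relations across two orbits $U_i, U_j$ — showing that the \emph{only} possibilities are the trivial ones and the two convex orders. This is where one really uses that the structure is sub-exponential (to kill interleaving orders, which would otherwise be legitimate invariant relations between two copies of $(\mathbb{Q};<)$, as in the $\{0,1\}^m$ encoding of Lemma~\ref{lem:convex}) together with the fact that $G_i$ and $G_j$ are \emph{independently} isomorphic to $\Aut(({\mathbb Q};<))$ rather than jointly constrained (which rules out a "diagonal" invariant relation like "$x$ occupies the same cut as $y$"). A careful treatment probably invokes the classification of $\omega$-categorical structures interdefinable with $(\mathbb{Q};<)$, or Cameron's theorem on highly homogeneous permutation groups~\cite{Cameron5}, applied to the $2$-orbit substructure $\Gamma'|_{U_i \cup U_j}$ (expanded by primitive positive definable relations): such a structure must itself be sub-exponential, transitive-on-two-convex-pieces, and one checks its automorphism group is forced to be the intransitive action $\Aut(({\mathbb Q};<)) \times \Aut(({\mathbb Q};<))$.
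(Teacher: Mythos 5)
Your overall strategy (the easy direction, then density of $G$ in the intransitive product $G'$ plus closedness of $G$) is exactly the paper's frame, but the heart of the argument has a genuine gap: your classification of $\Aut(\Gamma)$-invariant relations across two orbits is wrong as stated. You claim that any ``locking'' of $U_i$ to $U_j$ would produce an $\Aut(\Gamma)$-invariant linear order on $U_i\cup U_j$ that interleaves the orbits, which Lemma~\ref{lem:convex} then kills. But a cross-orbit correlation need not interleave anything: the structure $\Gamma_3$ of Section~\ref{sect:beyond} (two copies of ${\mathbb Q}$ joined by a definable perfect matching, with $U$ entirely below $V$) satisfies all your hypotheses on the constituents --- each transitive constituent is isomorphic to $\Aut(({\mathbb Q};<))$ and both orbits are convex for the invariant order --- yet its automorphism group is a proper (diagonal) subgroup of $G_1\times G_2$. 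So ``convexity plus independent constituents'' does not rule out locking; the only thing that rules it out is sub-exponentiality, and your proposal never derives exponential growth from a matching-type or cut-correlation relation. Your stated reason for excluding the diagonal case (``we only assume $G_i\cong\Aut(({\mathbb Q};<))$ as abstract groups, not that $G$ acts diagonally'') is circular, since the diagonal action satisfies that very assumption; and invoking Cameron's theorem on the two-orbit restriction does not help, because that restriction is not transitive and the desired conclusion for it is precisely the statement being proved.

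The paper closes exactly this hole with a different, tuple-level argument that your sketch (binary cross-orbit relations plus induction on the number of occupied orbits, with higher arities waved away ``by amalgamation'') does not reach. Assuming $G$ is not dense in $G'$, one normalizes to a tuple $\bar u$ fixed by some $\alpha\in G'$ that moves a single extra point $u'$ within a $\bar u$-interval $I$, finds $\bar a\subseteq\bar u$ and a point $b$ outside the orbit containing $I$ such that $\Aut((\Gamma,\bar a))$ is transitive on $I$ but $\Aut((\Gamma,\bar a,b))$ is not, uses Lemma~\ref{lem:convexorbits} to get a definable proper initial segment $I(b)\subsetneq I$, and then pushes $I(b)$ by automorphisms fixing $\bar a$ to build a strictly increasing chain $I(b_1)\subsetneq\cdots\subsetneq I(b_m)$, all definable over $(\bar a,\bar b)$. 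Choosing witnesses $c_i\in I(b_{i+1})\setminus I(b_i)$ and encoding subsets of $\{1,\dots,m\}$ by which $c_i$ are included yields $\Omega(2^{m/2})$ orbits of $m$-subsets, contradicting sub-exponentiality. This is the step that actually kills the $\Gamma_3$-type locking (there $I(b)$ is the cut determined by the matched partner), and it is the step missing from your proposal; without it, the inductive extension of partial automorphisms in your third paragraph has no justification.
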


\begin{proof}
 Since $G_i$ is isomorphic to $\Aut(({\mathbb Q};<))$ for each $i$,
 there is by Theorem~\ref{thm:ryll} a first-order definable dense linear order
 $<_i$ on each orbit $U_i$.
 We will use $<$ to collectively denote these orders, and rely on the
context
 to determine which orbit it applies to.

 The result is trivial for $k = 1$, so assume that $k > 1$.
 Let $G'$ be the intransitive action of $G_1 \times \dots
\times G_k$ on $U_1 \cup \dots \cup U_k$.
 It suffices to show that $G$ is dense in $G'$; since $G$ is closed, they
must then be equal.
 Assume for the sake of contradiction that $G$ is not dense in $G'$.
 Then, there are finite sequences
 $({\bar u},u')$ and $({\bar v},v')$ in $\Gamma$,
 and an automorphism $\alpha \in G'$ such that
 $\alpha({\bar u},u') = ({\bar v},v')$, but
 $\gamma({\bar u},u') \neq ({\bar v},v')$ for all $\gamma \in G$.
 Furthermore we can choose $({\bar u},u')$ and $({\bar v},v')$
 so that $\beta({\bar u}) = {\bar v}$ for some $\beta \in G$.
 By applying $\beta^{-1}$ to ${\bar v}$ and $\alpha$, 
 we may then assume that
 ${\bar v} = {\bar u}$ and $\alpha({\bar u}) = {\bar u}$.
 By a \emph{${\bar u}$-interval}, we will mean an inclusion-maximal convex
 subset of $U_i \setminus {\bar u}$, for some $i$,
 where convexity is evaluated with respect to $U_i$.
 Now $u'$ lies in some ${\bar u}$-interval $I$,
 and since $v' = \alpha(u')$, we have $v' \in I$ as well.
 By assumption we have $\gamma(u') \neq v'$ for all $\gamma \in \Aut((\Gamma,{\bar u}))$,
 so $\Aut((\Gamma,{\bar u}))$ is intransitive on $I$.
 On the other hand, if we let ${\bar w} := {\bar u} \cap U$, where $U$ is the orbit of $\Gamma$ containing $I$, then
 $\Aut((\Gamma, {\bar w}))$ is clearly transitive on each ${\bar w}$-interval contained in $U$:
 each ${\bar w}$-interval contained in $U$ is an orbit, and its corresponding transitive constituent of $\Aut((\Gamma, {\bar w}))$ is isomorphic to $\Aut(({\mathbb Q}; <))$.
 In particular, $\Aut((\Gamma,{\bar w}))$ is transitive on $I$. It therefore follows that we can find a subsequence ${\bar a}$ of ${\bar u}$ containing ${\bar w}$, and an element $b \not\in U$ such that $\Aut((\Gamma, {\bar a}))$ is transitive on $I$ but $\Aut((\Gamma, {\bar a}, b))$ is not.

 By Lemma~\ref{lem:convexorbits}, we have that every orbit of $(\Gamma, {\bar a}, b)$ contained in $I$ is convex.
 Since $\Aut((\Gamma, {\bar a}, b))$ is intransitive on $I$, it follows
 that there is an initial segment 
 $I(b) \subsetneq I$
 definable in $(\Gamma, {\bar a}, b)$.
 Let $x \in I(b)$, $y \in I \setminus I(b)$, and pick an automorphism
 $\alpha \in \Aut((\Gamma, {\bar a}))$ such that $\alpha(x) = y$.
 Now $\alpha(I(b)) \supseteq I(b)$ and $\alpha(I(b))$ is definable in
 $(\Gamma, {\bar a}, \alpha(b))$.
 Let $b_1 = b$ and for $i \geq 1$, let $b_{i+1} = \alpha(b_i)$.
 By repeating this procedure, we get an increasing sequence of sets
 $I(b_1) \subsetneq I(b_2) \subsetneq \dots \subsetneq I(b_m)$ which are all
 definable in $(\Gamma, {\bar a}, {\bar b})$, 
 where ${\bar b} = (b_1, \dots, b_m)$.
 For $i \geq 1$, pick $c_i \in I(b_{i+1}) \setminus I(b_i)$,
 so that each of the elements $c_i$ lies in a
 different orbit of $(\Gamma, {\bar a}, {\bar b})$.
 We now encode a set $S \subseteq \{1,\dots,m\}$ as the subset $T$
 of $\Gamma$ consisting of the elements in
 ${\bar a}$, ${\bar b}$, and $\{ c_i \mid i \in S\}$.
 Let $S' \subseteq \{1,\dots,m\}$ be another set with encoding $T'$, and
 let $\alpha \in G$ be an automorphism such that $\alpha(T) = T'$.
 Then $\alpha$ has to fix ${\bar a}$ and ${\bar b}$,
 i.e., $\alpha$ must be an automorphism of $(\Gamma, {\bar a}, {\bar b})$.
 Therefore $\alpha$ cannot map $c_i$ to $c_j$ for $i \neq j$,
 so $S$ must be equal to $S'$.
 The set $T$ has size at most $2m+|{\bar a}|$,
 and since we can fix the size of ${\bar a}$,
 we conclude that the number of distinct orbits of $m$-subsets of $\Gamma$
 is at least $\Omega(2^{m/2})$.
 This contradicts the sub-exponentiality of $\Gamma$,
 so $G$ must be dense in $G'$, and the result follows.
\end{proof}

Together with the remarks of Section~\ref{sect:mc-cores},
the following implies Theorem~\ref{thm:alg-cond}.

\begin{theorem}
Every sub-exponential model-complete core $\Gamma$ with totally symmetric polymorphisms of all arities and a finite
relational signature $\tau$ has an efficient sampling algorithm.
\end{theorem}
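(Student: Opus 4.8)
The plan is to obtain a sampling algorithm for $\Gamma$ by gluing together one $({\mathbb Q};<)$-style sample per orbit, using the absence of interaction between orbits supplied by Theorem~\ref{thm:no-interaction}. Since $\Gamma$ is sub-exponential it has only finitely many orbits $U_1,\dots,U_k$ of $1$-subsets, and it is $\omega$-categorical. Each $U_i$ is also an orbit of $1$-tuples, hence primitive positive definable by Theorem~\ref{thm:mc-cores}(\ref{enum:ppdef}); consequently $U_i$ is preserved by every polymorphism of $\Gamma$, and therefore also by every polymorphism of the expansion $\Gamma'$ of $\Gamma$ by all primitive positive definable relations, which has the same polymorphisms as $\Gamma$. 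It follows that restricting a totally symmetric $m$-ary polymorphism of $\Gamma$ to $U_i^m$ gives a totally symmetric $m$-ary polymorphism of the structure $\Delta'_i$ which, by Proposition~\ref{prop:trans-mc-cores}, is the transitive model-complete core obtained as the restriction of $\Gamma'$ to $U_i$. Thus each $\Delta'_i$ is a transitive model-complete core with totally symmetric polymorphisms of all arities, so Theorem~\ref{thm:transitive} applies: $\Delta'_i$ has size $1$ or is first-order interdefinable with $({\mathbb Q};<)$. In particular the transitive constituent $G_i$ of $\Aut(\Gamma)$ on $U_i$ is trivial or isomorphic, as a permutation group, to $\Aut(({\mathbb Q};<))$.

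Now apply Theorem~\ref{thm:no-interaction}, say to the substructure of $\Gamma'$ induced by the union of the non-singleton orbits, the singleton orbits being pointwise fixed by $\Aut(\Gamma)$ and contributing no interaction. This yields that $\alpha\in\Sym(D(\Gamma))$ is an automorphism of $\Gamma$ if and only if $\alpha|_{U_i}\in G_i$ for every $i$. The consequence I want to extract is a normal form for orbits of tuples: the $\Aut(\Gamma)$-orbit of a tuple over $\Gamma$ is determined by the orbit $U_i$ containing each of its coordinates, the equality pattern among its coordinates, and, for each non-singleton $U_i$, the order type of its coordinates lying in $U_i$ with respect to the $\Aut(\Gamma)$-invariant dense linear order $<_i$ on $U_i$ furnished by Theorem~\ref{thm:ryll}. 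In particular, for every relation symbol $R$ of $\tau$, whether a tuple belongs to $R^\Gamma$ depends only on this finite amount of combinatorial data.

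The sampling algorithm for $\Gamma$ on input $n$ now reads: in each non-singleton orbit $U_i$ pick a $<_i$-chain $S_i$ of size $n$, in each singleton orbit pick its unique point, and output the induced substructure $B := \Gamma\big[\bigcup_i S_i\big]$. Then $B$ is a finite substructure of $\Gamma$ of size at most $kn$, hence of polynomial size, and it is computable in polynomial time. For correctness, $A\to B$ trivially implies $A\to\Gamma$. Conversely, suppose $h\colon A\to\Gamma$ with $|A|=n$. For each $i$ the set $h(D(A))\cap U_i$ has at most $n$ elements, and since $\Delta'_i$ has a single orbit of $n$-subsets (it has size $1$ or is interdefinable with $({\mathbb Q};<)$), those elements admit a $<_i$-order-embedding into $S_i$. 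Let $\phi$ be the union over $i$ of these embeddings; it is an injection of $h(D(A))$ into $\bigcup_i S_i$ that preserves orbit membership and each order $<_i$. By the normal form above, $\phi$ preserves and reflects every relation of $\Gamma$, so $\phi$ is an embedding of $\Gamma\big[h(D(A))\big]$ into $B$, and hence $\phi\circ h\colon A\to B$ is a homomorphism, as required.

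The step I expect to demand the most care is this last correctness argument, and in particular the claim that a map which on each orbit is only a $<_i$-order-embedding nevertheless preserves all relations of $\Gamma$; this is exactly where Theorem~\ref{thm:no-interaction} is indispensable, since it rules out any relation of $\Gamma$ whose truth value couples $\phi$'s behaviour on two distinct orbits or depends on more than the within-orbit order type. The secondary points — that the transitive constituents are trivial or equal, as permutation groups, to $\Aut(({\mathbb Q};<))$ rather than merely dense in it, and the routine bookkeeping needed to merge the orbit-wise chains into a single finite substructure $B$ of polynomial size — follow from Proposition~\ref{prop:trans-mc-cores}, Theorem~\ref{thm:transitive}, and the fact that $k$ depends only on $\Gamma$.
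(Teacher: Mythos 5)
Your proposal is correct in substance and follows the same skeleton as the paper's argument: finitely many orbits, Proposition~\ref{prop:trans-mc-cores} together with the restricted totally symmetric polymorphisms and Theorem~\ref{thm:transitive} to identify each orbit structure as a point or as first-order interdefinable with $({\mathbb Q};<)$, and Theorem~\ref{thm:no-interaction} to exclude interaction between orbits. Where you genuinely diverge is in the final assembly. The paper runs the per-orbit samplers on a finite auxiliary signature $\sigma$ consisting of primitive positive definable relations of bounded arity, and then reconstructs each $\tau$-relation of the sample as a union, over partitions of its coordinate positions into orbits, of products of the sampled $\sigma$-relations; correctness is inherited from the per-orbit samplers. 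You instead output the induced substructure of $\Gamma$ on a union of $n$-element chains, one per orbit, and prove correctness in one shot from a normal form for orbits of tuples (coordinatewise orbit membership, equality pattern, within-orbit order type), which does follow from Theorem~\ref{thm:no-interaction} plus the identification of each transitive constituent with $\Aut(({\mathbb Q};<))$. Your route avoids the $\sigma$/partition bookkeeping and makes the homomorphism $\phi\circ h$ argument quite transparent; what the paper's route buys is an explicit account of how the relations of the sample are actually \emph{computed}: saying ``output $\Gamma\bigl[\bigcup_i S_i\bigr]$'' presupposes deciding, for each tuple over the sample, membership in each $R^\Gamma$, which requires a hard-wired finite description of which normal-form patterns belong to each relation. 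Since $\Gamma$ is fixed, such a finite table exists and may be built into the algorithm (this is exactly the role played in the paper by the quantifier-free formulas of Theorem~\ref{thm:primitive} together with the partition decomposition), so this is a presentational omission rather than a flaw, but it is the step your write-up asserts rather than proves.

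One point to repair: your handling of singleton orbits, namely applying Theorem~\ref{thm:no-interaction} to the substructure induced on the union of the non-singleton orbits, is not a literal reduction. An automorphism of that induced substructure need not extend, by fixing the singleton points, to an automorphism of $\Gamma$, because relations of $\Gamma$ whose tuples mix singleton points with the non-singleton part disappear when one passes to the induced substructure; so the transitive constituents of the restricted structure could a priori be larger than the groups $G_i$ you need. What your normal-form claim actually requires is the routine extension of Theorem~\ref{thm:no-interaction} in which some constituents are allowed to be trivial groups on singleton orbits, obtained by rerunning its proof with the fixed points carried along as constants; this is also what the paper's remark that the size-$1$ case ``can easily be modified'' silently uses, so the fix is short, but as stated your reduction does not go through.
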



\begin{proof}
For a given $n$, we have to compute a $\tau$-structure $B$ in polynomial time in $n$ such that for all structures $A$ of size $n$ 
we have $A \rightarrow \Gamma$ if and only if $A \rightarrow B$.
Let $U_1,\dots,U_k$ be the orbits of $\Gamma$. 
Let $\Gamma'$ be the expansion of $\Gamma$ by all primitive positive definable relations, and let $\Delta'_1,\dots,\Delta'_k$ 
be the structures induced in $\Gamma'$ by the orbits $U_1,\dots,U_k$ of $\Gamma$. By Proposition~\ref{prop:trans-mc-cores}, $\Delta'_i$ is
a transitive model-complete core, for all $i \leq k$. 
Since $\Delta'_i$ also has totally symmetric polymorphisms of all arities (obtained as the restrictions of the totally symmetric polymorphisms of $\Gamma$ to $U_i$), we can apply
Theorem~\ref{thm:transitive}, and conclude that each of the structures $\Delta'_i$ either has size 1, or is isomorphic to a structure which is first-order interdefinable with $({\mathbb Q};<)$.
We will now prove that an efficient sampling procedure exists in the case when each of the structures $\Delta'_i$ is isomorphic to a structure which is first-order interdefinable with $({\mathbb Q};<)$.
The proof can easily be modified to handle the case when some of the structures $\Delta'_i$ have size 1.

Let $m$ be the maximal arity of $\tau$, and
let $\sigma$ be the signature that contains a relation symbol for each at most $m$-ary primitive positive definable relation in $\tau$. For all $i \leq k$, let $\Delta_i$ be the structure obtained
from $\Delta_i'$ by removing all relations except the relations
for the symbols from $\sigma$. Then $\Delta_i$ has a finite signature, and we conclude by Theorem~\ref{thm:transitive} that $\Delta_i$ has an efficient sampling procedure.

Let $B_i$ be the $\sigma$-structure produced by this sampler for $\Delta_i$ on input $n$; 
we can assume that it has exactly $n$ vertices $u^i_1,\dots,u^i_n$. 
The output of our algorithm will be the $\tau$-reduct of a $\sigma$-structure $B$ with vertex set $\{u^i_j \mid 1 \leq j \leq n, 1 \leq i \leq k\}$.
Since the orbits are primitive positive definable in $\Gamma$, there is a unary relation symbol $R(U)$ in $\sigma$ for each orbit $U$ of $\Gamma$.
The structure $B$ will be such that $u^i_j \in R(U_i)$, for all $i \leq k$ and $j \leq n$.
Note that by Theorem~\ref{thm:no-interaction}, all such structures $B$ will be isomorphic.

For $R \in \tau$, we now add tuples to the relation $R^B$ as follows. 
We first fix a partition $P$ of the arguments of $R$ into at most $k$ parts. 
Let $p^i_1,\dots,p^i_{l(i)}$ be the arguments of $R$ of the $i$-th part.
For each $i$ the relation 
\begin{align*} \big \{(t[p^i_1],\dots,t[p^i_{l(i)}]) \mid t \in R \text{ and } t[p^j_l] \in U_j \text{ for all } j \leq k,l \leq l(j)  \big \}
\end{align*}
is primitive positive definable in $\Gamma$, and there is a relation symbol $R(P,i)$ for this relation in $\sigma$. 
It is clear that given $P$ and the relations $R(P,i)$ of $B_i$ we can efficiently compute the relation 
$$R(P) = \big \{t \mid (t[p^i_1],\dots,t[p^i_{l(i)}]) \in R(P,i)  \text{ for all } i \leq k  \big \} \; .$$
It follows from Theorem~\ref{thm:no-interaction} that $R$ equals the union of $R(P)$ over all partitions $P$ of the arguments of $R$ -- and since there is a constant number of such partitions, $R$ can be computed in polynomial time. 
\end{proof}

We are left with the task to prove Theorem~\ref{thm:transitive}.

\subsection{The Transitive Case is Primitive}
\label{ssect:finest}

We now proceed to show that when $\Gamma$ is a transitive sub-exponential
model-complete core with totally symmetric polymorphisms of all arities,
then it is also primitive.
Combining this result with the following two theorems shows that
$\Gamma$ is isomorphic to a structure with a first-order definition in $({\mathbb Q}; <)$.
This allows us to finish the proof of Theorem~\ref{thm:transitive} in the next section. 

\begin{theorem}[cf.~\cite{MacphersonOrbits}]
\label{thm:macpherson}
Let $G$ be a primitive but not highly set-transitive permutation group on an infinite set $X$.
If $c$ is a real number with $1 < c < 2^{1/5}$, then 
$G$ has more than $c^n$ orbits of $n$-subsets of $X$, for all sufficiently large $n$.
\end{theorem}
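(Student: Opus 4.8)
The plan is to convert the failure of high set-transitivity into an exponential lower bound by encoding arbitrary binary strings as orbits of finite subsets, and to use primitivity to make these encodings rigid. First extract a \emph{local asymmetry}. Since $G$ is primitive it is transitive, so $f_1(G)=1$; and since $G$ is not highly set-transitive there is a least $k\ge 2$ with $f_k(G)\ge 2$, so $G$ is $j$-set-transitive for every $j<k$. Pick $k$-subsets $A\not\sim_G A'$, a $(k-1)$-subset $C_1\subseteq A$ and a $(k-1)$-subset $C_2\subseteq A'$; by $(k-1)$-set-transitivity choose $g\in G$ with $g(C_1)=C_2$, and set $C:=C_2$, $p:=g(A\setminus C_1)$, $q:=A'\setminus C_2$. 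Then $p\neq q$, $p,q\notin C$, and $C\cup\{p\}=g(A)\not\sim_G C\cup\{q\}=A'$. An element of the setwise stabilizer $G_{\{C\}}$ carrying $p$ to $q$ would carry $C\cup\{p\}$ to $C\cup\{q\}$, a contradiction; hence $p$ and $q$ lie in distinct orbits of $G_{\{C\}}$, and a fortiori of the pointwise stabilizer $H:=G_{(C)}$, while they lie in a single $G$-orbit.

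Next, \emph{propagate} this asymmetry. Using primitivity of $G$ — which forces the nontrivial orbital graphs on $X$ to be connected, so that no finite configuration can be ``slid'' or confined — one arranges, for each $m$, pairwise disjoint ``slots'': a fixed finite set $D$, disjoint copies $C_1,\dots,C_m$ of the configuration $C$ (all in the same $G$-orbit), and for each $i$ a finite anchor around $C_i$, such that each slot has a ``$p$-state'' and a ``$q$-state'' differing only in a bounded neighbourhood of $C_i$, and such that (i) every string $S\subseteq\{1,\dots,m\}$ is realized by the finite set $T_S$ obtained by putting slot $i$ in $p$-state iff $i\in S$ and in $q$-state otherwise, and (ii) any automorphism mapping $T_S$ to $T_{S'}$ must fix $D$ and every anchor, hence map each slot to itself, hence preserve all slot-states, forcing $S=S'$. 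Thus distinct strings give distinct orbits of a set of size $dm+O(1)$, where $d$ is the bounded per-slot cost (the copy of $C$ plus the anchor). Combining this with the standard monotonicity $f_n(G)\le f_{n+1}(G)$ for groups all of whose orbits are infinite — or, more directly, by padding $T_S$ with a fixed ``reservoir'' of generic points to hit every target size $n$ — one gets $f_n(G)\ge 2^{(n-O(1))/d}$, which exceeds $c^n$ for all large $n$ whenever $c<2^{1/d}$. A careful optimization of the gadget (for which Cameron's classification of highly set-transitive groups is a convenient tool to isolate the few extremal configurations) shows one may take $d=5$, yielding the bound for every $c$ with $1<c<2^{1/5}$.

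The main obstacle is the propagation step: turning the single inequivalence $C\cup\{p\}\not\sim_G C\cup\{q\}$ into \emph{independently toggleable} slots whose states are \emph{recoverable} from the orbit of $T_S$. A naive encoding spends $|C|+1=k$ points per bit, but $k$ is unbounded over admissible $G$ (e.g.\ $k=m$ for the homogeneous $m$-uniform hypergraph), so the copies of $C$ must be reused and only the anchors paid for; proving that primitivity supplies exactly the rigidity that makes distinct strings yield distinct orbits — and that this can be done with only five points per bit — is the heart of the argument.
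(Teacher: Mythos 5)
This statement is not proved in the paper at all: it is quoted (``cf.~\cite{MacphersonOrbits}'') as Macpherson's 1985 theorem on orbits of infinite permutation groups, and the authors use it as a black box in Section~3.4. So the only question is whether your outline amounts to a proof of Macpherson's theorem, and it does not. The first step (extracting, from the least $k$ with at least two orbits on $k$-subsets, a $(k-1)$-set $C$ and points $p,q$ in the same $G$-orbit but different orbits of the stabiliser of $C$) is fine and standard. But everything after that is asserted rather than argued. The ``propagation'' step --- producing, for every $m$, slots that are independently toggleable between a $p$-state and a $q$-state, such that the slot pattern is \emph{recoverable} from the $G$-orbit of the union --- is exactly the hard content of Macpherson's theorem, and you explicitly flag it as ``the heart of the argument'' without supplying it. Saying that primitivity ``forces the nontrivial orbital graphs to be connected, so that no finite configuration can be slid or confined'' is not a mechanism: connectivity of orbital graphs gives you no control over how an automorphism carrying $T_S$ to $T_{S'}$ acts on your anchors, and point (ii) (``any automorphism mapping $T_S$ to $T_{S'}$ must fix $D$ and every anchor'') is precisely what needs a proof and is simply postulated. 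Your sketch is also internally inconsistent on this point: you first require pairwise disjoint copies $C_1,\dots,C_m$ of $C$ (which, as you note, would cost an unbounded $k$ points per bit, e.g.\ for the homogeneous $m$-uniform hypergraph), and then say the copies of $C$ ``must be reused and only the anchors paid for,'' without describing any construction that achieves this reuse while keeping distinct strings in distinct orbits.

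Finally, the constant is not something one can wave at. The bound $2^{1/5}$ is the outcome of Macpherson's detailed case analysis (splitting according to the degree of set-transitivity, using orbital/Ramsey-type arguments), not of ``a careful optimization of the gadget'' for which no gadget has been exhibited; and invoking Cameron's classification of highly set-transitive groups to ``isolate the few extremal configurations'' is not a substitute, since the groups at issue here are precisely the non-highly-set-transitive ones, about which Cameron's theorem says nothing. As it stands, the proposal reduces the theorem to an unproved encoding lemma whose quantitative form ($5$ points per bit) is exactly the statement to be established, so there is a genuine gap. If you want this result, cite \cite{MacphersonOrbits} as the paper does, or reproduce the argument there; a from-scratch proof along your lines would require constructing and verifying the rigid slot configuration explicitly for an arbitrary primitive, not highly set-transitive group.
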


\begin{theorem}[Cameron~\cite{Cameron5}]
\label{thm:cameron}
  A permutation group $G$ on an infinite set is highly set-transitive iff it is isomorphic to the automorphism group of a structure with a first-order definition in $({\mathbb Q}; <)$.
\end{theorem}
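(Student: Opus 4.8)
The plan is to prove both directions, the forward (``only if'') one being immediate and the backward one carrying all the content. For the forward direction, suppose $G$ is isomorphic as a permutation group to $\Aut(\Gamma_0)$ for some $\Gamma_0$ first-order definable in $(\mathbb{Q};<)$. Every automorphism of $(\mathbb{Q};<)$ preserves every relation first-order definable in $(\mathbb{Q};<)$, so $\Aut((\mathbb{Q};<))\subseteq\Aut(\Gamma_0)$; and since $(\mathbb{Q};<)$ is homogeneous, any two $n$-element subsets of $\mathbb{Q}$ are related by an automorphism, so $\Aut((\mathbb{Q};<))$ is already highly set-transitive. Hence so is the larger group $\Aut(\Gamma_0)$, and so is $G$. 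For the backward direction, let $G$ be highly set-transitive on a countable set $X$; replacing $G$ by its topological closure changes neither the orbits on finite subsets nor those on finite tuples, and in all applications in this paper $G$ is an automorphism group to begin with, so we assume $G$ closed and write $G=\Aut(M)$, where $M$ is the canonical structure on $X$ whose relations are exactly the orbits of $G$ on tuples. High set-transitivity forces the number $f(n)$ of orbits of $G$ on $n$-tuples of distinct points to satisfy $f(n)\le n!$, so $M$ is $\omega$-categorical.

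First I would use Ramsey's theorem together with a diagonal argument to extract an infinite subset $Y=\{y_1,y_2,\dots\}\subseteq X$ that is \emph{canonically embedded}: for every relation $R$ of $M$, of some arity $k$, and every $\sigma\in\Sym(\{1,\dots,k\})$, whether $R(y_{i_{\sigma(1)}},\dots,y_{i_{\sigma(k)}})$ holds depends only on $\sigma$ and not on the choice of $i_1<\dots<i_k$ from $Y$. (One colours each $k$-subset of the index set by the resulting ``pattern'' and iterates Ramsey over all arities and all relations.) On such a $Y$ every relation of $M$ is first-order definable from the enumeration order $<$ of $Y$, so $\Aut(M|_Y)\supseteq\Aut((Y;<))$; by the forward direction and Cantor's theorem $M|_Y$ is isomorphic to a reduct of $(\mathbb{Q};<)$. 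A short analysis of which sets of $\Sym(k)$-patterns are consistent with high set-transitivity pins down the possible shapes of $M|_Y$ — linear order, betweenness, circular order, separation, or the pure set — with $f_{M|_Y}(n)$ equal respectively to $n!$, $n!/2$, $(n-1)!$, $(n-1)!/2$, $1$.

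Next I would transfer this back to all of $X$. Since $G$ is highly set-transitive, every $n$-element subset of $X$ is the image under some $g\in G$ of an $n$-element subset of $Y$, so every orbit of $G$ on $n$-tuples meets $Y$, and a direct comparison yields $f(n)=f_{M|_Y}(n)$ for all $n$; thus $f$ is one of the five sequences above. In each case I would reconstruct the corresponding $G$-invariant relation on all of $X$ from the orbit data — for instance, when $f(2)=2$, one of the two orbits of ordered pairs is an antisymmetric total relation ``$<$'' — and verify that it genuinely is a dense, endpoint-free, homogeneous linear order, resp.\ betweenness, circular, or separation relation. High set-transitivity supplies the verification: no $3$-subset can be oriented ``cyclically'' by the reconstructed ``$<$'' since the canonically embedded $Y$ already exhibits a ``transitively'' oriented $3$-subset and all $3$-subsets lie in one orbit; density and absence of endpoints hold since otherwise two $2$-subsets would be inequivalent; and so on for the higher-arity relations. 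Finally, $G=\Aut(M)$ is contained in the automorphism group $H$ of the reconstructed relation; but $H$ is again highly set-transitive with $f_H=f$, so $G$ and $H$ have the same (finitely many) orbits on $n$-tuples for every $n$, whence $G=H$ as both are closed. A countable dense endpoint-free linear order is $(\mathbb{Q};<)$ up to isomorphism, and betweenness, circular order, separation, and the empty signature are first-order definable in $(\mathbb{Q};<)$; so $G$ is isomorphic to the automorphism group of a structure first-order definable in $(\mathbb{Q};<)$.

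I expect the main obstacle to be the globalization step: from $f(n)=f_{M|_Y}(n)$ alone one must actually produce the invariant relation on $X$ and prove it has the correct combinatorial shape, ruling out ``mixed'' behaviour (some triples transitive, others cyclic, and so forth). This is precisely the combinatorial heart of Cameron's argument, and making the pattern bookkeeping rigorous — in particular classifying which families of $\Sym(k)$-patterns can occur under high set-transitivity — is the delicate part. By comparison, the reductions to the closed and countable cases, the Ramsey extraction, and the concluding ``$G=H$ by equal orbit counts'' step are routine.
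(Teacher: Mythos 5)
The paper does not prove this statement: it is quoted from Cameron~\cite{Cameron5} and used as a black box, so there is no in-paper argument to compare yours against. Your forward direction is fine, provided ``isomorphic'' is read as isomorphism of \emph{permutation} groups (as an abstract-group isomorphism it would be false --- the structure $\Gamma_3$ of Section~\ref{sect:beyond} has automorphism group abstractly isomorphic to $\Aut(({\mathbb Q};<))$ but with two orbits); and your reductions to countable domains and closed groups are exactly the charitable reading under which the paper applies the theorem. Your backward direction follows the general shape of Cameron's actual argument: a Ramsey/diagonalization extraction of an order-indiscernible set $Y$, a classification of the resulting order-patterns into five families (trivial, linear, betweenness, circular, separation), and a globalization back to $X$. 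Your globalization sketch for the $f(2)=2$ case (tournament is transitive because all $3$-sets lie in one orbit and $Y$ exhibits a transitive triple; density and lack of endpoints from set-$2$- and set-$1$-transitivity; then Cantor; then $G=H$ by equal orbit counts among closed groups) is correct and can be written out.

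The genuine gap is the one you name, and it is not a detail --- it is the theorem. High set-transitivity makes $\Sym(n)$ act transitively on the orbits of $G$ on injective $n$-tuples, so these orbits are the cosets of a subgroup $H_n\le\Sym(n)$, and the whole content of Cameron's paper is that the chain $(H_n)_n$ must be one of exactly five: all of $\Sym(n)$, trivial, generated by the reversal, cyclic, or dihedral. Ruling out every other coherent chain --- most notably $H_n=A_n$ for all $n$, a hypothetical $G$-invariant ``parity'' of injective tuples, and transitive-but-not-$2$-transitive subgroups appearing at higher arity --- is the delicate combinatorial core, and your proposal defers it entirely (``a short analysis \dots pins down the possible shapes''). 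Until that analysis is supplied, the five-case list is an assumption, not a conclusion. Two smaller slips: the extracted $Y$ has order type $\omega$, so $\Aut((Y;<))$ is trivial and the step ``$M|_Y$ is a reduct of $({\mathbb Q};<)$ by the forward direction and Cantor'' does not parse --- Ramsey gives you only the pattern data on $Y$, and Cantor's theorem enters later, after the invariant dense order (or circular order, etc.) has been constructed on all of $X$; and the identification $f(n)=f_{M|_Y}(n)$ should be phrased in terms of counting patterns on $Y$ rather than orbits of $\Aut(M|_Y)$, which may be much smaller than the set of patterns suggests.
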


We will need two more lemmas.
Let $R$ and $S$ be two binary relations.
An \emph{alternating closed walk on $R$ and $S$ of length $2n$} is a sequence of elements 
$(x_0,x_1,\dots,x_{2n})$, with $x_{2n} = x_0$, and such that
$(x_{2i},x_{2i+1}) \in R$ and $(x_{2i+1},x_{2i+2}) \in S$, for $0 \leq i < n$.

\begin{lemma} \label{lem:aclwalk}
  Let $R$ and $S$ be two binary relations that are preserved by a
  totally symmetric function $f_n$ of arity $n \geq 1$. 
  If there is an alternating closed walk on $R$ and $S$ of length $2n$,
  then $R \cap S^{-1} \neq \emptyset$.
\end{lemma}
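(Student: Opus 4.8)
The plan is to extract from the alternating closed walk exactly $n$ tuples lying in $R$ and $n$ tuples lying in $S$, apply the totally symmetric polymorphism $f_n$ coordinatewise to each of these two families, and observe that total symmetry forces the resulting pair witnessing membership in $R$ and the resulting pair witnessing membership in $S$ to coincide (up to reversal), yielding an element of $R \cap S^{-1}$.

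Concretely, write the walk as $(x_0,x_1,\dots,x_{2n})$ with $x_{2n}=x_0$. The hypothesis gives $(x_{2i},x_{2i+1})\in R$ for $0\le i<n$, so we have the $n$ tuples $(x_0,x_1),(x_2,x_3),\dots,(x_{2n-2},x_{2n-1})$ in $R$; likewise $(x_{2i+1},x_{2i+2})\in S$ for $0\le i<n$ gives the $n$ tuples $(x_1,x_2),(x_3,x_4),\dots,(x_{2n-1},x_0)$ in $S$. Since $f_n$ preserves $R$, applying it to the first family produces the tuple $\big(f_n(x_0,x_2,\dots,x_{2n-2}),\, f_n(x_1,x_3,\dots,x_{2n-1})\big)\in R$. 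Since $f_n$ preserves $S$, applying it to the second family produces $\big(f_n(x_1,x_3,\dots,x_{2n-1}),\, f_n(x_2,x_4,\dots,x_{2n-2},x_0)\big)\in S$. Now set $a := f_n(x_0,x_2,\dots,x_{2n-2})$ and $b := f_n(x_1,x_3,\dots,x_{2n-1})$.

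The one point that needs care — and the only place where the hypothesis on $f_n$ beyond being a polymorphism is used — is that $f_n$ is \emph{totally symmetric}, so $f_n(x_2,x_4,\dots,x_{2n-2},x_0) = f_n(x_0,x_2,\dots,x_{2n-2}) = a$, because the two argument lists enumerate the same set $\{x_0,x_2,\dots,x_{2n-2}\}$. Hence $(a,b)\in R$ and $(b,a)\in S$, i.e.\ $(a,b)\in R\cap S^{-1}$, so $R\cap S^{-1}\neq\emptyset$. There is no real obstacle here; the work is entirely in choosing the right decomposition of the walk and invoking total symmetry to close the cycle. (Note the case $n=1$ is included: the walk is then $(x_0,x_1,x_0)$ with $(x_0,x_1)\in R$ and $(x_1,x_0)\in S$, and $f_1$ is the identity up to total symmetry, so $(x_0,x_1)\in R\cap S^{-1}$ directly.)
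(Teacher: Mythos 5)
Your proof is correct and follows essentially the same argument as the paper: apply $f_n$ coordinatewise to the $R$-edges and the $S$-edges of the walk, and use total symmetry (the two argument lists $x_0,x_2,\dots,x_{2n-2}$ and $x_2,x_4,\dots,x_{2n}=x_0$ enumerate the same set) to identify the resulting endpoints. The only cosmetic quibble is the aside that ``$f_1$ is the identity up to total symmetry'' --- a unary totally symmetric function need not be the identity --- but your $n=1$ conclusion is drawn directly from the walk, so nothing is affected.
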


\begin{proof}
  Since $(x_{2i},x_{2i+1}) \in R$ for $0 \leq i < n$, we have
  $(y,z) \in R$, for $y = f_n(x_0,x_2,\dots,x_{2n-2})$ and
  $z = f_n(x_1,x_3,\dots,x_{2n-1})$.
  Similarly, since $(x_{2i+1},x_{2i+2}) \in S$ for $0 \leq i < n$, we have
  $(y',z') \in S$, for $y' = f_n(x_1,x_3,\dots,x_{2n-1})$ and
  $z' = f_n(x_2,x_4,\dots,x_{2n})$.
  Note that
  \[
  y = f_n(x_0,x_2, \dots, x_{2n-2}) =
  f_n(x_2,x_4, \dots, x_{2n}=x_0) = z',
  \]
  and that $z = y'$.
  Therefore, $(y,z) \in R$ and $(z,y) \in S$,
  hence $(y,z) \in R \cap S^{-1}$.
\end{proof}

An \emph{orbital} of $\Gamma$ is an orbit of $\Aut(\Gamma)$ acting component-wise on ordered pairs of elements of $\Gamma$.
Every structure always has the \emph{trivial orbital} $\{ (x,x) \mid x \in \Gamma \}$.

\begin{lemma} \label{lem:k1}
  Let $\Gamma$ be a model-complete core with finitely many orbitals and a totally symmetric polymorphism $f_n$ of arity $n$ for all $n \geq 1$.
  Let $X$ be an equivalence class of a first-order definable equivalence relation on $\Gamma$.
  If $\alpha^m(X) = X$ for some $\alpha \in \Aut(\Gamma)$ and $m > 0$,
  then $\alpha(X) = X$.
\end{lemma}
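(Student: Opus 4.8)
The plan is to argue by contradiction and to exploit the fact that an endomorphism of a core must be an embedding, hence injective. Suppose $\alpha(X) \neq X$ and let $m$ be minimal with $\alpha^m(X) = X$, so $m \geq 2$ and the classes $X = X_0, X_1 := \alpha(X_0), \dots, X_{m-1}$ are pairwise distinct and cyclically permuted by $\alpha$. The key construction is this: for any automorphisms $\gamma_1,\dots,\gamma_N$ of $\Gamma$, the map $x \mapsto f_N(\gamma_1 x,\dots,\gamma_N x)$ is a unary polymorphism, i.e.\ an endomorphism of $\Gamma$, and since $f_N$ is totally symmetric this map depends only on the set $\{\gamma_1 x,\dots,\gamma_N x\}$. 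So if there is an $L \geq 1$ with $m \mid L$ and a point $z \in X$ fixed by $\alpha^L$, then for $g(x) := f_L(x,\alpha x,\dots,\alpha^{L-1}x)$ we get
\[
g(\alpha z) = f_L(\alpha z,\dots,\alpha^{L-1}z,\alpha^L z) = f_L(\{z,\alpha z,\dots,\alpha^{L-1}z\}) = g(z),
\]
using $\alpha^L z = z$ and total symmetry, while $z \neq \alpha z$ because $z \in X_0$ and $\alpha z \in X_1 \neq X_0$. Thus $g$ is a non-injective endomorphism of $\Gamma$, contradicting that $\Gamma$ is a core. This handles the decisive case.

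The next step is to reduce the general situation to the decisive case. The permutation $\alpha^m$ restricts to a permutation of $X$. If $X$ is finite, or more generally if $\alpha^m$ has any finite orbit on $X$, then a suitable power $\alpha^{L}$ with $m \mid L$ fixes a point of $X$ (take $L = ms$ where $s$ is the size of a finite orbit), and we are done by the previous paragraph. Equivalently: if the forward orbit $x_0, \alpha^m x_0, \alpha^{2m}x_0,\dots$ of some $x_0 \in X$ is eventually repeating, then $\alpha^{(b-a)m}$ fixes $\alpha^{am}x_0 \in X$ for suitable $a<b$, and the decisive case applies. So the only remaining possibility is that for every $x_0 \in X$ the points $\alpha^{jm}x_0$, $j \in \mathbb{Z}$, are pairwise distinct, i.e.\ every $\alpha^m$-orbit inside $X$ is infinite.

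In this remaining case I would bring in the two hypotheses not yet used: finitely many orbitals and Lemma~\ref{lem:aclwalk}. Let $Q$ be the orbital of the pair $(x_0,\alpha^m x_0)$. Since $\alpha^m$ preserves the class $X$, the relation $Q$ is contained in ${\sim}$, hence inside $X \times X$ on each class; and being an orbit of a pair of a model-complete core, $Q$ is primitive positive definable by Theorem~\ref{thm:mc-cores}(\ref{enum:ppdef}), so $Q$ (and every composition power of $Q$ or of $Q^{-1}$) is preserved by all the $f_n$. The bi-infinite sequence $(\alpha^{jm}x_0)_{j\in\mathbb Z}$ is a $Q$-path with all vertices distinct. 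Because $\Gamma$ has only finitely many orbitals, the pairs $(x_0,\alpha^{km}x_0)$, $k \geq 1$, cannot all lie in different orbitals, so there are $1 \leq k_1 < k_2$ with $(x_0,\alpha^{k_1 m}x_0)$ and $(x_0,\alpha^{k_2 m}x_0)$ in the same orbital, yielding an $\eta \in \Aut(\Gamma)$ with $\eta(x_0)=x_0$ and $\eta(\alpha^{k_1 m}x_0)=\alpha^{k_2 m}x_0$. I would then combine $\eta$ with the $Q$-path to assemble a \emph{closed} alternating walk on $Q$ and an appropriate companion relation (a composition power of $Q$ or $Q^{-1}$) and invoke Lemma~\ref{lem:aclwalk}; this forces a symmetry/coincidence of orbitals which in turn produces a non-injective endomorphism as in the first paragraph, contradicting coreness. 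In every case we obtain a contradiction, so $\alpha(X)=X$.

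I expect the third paragraph — squeezing a genuine closed alternating walk out of the pigeonhole datum and then converting the resulting orbital identity back into a non-injective endomorphism — to be the main obstacle; the bookkeeping over which composition powers of $Q$ and $Q^{-1}$ to feed into Lemma~\ref{lem:aclwalk}, and making sure the resulting coincidence really is incompatible with $\Gamma$ being a model-complete core, is the delicate part.
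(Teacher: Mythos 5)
Your first two paragraphs are fine: the observation that $x \mapsto f_L(x,\alpha x,\dots,\alpha^{L-1}x)$ is an endomorphism which identifies $z$ and $\alpha z$ whenever $\alpha^L z = z$ is correct and neatly disposes of the case where some point of $X$ is periodic under $\alpha$. But the whole weight of the lemma lies in the remaining case, and there your proposal has a genuine gap rather than just unfinished bookkeeping. You restrict attention to the orbital $Q$ of $(x_0,\alpha^m x_0)$, its composition powers and inverses, and an automorphism $\eta$ obtained from a pigeonhole among the pairs $(x_0,\alpha^{km}x_0)$. All of these relations are contained in the equivalence relation (since $\alpha^{km}$ preserves $X$), so every orbital coincidence you could extract from Lemma~\ref{lem:aclwalk} applied to such walks is purely ``within-class'' information; it says nothing about how $\alpha$ relates $X_0$ to $X_1$, and indeed the same data is present in perfectly good model-complete cores (so it cannot by itself contradict coreness or produce a non-injective endomorphism ``as in the first paragraph'' --- that construction needed a periodic point, which is exactly what you no longer have). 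To get traction you must bring in orbitals of pairs $(x,\alpha^j x)$ with $j$ \emph{not} a multiple of the period.

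That is what the paper does, with no case distinction at all: with $r$ the minimal period of $X$, set $O(k)$ to be the orbital of $(x,\alpha^k x)$ for every $k\in\mathbb{Z}$, and use finiteness of the number of orbitals to find $0<l<k$ with $O(k)=O(l)$ \emph{and} $l\equiv r-1 \pmod r$. The sequence $(x,\alpha^{l+1}(x),\alpha(x),\alpha^{l+2}(x),\dots,\alpha^{k-l-1}(x),\alpha^{k}(x),x)$ is an alternating closed walk on $O(l+1)$ and $O(-l)=O(l)^{-1}$, so Lemma~\ref{lem:aclwalk} (orbitals being primitive positive definable in a model-complete core, hence preserved by the $f_n$) gives $O(l+1)=O(l)$. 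Hence some $\beta\in\Aut(\Gamma)$ fixes $x$ and maps $\alpha^{l+1}(x)$ to $\alpha^{l}(x)$; since $l+1\equiv 0\pmod r$ we have $\alpha^{l+1}(X)=X$, so $\alpha^{l}(x)\in X$, giving $\alpha^{r-1}(X)=X$ and contradicting the minimality of $r$ unless $r=1$. Note the endgame is a contradiction with the minimality of the period via the congruence condition on $l$, not a non-injective endomorphism; the congruence trick and the cross-class orbitals $O(l),O(l+1)$ are precisely the ingredients your sketch is missing.
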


\begin{proof}
  By assumption,
  there exists a smallest integer $r \geq 1$ such that $\alpha^r(X) = X$.
  Let $x \in X$, and
  for $k \in \mathbb{Z}$, let $O(k)$ be the orbital of $\Gamma$ containing the tuple $(x,\alpha^k(x))$.
  Then, we have the inclusion $\{ (\alpha^{n}(x), \alpha^{n+k}(x)) \mid n \in \mathbb{Z} \} \subseteq O(k)$.
  
  Since $\Gamma$ has finitely many orbitals,
  we can find integers $0 < l < k$ such that $O(k) = O(l)$.
  In fact, we can do this while ensuring that $l \equiv r-1 \text{ (mod $r$)}$.
  Note that $(\alpha^{i}(x),\alpha^{i+l+1}(x)) \in O(l+1)$, and that
  $(\alpha^{i+1+l}(x),\alpha^{i+1}(x)) \in O(-l)$, for all $i$.
  In particular, the following sequence is an alternating closed walk on $O(l+1)$ and $O(-l) = O(l)^{-1}$ of length $2(k-l)$.
  \[
  (x,\alpha^{l+1}(x),\alpha^1(x),\alpha^{l+2}(x),\alpha^2(x), \dots, \alpha^{k-l-1}(x),\alpha^{k}(x),x)
  \]

  As $\Gamma$ is a model-complete core,
  each orbital is primitive positive definable in $\Gamma$, and hence 
  preserved by $f_n$ for each $n \geq 1$.
  From Lemma~\ref{lem:aclwalk}, it now follows that the orbitals
  $O(l+1)$ and $O(l)$ intersect, and therefore they must be equal.
  This implies that the tuples $(x,\alpha^{l+1}(x))$ and $(x,\alpha^{l}(x))$ 
  are in the same orbital, so there exists an automorphism $\beta$ of $\Gamma$
  which fixes $x$ and maps $\alpha^{l+1}(x)$ to $\alpha^{l}(x)$.
  Since $\beta$ fixes $x$, we have $\beta(X) = X$, and from the choice of
  $l$, we have $\alpha^{l+1}(X) = X$.
  It follows that $\alpha^{l}(x) = \beta(\alpha^{l+1}(x)) \in X$,
  so $\alpha^{l}(X) = X$, and hence $\alpha^{r-1}(X) = X$ as well.
  Due to the choice of $r$, this is only possible if $r = 1$,
  so we conclude that $\alpha$ preserves $X$.
\end{proof}

We are now ready to prove the main result of this section.

\begin{proposition} \label{prop:transisprim}
  Let $\Gamma$ be a transitive sub-exponential model-complete core with totally symmetric polymorphisms of all arities. Then $\Gamma$ has size 1, or it is infinite and primitive.
\end{proposition}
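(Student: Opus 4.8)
The plan is to reduce immediately to the case that $\Gamma$ is infinite and then rule out imprimitivity by a counting argument applied to a quotient of $\Gamma$. Since $\Gamma$ is sub-exponential it has finitely many orbits of $n$-subsets for every $n$, hence is $\omega$-categorical, and in particular has finitely many orbitals and finitely many congruences, so Lemma~\ref{lem:k1} applies. Applying that lemma to the equality relation, whose classes are singletons, shows that no automorphism of $\Gamma$ has a cycle of finite length $\geq 2$ on $D(\Gamma)$; hence if $\Gamma$ were finite every $\Aut(\Gamma)$-orbit on $D(\Gamma)$ would be finite, forcing $\Aut(\Gamma) = \{\mathrm{id}\}$ and, by transitivity, $|D(\Gamma)| = 1$. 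So from now on $\Gamma$ is infinite and it remains to prove it is primitive, which yields the stated dichotomy.

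Suppose $\Gamma$ is not primitive. Let $\sim$ be the coarsest nontrivial congruence of $\Gamma$ (it exists because there are only finitely many congruences), let $\mathcal B$ be its set of classes and $X$ one class, so $|X| \geq 2$ and $|\mathcal B| \geq 2$. Both $X$ and $\mathcal B$ must be infinite: were $X$ finite, the block stabiliser $\mathrm{Stab}_{\Aut(\Gamma)}(X)$ would act transitively on a finite set of size $\geq 2$ and so contain an element with a cycle of length $\geq 2$ on $X$, against the previous paragraph; and were $\mathcal B$ finite, Lemma~\ref{lem:k1} applied to $\sim$ would show $\Aut(\Gamma)$ acts on $\mathcal B$ with no cycle of length $\geq 2$, hence trivially, contradicting transitivity on $\mathcal B$. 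Finally, since $\sim$ is the \emph{coarsest} nontrivial congruence, the congruences of the action of $\Aut(\Gamma)$ on $\mathcal B$ are precisely the images of the congruences of $\Gamma$ above $\sim$, of which only the two trivial ones survive, so $\Aut(\Gamma)$ acts primitively on the infinite set $\mathcal B$.

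Now I would feed this quotient action into Theorems~\ref{thm:macpherson} and~\ref{thm:cameron}. Choosing one representative from each block shows that distinct orbits of $n$-subsets of $\mathcal B$ yield distinct orbits of $n$-subsets of $\Gamma$, so sub-exponentiality of $\Gamma$ forbids the primitive group induced on $\mathcal B$ from having more than $c^n$ orbits of $n$-subsets; by Theorem~\ref{thm:macpherson} it is therefore highly set-transitive, and by Theorem~\ref{thm:cameron} together with Cameron's classification of the reducts of $(\mathbb Q;<)$ there are two cases. If $\Aut(\Gamma)$ preserves a dense linear order without endpoints on $\mathcal B$, I would encode the $2^{n-1}$ compositions of $n$: for a composition $(c_1, \dots, c_t)$, choose blocks $B_1 < \cdots < B_t$ and pick $c_i$ elements from the infinite block $B_i$; an automorphism carrying one such $n$-subset to another must preserve the $<$-order of the blocks it meets and the number of chosen elements in each, so the composition is an orbit invariant, giving $\geq 2^{n-1}$ orbits of $n$-subsets and contradicting sub-exponentiality. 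In the other case $\Aut(\Gamma)$ is $2$-transitive on $\mathcal B$, and there I would use the totally symmetric binary polymorphism $f_2$: $2$-transitivity produces an orbital $O$ of $\Gamma$ that joins two distinct blocks and satisfies $O = O^{-1}$; since $\Gamma$ is a model-complete core this $O$ is primitive positive definable (Theorem~\ref{thm:mc-cores}(\ref{enum:ppdef})) and hence preserved by $f_2$, so from $(x,y) \in O$ we get $(f_2(x,y), f_2(y,x)) \in O$, but $f_2(x,y) = f_2(y,x) =: a$, so the diagonal pair $(a,a)$ lies in $O \subseteq \{(u,v) \mid u \not\sim v\}$, which is impossible. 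Either way we obtain a contradiction, so $\Gamma$ is primitive.

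The delicate point is the $2$-transitive case: one must check that $2$-transitivity of $\Aut(\Gamma)$ on $\mathcal B$ really produces a \emph{symmetric} cross-block orbital of $\Gamma$ itself rather than merely on $\mathcal B$, i.e.\ that once two blocks can be interchanged the within-block structure is homogeneous enough that $(x,y)$ and $(y,x)$ fall into one orbital. This is immediate when $\Aut(\Gamma)$ induces the full symmetric group on $\mathcal B$, since then the automorphism group is a wreath product over the blocks; in the circular-order, betweenness and separation cases it can be verified directly, or bypassed by a cyclic-composition (respectively composition-up-to-reversal) count in the style of the first case, which again produces $2^{\Omega(n)}$ orbits of $n$-subsets.
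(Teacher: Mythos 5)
Most of your argument runs along the same lines as the paper's proof and is sound: you quotient by a maximal congruence, use Lemma~\ref{lem:k1} (applied to equality and to $\sim$) to dispose of the finite case, of finite blocks and of a finite block set, transfer sub-exponentiality to the quotient via representatives, and invoke Theorems~\ref{thm:macpherson} and~\ref{thm:cameron} to reach the dichotomy ``invariant linear order on $\mathcal{B}$'' versus ``$2$-transitive on $\mathcal{B}$''. Your composition-counting in the linearly ordered case is a correct alternative to the paper's encoding (the paper's version needs only blocks of size $2$, yours needs the infinitude of blocks, which you did establish). Two small wording points: ``the coarsest nontrivial congruence'' should be ``a maximal nontrivial congruence'' (uniqueness is neither guaranteed nor needed), and the appeal to the classification of reducts of $({\mathbb Q};<)$ can be replaced, as in the paper, by the observation that a group isomorphic to the automorphism group of a reduct has one or two non-trivial orbitals.

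The genuine gap is the $2$-transitive case, exactly where you flag it. $2$-transitivity of the induced action on $\mathcal{B}$ gives an automorphism interchanging two \emph{blocks}, not one interchanging two \emph{points} in distinct blocks, so it does not yield an orbital $O$ of $\Gamma$ with $O=O^{-1}$ joining distinct blocks, and your proposed patches do not close this. The claim that inducing the full symmetric group on $\mathcal{B}$ forces $\Aut(\Gamma)$ to be a wreath product over the blocks is false as a general inference (a closed imprimitive group can induce the full symmetric group on its block set while being a proper subgroup of the wreath product, e.g.\ the maps $(v,i)\mapsto(\sigma(v),i+\epsilon)$ with a single parity $\epsilon$), and you make no use of the actual hypotheses on $\Gamma$ to justify it; ``can be verified directly'' in the circular/betweenness/separation cases is not an argument; and the counting bypass fails precisely in the remaining highly set-transitive subcase, where the only block-level invariant of an $n$-subset is the \emph{partition} of $n$ recording multiplicities, and partitions grow sub-exponentially, so no contradiction with sub-exponentiality arises. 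The fix is the paper's one-line argument, using a tool you already apply twice: $2$-transitivity (indeed, mere symmetry of the block-level orbital) yields $\alpha\in\Aut(\Gamma)$ with $\alpha(X)=Y$ and $\alpha(Y)=X$ for distinct classes $X,Y$; then $\alpha^{2}(X)=X$, so Lemma~\ref{lem:k1} applied to $\sim$ gives $\alpha(X)=X$, a contradiction. No symmetric orbital of $\Gamma$ and no appeal to $f_{2}$ is needed.
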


\begin{proof}
  Assume that $\Gamma$ has size at least 2.
  Let $E_0$ and $E_1$ denote the congruence of $\Aut(\Gamma)$ with equivalence classes of size 1, and the congruence with a single equivalence class, respectively.
  Let $E$ be an inclusion-maximal congruence
  from the set of all congruences different from $E_1$.
  Existence of $E$ follows from $\Gamma$ having finitely many first-order definable equivalence relations, and the existence of $E_0$. 
  We want to show that $E$ must in fact be $E_0$
  from which it follows that $\Aut(\Gamma)$ is primitive.

  Let $D = D(\Gamma)$.
  By $D/{E}$ we will denote the set of equivalence classes of $E$.
  For $x \in D$, let $x[E]$ denote the equivalence class of $E$ containing $x$, and
  for $\alpha \in \Aut(\Gamma)$, let $\alpha/{E}$ denote the function on $D/{E}$ which maps $x[E]$ to $\alpha(x)[E]$ for each equivalence class $x[E]$ of $E$.
  (It follows from $E$ being a congruence that $\alpha/{E}$ is well-defined.)
  We then have that $\Aut(\Gamma)/{E} := \{ \alpha/{E} \mid \alpha \in \Aut(\Gamma) \}$
  is a permutation group on $D/{E}$.

  Let $H = \Aut(\Gamma)/{E}$.
  For each equivalence class $X_i \in D/E$, choose $x_i \in D$ such that $x_i[E] = X_i$.
  Let $\{X_1,\dots,X_n\}$ and $\{Y_1,\dots,Y_n\}$ be two $n$-subsets of $D/E$
  such that $\alpha(\{x_1,\dots,x_n\}) = \{y_1,\dots,y_n\}$ for some $\alpha \in \Aut(\Gamma)$.
  Then, $\{\alpha/E(X_1),\dots,\alpha/E(X_n)\} = \{Y_1,\dots,Y_n\}$, so the number of orbits of $n$-subsets of $H$ is sub-exponential whenever $\Gamma$ is sub-exponential.

  Assume that $H$ is finite.
  Pick two distinct equivalence classes $X$ and $Y$ of $E$.
  Since $\Gamma$ is transitive, we can find $\alpha \in \Aut(\Gamma)$ such that $\alpha(X) = Y$.
  But if $H$ is finite, then $\alpha^m(X) = X$ for some $m > 0$,
  so $\alpha(X) = X$ by Lemma~\ref{lem:k1}, a contradiction.
  Therefore $H$ must be infinite.
  Congruences of $H$ are in one-to-one correspondence with the congruences of $\Aut(\Gamma)$ containing $E$.
  As the latter are precisely $E$ and $E_1$, it follows that $H$ is primitive.

  It now follows from Theorem~\ref{thm:macpherson} that $H$ is highly set-transitive, and so from Theorem~\ref{thm:cameron} that $H$ is isomorphic to the automorphism group $H'$ of a structure with a first-order definition in $({\mathbb Q}; <)$.
  The group $H'$ either has one or two non-trivial orbitals.
  If it only has one non-trivial orbital, then so does $H$,
  hence for any two distinct equivalence classes $X$ and $Y$ of $E$, there is an automorphism $\alpha$ of $\Gamma$ such that $\alpha(X)=Y$ and $\alpha(Y) = X$.
  Again by Lemma~\ref{lem:k1}, it follows that $\alpha(X) = X$, a contradiction.
  So $H'$ has two non-trivial orbitals, one of which is the order $<$ on ${\mathbb Q}$.
  Via the isomorphism, $H$ thus has a non-trivial orbital $<$ which is a linear order on the equivalence classes of $E$.
  Let $R$ be a binary relation on $\Gamma$ defined by $R(x,y)$ iff $x[E] < y[E]$.
  For $\alpha \in \Aut(\Gamma)$, we have that $\alpha/{E}$ preserves $<$, so $\alpha$ preserves $R$.
  Assume now that the equivalence classes of $E$ have size greater than 1.
  We can then encode a sequence in $\{0,1\}^n$ as a set 
$\{x_1,y_1,\dots,x_n,y_n\} \subseteq \Gamma$:
  choose $x_i, y_i$ so that $R(y_i,x_{i+1})$ for $1 \leq i < n$,
  encode a value 0 in position $i$ by enforcing $E(x_i,y_i)$, $x_i \neq y_i$, and encode a value 1 by enforcing $R(x_i,y_i)$.
  The relations $E$, $\neq$, and $R$ are all preserved by $\Aut(\Gamma)$, so
  if two $2n$-subsets encodes distinct sequences, then they
  must be contained in distinct orbits of $2n$-subsets.
  Hence, the number of orbits of $2n$-subsets is greater than or equal to $2^n$, 
  which contradicts $\Gamma$ being sub-exponential.
  So the equivalence classes of $E$ are of size 1, i.e., $E = E_0$, and $\Gamma$ is primitive.
\end{proof}

\subsection{The Primitive Case}
\label{ssect:primitive}
Assume that $\Gamma$ is a transitive sub-exponential model-complete core with totally symmetric polymorphisms of all arities.
If $\Gamma$ is of size greater than 1, then it is infinite and primitive (Proposition~\ref{prop:transisprim}),
so it is highly set-transitive (Theorem~\ref{thm:macpherson})
and hence isomorphic to a structure $\Gamma'$ with a first-order definition in
$({\mathbb Q};<)$ (Theorem~\ref{thm:cameron}).
To prove the first part of Theorem~\ref{thm:transitive},
we have to show that $\Gamma'$ is in fact first-order
interdefinable with $({\mathbb Q};<)$. 
If the binary relation $<$ is an orbital of $\Gamma'$, then
$<$ is first-order definable in $\Gamma'$ by Theorem~\ref{thm:ryll}, and interdefinability follows. 
Otherwise, the smallest orbital of $\Gamma'$ that contains $<$ also contains a pair $(x,y)$ such that $x > y$.
It follows that $\Gamma'$ has an automorphism $\alpha$ such that $\alpha(x) = y$ and $\alpha(y) = x$, i.e., $\alpha^2(x) = x$. Since $\Gamma'$ has totally symmetric polymorphisms of all arities, we can apply Lemma~\ref{lem:k1} to deduce $\alpha(x) = x$, a contradiction.

To prove the second part of Theorem~\ref{thm:transitive},
we show that any structure with a finite relational
signature and a first-order definition in $({\mathbb Q};<)$ 
has an efficient sampling algorithm.
A relational structure $\Gamma$ is \emph{homogeneous} (or \emph{ultrahomogeneous}) if every isomorphism between finite induced substructures of $\Gamma$ can be extended to an automorphism of $\Gamma$.
A structure $\Gamma$ with signature $\tau$ \emph{admits quantifier-elimination} if every first-order $\tau$-formula is over $\Gamma$ equivalent to a quantifier-free $\tau$-formula.
An $\omega$-categorical structure admits quantifier-elimination iff it is homogeneous~\cite{Oligo}.
Let $[n] := \{1,\dots,n\}$.

\begin{theorem}\label{thm:primitive}
Every structure $\Gamma$ with a finite relational signature $\tau$ and 
a first-order definition in $({\mathbb Q};<)$ has an efficient sampling algorithm. On input $n$, the output of the algorithm is a structure
of size $n$.
\end{theorem}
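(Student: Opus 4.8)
The plan is to exploit the fact that any $\tau$-structure $\Gamma$ with a first-order definition in $(\mathbb{Q};<)$ is $\omega$-categorical and, after passing to an interdefinable structure, homogeneous with quantifier elimination. Concretely, I would first replace $\Gamma$ by the structure $\Gamma^*$ on the same domain whose relations are all the at-most-$m$-ary orbits of tuples of $\Gamma$ (where $m$ is the maximal arity of $\tau$); since every relation of $\Gamma$ is a finite union of such orbits, a sample for $\Gamma^*$ yields a sample for $\Gamma$ by taking the $\tau$-reduct, and the number of these orbits is finite by $\omega$-categoricity. By Cameron's classification of the reducts of $(\mathbb{Q};<)$ — or more elementarily by Ryll-Nardzewski together with the fact that every orbit of $n$-tuples of a structure first-order definable in $(\mathbb{Q};<)$ is a finite Boolean combination of the order and equality on the coordinates — each such orbit relation is quantifier-free definable from $<$ and $=$ in a way that depends only on the pattern of order/equality relations among the coordinates. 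So $\Gamma^*$ is in fact homogeneous.

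Given this, the sampling algorithm on input $n$ outputs the substructure $B$ of $\Gamma$ induced on any $n$ distinct rationals, say $1 < 2 < \dots < n$; by homogeneity all such substructures are isomorphic, and by quantifier elimination the relations on $B$ are exactly the quantifier-free types realized by increasing or arbitrary tuples from $[n]$, which can be enumerated in time polynomial in $n$ (there are at most $n^m$ tuples of each arity $\le m$, and the type of a tuple is read off from its order pattern). The correctness claim to verify is that for every instance $A$ with $|A| = n$ we have $A \rightarrow \Gamma$ iff $A \rightarrow B$. One direction is immediate since $B$ is an induced substructure of $\Gamma$. For the converse, suppose $h : A \rightarrow \Gamma$; the image $h(A)$ has at most $n$ elements, so it is contained in some $n$-element subset $C$ of $\mathbb{Q}$, and by homogeneity there is an isomorphism $C \cong B$ (as induced substructures of $\Gamma$), which composed with $h$ gives a homomorphism $A \rightarrow B$. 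Here it is essential that we work with the full homogeneous structure $\Gamma^*$ — that $B$ is an \emph{induced} substructure, not merely a weak substructure — because homogeneity is what lets us move an arbitrary $n$-element image onto the fixed sample; passing back to the $\tau$-reduct at the very end preserves both directions of the homomorphism equivalence.

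The main obstacle, and the step deserving the most care, is making the reduction to the homogeneous case fully rigorous and effective: one must argue that the finitely many orbit relations of $\Gamma$ are computable from a given first-order definition (so that $B$ is actually produced by an algorithm, not just shown to exist), and that the isomorphism type of an $n$-element induced substructure is itself computable in polynomial time. This is where quantifier elimination for $(\mathbb{Q};<)$ does the work: the quantifier-free type of a tuple $(q_1,\dots,q_k)$ is determined by the relations $q_i < q_j$, $q_i = q_j$, and since $\Gamma$ is first-order definable in $(\mathbb{Q};<)$, membership of a tuple in any $R^\Gamma$ depends only on this finite data, which for a tuple from $[n]$ is trivially computable. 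Everything else is bookkeeping: counting tuples, evaluating the (fixed, finite) defining formulas on the at most $n^m$ relevant tuples, and observing that the resulting structure has exactly $n$ vertices as claimed.
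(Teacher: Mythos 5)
Your proposal is correct in substance, and its computational core is exactly the paper's proof: use quantifier elimination for $(\mathbb{Q};<)$, let $B$ be the substructure of $\Gamma$ induced on $[n]\subseteq\mathbb{Q}$, compute its relations by evaluating the fixed quantifier-free defining formulas on the at most $n^{m}$ relevant tuples, and argue that any homomorphic image of an $n$-element instance can be moved onto $[n]$. The difference is only in how this last step is dressed up. The paper invokes high set-transitivity directly: since $\Gamma$ is first-order definable in $(\mathbb{Q};<)$, every automorphism of $(\mathbb{Q};<)$ is an automorphism of $\Gamma$, so some automorphism carries $s(A)$ into $[n]$. Your detour through the expansion $\Gamma^{*}$ by all orbits of arity at most $m$ and its claimed homogeneity is unnecessary, and as written the justification does not hold up: quantifier-free definability of the relations in $(\mathbb{Q};<)$ does not by itself give homogeneity (for instance $(\mathbb{Q};\{(x,y,z)\mid x<y<z\})$ is quantifier-free definable in $(\mathbb{Q};<)$ but not homogeneous, since a two-point swap is a partial isomorphism that extends to no automorphism), and even granted homogeneity, it only asserts that isomorphisms between finite induced substructures \emph{extend}; it does not by itself produce the isomorphism $C\cong B$ you need. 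What actually produces that isomorphism is the observation you make yourself further down: membership of a tuple in any $R^{\Gamma}$ depends only on the order/equality pattern of its entries, so the order-preserving bijection $C\to[n]$ is already an isomorphism of induced substructures --- which is the paper's high-set-transitivity argument in slightly different clothing. (One could repair the homogeneity claim for $\Gamma^{*}$ via Cameron's classification of the reducts of $(\mathbb{Q};<)$, but nothing in the proof requires it.) Finally, your worry about effectively computing the orbit relations from a given definition is moot: a sampling algorithm is tied to the fixed structure $\Gamma$, so the finitely many quantifier-free defining formulas may simply be hardwired; the paper never computes orbits at all, it just evaluates the given definitions, rewritten in quantifier-free conjunctive normal form, on tuples from $[n]$.
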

\begin{proof}
The structure $({\mathbb Q};<)$ is homogeneous, 
and hence admits quantifier-elimination.
Let $\phi_1,\dots,\phi_k$ be first-order definitions of the relations $R^\Gamma_1,\dots,R^\Gamma_k$ of $\Gamma$, 
written in quantifier-free conjunctive normal form.
For a given $n$, we compute a finite $\tau$-structure $B$ with domain $[n]$ as follows, where $[n]$ is viewed as a subset of $D(\Gamma) = {\mathbb Q}$.
For each $i$, let $R^B_i$ be the  $m_i$-ary relation over the domain $[n]$ that contains
all $m_i$-tuples that satisfy $\phi_i$ (where $m_i$ is the number of free variables of $\phi_i$). 
Since $\phi_i$ is of constant size in $n$, the relation $R^B_i$ can be computed in time 
$O(n^{m_i})$. 
The resulting structure $B = ([n]; R^B_1,\dots,R^B_k)$ 
is clearly a substructure of $\Gamma$. Moreover, if $A$ is a finite $\tau$-structure
with $n$ elements, and $s: A \rightarrow {\mathbb Q}$ is a homomorphism from $A$ to $\Gamma$,
then by high set-transitivity of $\Gamma$ there is an automorphism $\alpha$ of $\Gamma$ such that
$\alpha(s(A)) \subseteq [n]$. Hence, $A$ homomorphically maps to $\Gamma$ if and only if $A$ homomorphically
maps to $B$.
\end{proof}
\section{Beyond Sub-Exponential Growth: Examples}
\label{sect:beyond}

In this section we illustrate with a couple of examples the possibility
of extending our tractability result beyond sub-exponential structures.
All examples of $\omega$-categorical structures with totally symmetric polymorphisms of all arities that we are aware of 
have a first-order interpretation over the structure $({\mathbb Q};<)$.
Interpretations are a central concept from model theory, which we briefly recall in the following.
Let $\sigma$ and $\tau$ be relational signatures, $\Delta$ a $\sigma$-structure, and $\Gamma$ a $\tau$-structure.
A \emph{$d$-dimensional (first-order) interpretation $I$ of $\Gamma$ in $\Delta$} consists of (cf.~\cite{Hodges})

\begin{enumerate}
\item a $\sigma$-formula $\partial_I (x_1, \dots, x_{d})$; 
\item for each atomic $\tau$-formula $\phi(y_1, \dots, y_m)$,
     a $\sigma$-formula $\phi_I({\bar x_1}, \dots, {\bar x_m})$, where
     ${\bar x_1}, \dots, {\bar x_m}$ are disjoint $d$-tuples of distinct variables; and
\item a surjective map $f_I : \partial_I((D(\Delta))^d) \rightarrow \Gamma$; 
\end{enumerate}
such that for all atomic $\tau$-formulas $\phi$ and all ${\bar a_i} \in \partial_I(D(\Delta)^d)$, 
$\phi(f_I({\bar a_1}), \dots, f_I({\bar a_m}))$ holds in $\Gamma$ if and only if
$\phi_I({\bar a_1}, \dots, {\bar a_m})$ holds in $\Delta$.

Our interest in interpretations also stems from the following result.

\begin{lemma}
\label{lem:interpretq}
  Every structure $\Gamma$ with a finite relational signature $\tau$ and a
  $d$-dimensional interpretation in $({\mathbb Q};<)$ 
  has an efficient sampling algorithm.
  On input $n$, the output of the algorithm is a structure of size at most $(dn)^d$.
\end{lemma}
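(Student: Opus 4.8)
The plan is to mimic the proof of Theorem~\ref{thm:primitive}, but carry along the extra coordinate-bookkeeping forced by the dimension $d$. Fix a $d$-dimensional interpretation $I$ of $\Gamma$ in $({\mathbb Q};<)$, with domain formula $\partial_I(x_1,\dots,x_d)$, coordinate formulas $\phi_I$ for each atomic $\tau$-formula $\phi$, and coordinate map $f_I$. Since $({\mathbb Q};<)$ is homogeneous it admits quantifier-elimination, so we may assume all the $\sigma$-formulas $\partial_I$ and $\phi_I$ are quantifier-free, hence Boolean combinations of order atoms between the coordinate variables; in particular they have size that is constant in $n$. Given an input $n$, I would take the finite set $Q_n := [dn] \subseteq {\mathbb Q}$ of size $dn$, form the $d$-tuples over $Q_n$ satisfying $\partial_I$, and let $D(B)$ be the set of these tuples, with $f_I$ restricted to $Q_n^d$ giving names; deduplicating under ``$f_I$ maps to the same point'' (which is itself quantifier-free definable, since equality of elements of $\Gamma$ pulls back along $f_I$ to a $\sigma$-formula) we obtain a finite set of size at most $(dn)^d$. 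For each relation symbol $R \in \tau$ of arity $m$, put an $m$-tuple $(\bar a_1,\dots,\bar a_m)$ of elements of $D(B)$ into $R^B$ exactly when $R_I(\bar a_1,\dots,\bar a_m)$ holds in $({\mathbb Q};<)$; since $R_I$ is quantifier-free of constant size and there are $O((dn)^{dm})$ candidate tuples, this is computable in time polynomial in $n$. By construction $B$ is isomorphic to the substructure of $\Gamma$ induced on $f_I(\{\bar a : \bar a \in Q_n^d,\ \partial_I(\bar a)\})$.

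It remains to show that $A \rightarrow B$ iff $A \rightarrow \Gamma$ for every $\tau$-structure $A$ with $|A| = n$. The ``only if'' direction is immediate since $B$ embeds into $\Gamma$. For ``if'', suppose $s : A \rightarrow \Gamma$ is a homomorphism. Each image $s(a)$, for $a \in D(A)$, equals $f_I(\bar b_a)$ for some $\bar b_a \in \partial_I({\mathbb Q}^d)$; collect the at most $dn$ rational coordinates appearing among the tuples $\bar b_a$, $a \in D(A)$. By high set-transitivity of $({\mathbb Q};<)$ (Theorem~\ref{thm:cameron} applied to $\Aut(({\mathbb Q};<))$ itself, or just the standard fact that $({\mathbb Q};<)$ is homogeneous) there is an automorphism $\alpha$ of $({\mathbb Q};<)$ mapping this finite set of coordinates into $[dn] = Q_n$, order-preservingly. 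Applying $\alpha$ componentwise to each $\bar b_a$ yields tuples $\bar b_a' := \alpha(\bar b_a) \in Q_n^d$; since $\partial_I$ is quantifier-free (an order formula) and $\alpha$ preserves $<$, each $\bar b_a'$ still satisfies $\partial_I$, so $\bar b_a' \in D(B)$. Define $s' : A \rightarrow B$ by $s'(a) := f_I(\bar b_a')$. For any tuple $t = (a_1,\dots,a_m) \in R^A$ we have $R(s(a_1),\dots,s(a_m))$ in $\Gamma$, i.e.\ $R_I(\bar b_{a_1},\dots,\bar b_{a_m})$ in $({\mathbb Q};<)$ by the defining property of the interpretation; applying $\alpha$ and using that $R_I$ is quantifier-free and $\alpha \in \Aut(({\mathbb Q};<))$, we get $R_I(\bar b_{a_1}',\dots,\bar b_{a_m}')$, which by the defining property of $B$ says exactly $(s'(a_1),\dots,s'(a_m)) \in R^B$. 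Hence $s'$ is a homomorphism $A \rightarrow B$, as required.

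The only genuinely delicate point is the interaction between the deduplication step and the homomorphism check: one must be sure that replacing $\bar b_a$ by $\bar b_a'$ is compatible with the quotient defining $D(B)$, i.e.\ that $f_I(\bar b) = f_I(\bar c)$ implies $f_I(\alpha(\bar b)) = f_I(\alpha(\bar c))$, and conversely that distinct elements of $B$ really are distinct elements of the induced substructure of $\Gamma$. Both follow because ``$f_I(\bar x) = f_I(\bar y)$'' is a quantifier-free $\sigma$-formula in the $2d$ variables (it is the pullback under the interpretation of the equality relation on $\Gamma$, and equality is an atomic $\tau$-formula, so it has a quantifier-free definition over $({\mathbb Q};<)$ after quantifier-elimination), and $\alpha$ preserves all quantifier-free $\sigma$-formulas; so the argument above goes through verbatim with ``elements of $B$'' read as equivalence classes. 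Everything else is a routine size estimate: the domain has at most $(dn)^d$ elements and each relation is filled in by evaluating a constant-size order formula on all tuples, so the whole sampler runs in time polynomial in $n$.
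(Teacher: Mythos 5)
Your proof is correct and follows essentially the same route as the paper's: restrict the coordinates to $[dn]$, evaluate the quantifier-free interpreting formulas on all $d$-tuples satisfying $\partial_I$ to build $B$, and use an order-preserving automorphism of $({\mathbb Q};<)$ to map the at most $dn$ coordinates of any homomorphic image of $A$ into $[dn]$. Your explicit treatment of the quotient by the pullback of equality along $f_I$ is a point the paper handles only implicitly (by defining $B$ directly as the induced substructure on $f_I(\partial_I([dn]^d))$), and it is handled correctly.
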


\begin{proof}
  Let $I$ be a $d$-dimensional interpretation of $\Gamma$ in $({\mathbb Q}; <)$.
  On input $n$, we compute a finite $\tau$-structure $B$ which is the induced substructure
  of $\Gamma$ on the domain $f_I(\partial_I([dn]^d))$.
  For an $m$-ary $R \in \tau$, we do the following.
  Let $\phi_I$ be the interpretation in $({\mathbb Q}; <)$ of the atomic formula $R(y_1,\dots,y_m)$,
  given in quantifier-free conjunctive normal form.
  We now evaluate $\phi_I$ on each sequence of $d$-tuples, 
  ${\bar a_1},\dots,{\bar a_m} \in \partial_I([dn]^d)$.
  The formula $\phi_I({\bar a_1},\dots,{\bar a_m})$ is true iff the tuple $(f_I({\bar a_1}), \dots, f_I({\bar a_m}))$ is in $R^B$.
  Since $\phi$ is of constant size in $n$, it follows that we can compute the relation $R^B$ in $O((dn)^{dm})$ time.
  The signature $\tau$ is finite, so there is a relation of highest arity, independent of $n$, 
  which provides the upper bound on the time complexity of the algorithm.
  
Next, let $A$ be a finite $\tau$-structure with $n$ elements, and assume that
$s: A \rightarrow \Gamma$ is a homomorphism.
This implies $A \rightarrow B$ as well:
the image of $A$ under $s$ has at most $n$ elements, $b_1, \dots, b_n \in \Gamma$.
 Let ${\bar a_1}, \dots, {\bar a_n}$ be tuples in $\partial_I({\mathbb Q}^d)$ such that $f_I({\bar a_i}) = b_i$, and
 let $g : s(A) \rightarrow \partial_I({\mathbb Q}^d$) be the function such that $g(b_i) = {\bar a_i}$ for all $i$.
Note that ${\bar a_1}, \dots, {\bar a_n}$ contains at most $dn$ distinct values of ${\mathbb Q}$.
By high set-transitivity of $({\mathbb Q}; <)$, it follows that there is an automorphism $\alpha$ of $({\mathbb Q}; <)$ such that $\alpha(a_{ij}) \in [dn]$
for each $i$ and $j$, where ${\bar a_i} = (a_{i1},\dots,a_{id})$.
Now for each relation symbol $R \in \tau$, we have $(x_1,\dots,x_m) \in R^\Gamma$ iff
$\phi_I(g(x_1),\dots,g(x_m))$ holds in $({\mathbb Q}; <)$ iff
$\phi_I(\alpha(g(x_1)),\dots,\alpha(g(x_m)))$ holds in $({\mathbb Q}; <)$,
and this in turn is true if and only if
$(f_I(\alpha(g(x_1))),\dots,f_I(\alpha(g(x_m)))) \in R^B$.
It follows that $f_I \circ \alpha \circ g \circ s$ is a homomorphism 
from $A$ to $B$.
\end{proof}

Below, we give a number of examples of structures with exponential growth that are interpretable in $({\mathbb Q};<)$, and have semi-lattice operations.

\begin{example}
Let `$<$', `$=$', and `$>$' denote the usual inequality and equality relations
on ${\mathbb Q}$.
Let $\Gamma_1$ be the relational structure over ${\mathbb Q}^2$ with 
binary relations $R_{\rho,\sigma}$ for $\rho, \sigma \in \{<,=,>\}$, where
$R_{\rho,\sigma}((x_1,y_1),(x_2,y_2))$ is defined by
$\rho(x_1,x_2) \wedge \sigma(y_1,y_2)$.
This structure is exponential.
To obtain a lower bound on the growth rate of $\Gamma_1$, 
pick an $n$-subset
$A = \{a_1, \dots, a_n\} \subseteq \mathbb{Q}^2$,
and assume that the projection of $A$ on the second component contains 
exactly $k$ distinct values.
For $1 \leq i \leq k$, let $A_i$ denote the number of elements that have
the $i$th largest second component value.
Then, $A_1 + \dots + A_k = n$ determines a \emph{composition of $n$},
i.e., an expression for $n$ as an ordered sum of positive integers.
Two sets $A, B \subseteq \mathbb{Q}^2$ which determine different compositions
must be in different orbits.
Thus, the number of orbits of $n$-subsets of $\mathbb{Q}^2$
is at least $2^{n-1}$, the number of compositions of $n$.

Hence, Theorem~\ref{thm:yyy} does not apply to $\Gamma_1$.
Instead, tractability can be inferred as follows.
The structure $\Gamma_1$ has a two-dimensional interpretation $I$ in $({\mathbb Q};<)$.
The formula $\partial_I$ is always true, 
$f_I$ is the identity on ${\mathbb Q}^2$, 
and the interpretations of $R$ and $S$ are as given above.
Furthermore, it is easy to verify that $\Gamma_1$ is invariant under the
semi-lattice operations given by (component-wise) $\min$ and $\max$.
Tractability of $\Csp(\Gamma_1)$ now follows from Theorem~\ref{thm:correctness} and Lemma~\ref{lem:interpretq}.

For any $d > 2$, this example can be generalized
to a structure with $3^d$ $d$-ary relations, and with a 
$d$-dimensional interpretation in $({\mathbb Q}; <)$.
The CSP of each such structure is polynomial-time solvable.
\qed \end{example}

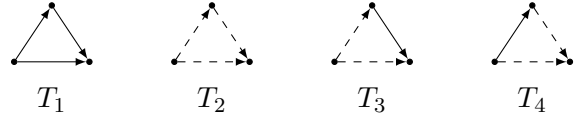
\begin{figure}[htbp]\label{fig:1}
\begin{center}
\begin{tikzpicture}
  \useasboundingbox (-1,-1) rectangle (1,1.5);
  \node () at (0,-0.5) {$T_1$};
  \node[vertex8] (c) at (0.5cm,0)  [] {};
  \node[vertex8] (b) at (0,0.75cm) [] {} edge[edge](c);
  \node[vertex8] (a) at (-0.5cm,0) [] {} edge[edge](c) edge[edge] (b);
\end{tikzpicture}
\begin{tikzpicture}
  \useasboundingbox (-1,-1) rectangle (1,1.5);
  \node () at (0,-0.5) {$T_2$};
  \node[vertex8] (c) at (0.5cm,0)  [] {};
  \node[vertex8] (b) at (0,0.75cm) [] {} edge[edged](c);
  \node[vertex8] (a) at (-0.5cm,0) [] {} edge[edged](c) edge[edged] (b);
\end{tikzpicture}
\begin{tikzpicture}
  \useasboundingbox (-1,-1) rectangle (1,1.5);
  \node () at (0,-0.5) {$T_3$};
  \node[vertex8] (c) at (0.5cm,0)  [] {};
  \node[vertex8] (b) at (0,0.75cm) [] {} edge[edge](c);
  \node[vertex8] (a) at (-0.5cm,0) [] {} edge[edged](c) edge[edged] (b);
\end{tikzpicture}
\begin{tikzpicture}
  \useasboundingbox (-1,-1) rectangle (1,1.5);
  \node () at (0,-0.5) {$T_4$};
  \node[vertex8] (c) at (0.5cm,0)  [] {};
  \node[vertex8] (b) at (0,0.75cm) [] {} edge[edged](c);
  \node[vertex8] (a) at (-0.5cm,0) [] {} edge[edged](c) edge[edge] (b);
\end{tikzpicture}
\end{center}
\caption{Four relational structures with signature $(R,S)$; the relations $R$ and $S$ are given by the solid and dashed arrows, respectively.}
\end{figure}

\begin{example}
The \emph{age} of a structure $\Gamma$ is defined as the class of all finite
structures isomorphic to a substructure of $\Gamma$.
Let $(R,S)$ be a signature with two binary relation symbols, and let
$T = \{T_1, T_2, T_3, T_4\}$ be the set of structures in Fig.~\ref{fig:1},
where the tuples of $R$ ($S$) are given by the solid (dashed) arrows.
Let ${\mathcal C}$ be the class of all finite structures with signature $(R,S)$
for which every three-element substructure is isomorphic to a structure in $T$.
It can be shown that ${\mathcal C}$ is an \emph{amalgamation class}
so that its \emph{Fra\"iss\'e limit} exists (cf.\ Theorem~6.1.2 in \cite{Hodges}).
This is the up to isomorphism unique countable homogeneous structure
with age ${\mathcal C}$.

The following describes a relational structure $\Gamma_2$ with age ${\mathcal C}$
which can be verified to be homogeneous.
It follows that $\Gamma_2$ is isomorphic to the Fra\"iss\'e limit of ${\mathcal C}$.
Let $\Gamma_2 = ({\mathbb Q}^2; R, S)$, where
$R((x_1,y_1),(x_2,y_2))$ is the relation $x_1 = x_2 \wedge y_1 < y_2$, and
$S((x_1,y_1),(x_2,y_2))$ is the relation $x_1 < x_2$.
The growth rate of $\Gamma_2$ can be bounded as in the previous example,
and here it turns out that the number of orbits of $n$-subsets is precisely
$2^{n-1}$.
The structure $\Gamma_2$ also has a two-dimensional interpretation in $({\mathbb Q};<)$
and semi-lattice polymorphisms given by (component-wise) $\min$ and $\max$,
so tractability follows once again from
Theorem~\ref{thm:correctness} and Lemma~\ref{lem:interpretq}.
\qed \end{example}


\begin{example}
  Let $\Gamma_3 := (U \cup V; M, <)$ be the following relational structure.
  The domain $U \cup V$ is the disjoint union of two copies of ${\mathbb Q}$.
  The binary relation $M$ defines a perfect matching between the elements
  of $U$ and the elements of $V$,
  and the binary relation $<$ defines a dense linear order on $U \cup V$
  such that $u < v$ for all $u \in U$ and $v \in V$,
  and for $v_1,v_2 \in V$, we have $v_1 < v_2$
  iff $u_1 < u_2$ for the elements $u_1, u_2 \in U$ with
  $(u_1,v_1), (u_2,v_2) \in M$.

  The structure $\Gamma_3$ is invariant under the semi-lattice operations given by $\min$ and $\max$ defined with respect to the order $<$ on $U \cup V$.
  It has two orbits and $\Aut(\Gamma_3)$ is isomorphic (as an abstract group) to $\Aut(({\mathbb Q}; <))$.
  By Theorem~\ref{thm:no-interaction}, this implies that $\Gamma_3$ does not have sub-exponential growth.
  But $\Gamma_3$ has a 2-dimensional interpretation $I$ in $({\mathbb Q}; <)$, so $\Csp(\Gamma_3)$ is polynomial-time solvable:
  let $\partial_I(x,y)$ be the formula $x \neq y$, 
  and let $f_I(x,y)$ be the copy of $x$ in $U$ if $x < y$ and the copy
  of $x$ in $V$ if $x > y$.
  The matching $M$ on $(x_1,y_1)$ and $(x_2,y_2)$ is interpreted by the formula
  $x_1 = x_2 \wedge x_1 < y_1 \wedge x_2 > y_2$
  and the order $<$ on $(x_1,y_1)$ and $(x_2,y_2)$ is interpreted by the formula
  $(x_1 < y_1 \wedge x_2 > y_2) \vee 
   (x_1 < y_1 \wedge x_2 < y_2 \wedge x_1 < x_2) \vee
   (x_1 > y_1 \wedge x_2 > y_2 \wedge x_1 < x_2)$.
\qed \end{example}

Every finite structure $B$ has an interpretation $I$ in $({\mathbb Q}; <)$,
in fact, even in $({\mathbb Q};=)$.
Let $n = |D|$ and the dimension of the interpretation be $d = 2n$.
The formula $\partial_I(x_1,\dots,x_n,x'_1,\dots,x'_n)$ is true
if and only if for exactly one $i$ it holds that $x_i = x'_i$.
Equality is interpreted by the formula
\[
\phi_=(x_1,\dots,x_n,x'_1,\dots,x'_n,y_1,\dots,y_n,y'_1,\dots,y'_n)
= (\bigwedge_{i=1}^n (x_i = x'_i) \Leftrightarrow (y_i = y'_i)).
\]
It is now straightforward to write down first-order formulas $\phi_R$
that interpret the relations of $B$.
When $R$ is $k$-ary, then the formula $\phi_R$ is a
disjunction of conjunctions with $2nk$ variables 
$x_{1,1},\dots,x_{1,n},x'_{1,1},\dots,x'_{1,n},\dots,x_{k,1},\dots,x_{k,n},x'_{k,1},\dots,x'_{k,n}$.
For each tuple $(t_1,\dots,t_k) \in R$, the disjunction contains the
conjunct $\bigwedge_{j=1}^k (x_{j,t_j} = x'_{j,t_j})$.

The examples in this section suggest the following question.

\begin{question}\label{quest:interpretations}
Is it true that all $\omega$-categorical relational structures with totally symmetric polymorphisms of all arities have a first-order interpretation
over $({\mathbb Q}; <)$?
\end{question}

\section{Concluding Remarks}
\label{sect:disc}
In this article we prove that constraint satisfaction problems for templates $\Gamma$
where the number of orbits of $n$-subsets of $\Gamma$ grows sub-exponentially in $n$ can be solved in polynomial time
when $\Gamma$ has a semi-lattice polymorphism. In fact, we showed the stronger result which only requires
the existence of totally symmetric polymorphisms of all arities, instead of requiring the existence of a semi-lattice polymorphism.
This algorithmic result can be showed in two stages: 
\begin{enumerate}
\item In the first stage, we reduce CSP$(\Gamma)$, for structures $\Gamma$ with totally symmetric polymorphisms of all arities, to solving certain uniform
finite domain CSPs, and to the task to find an efficient sampling algorithm for $\Gamma$. 
\item In the second stage, we \emph{classify} sub-exponential structures $\Gamma$ with totally symmetric polymorphisms of all arities,
and use the classification to verify that there always exists an efficient sampling algorithm for $\Gamma$.
\end{enumerate}

The reduction presented in the first stage crucially relies on the fact that 
when $\Gamma$ has totally symmetric polymorphisms of all arities,
then for all finite induced substructures $S$ of $\Gamma$ the set structure of $S$ homomorphically maps to $\Gamma$ (Lemma~\ref{lem:totsymtohomo}). We want to remark that this connection has a converse when $\Gamma$ is $\omega$-categorical. 

\begin{lemma}
\label{lem:omegaequiv}
  Let $\Gamma$ be an $\omega$-categorical structure over a finite relational signature.
  Then
  $\Gamma$ has totally symmetric polymorphisms of all arities if and only if
  ${\mathcal P}(S) \rightarrow \Gamma$ for all finite substructures
  $S \subseteq \Gamma$.
\end{lemma}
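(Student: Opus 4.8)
The plan: the forward implication is exactly Lemma~\ref{lem:totsymtohomo}, so only the converse needs work. Assuming that ${\mathcal P}(S) \to \Gamma$ for every finite substructure $S \subseteq \Gamma$, I would fix an arbitrary arity $k \ge 1$ and construct a $k$-ary totally symmetric polymorphism of $\Gamma$; as $k$ is arbitrary, this yields the claim. The idea is to encode ``$\Gamma$ has a $k$-ary totally symmetric polymorphism'' as a single homomorphism problem, and then solve that problem by a compactness argument that feeds in the hypothesis locally, one finite substructure at a time.

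Concretely, I would introduce the auxiliary $\tau$-structure $C_k$ whose domain is the set of all nonempty subsets of $D(\Gamma)$ of size at most $k$, and in which, for an $n$-ary $R \in \tau$, a tuple $(X_1,\dots,X_n)$ of such subsets belongs to $R^{C_k}$ exactly when there are $t_1,\dots,t_k \in R^\Gamma$ with $X_i = \{t_1[i],\dots,t_k[i]\}$ for all $i$. A direct check (which I would include but not belabor) shows that a map $\hat f : D(C_k) \to D(\Gamma)$ is a homomorphism $C_k \to \Gamma$ if and only if $(x_1,\dots,x_k) \mapsto \hat f(\{x_1,\dots,x_k\})$ is a $k$-ary totally symmetric polymorphism of $\Gamma$, and every such polymorphism arises this way. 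Thus the entire task reduces to establishing $C_k \to \Gamma$.

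Next I would invoke the standard fact that, because $\Gamma$ is $\omega$-categorical and $C_k$ is countable, $C_k \to \Gamma$ holds as soon as every finite substructure of $C_k$ maps homomorphically to $\Gamma$. (If this is not conveniently citable, it follows from K\"onig's lemma applied to the tree of $\Aut(\Gamma)$-orbits of partial homomorphisms from finite initial segments of $C_k$ into $\Gamma$: $\omega$-categoricity makes the tree finitely branching, the hypothesis makes each level nonempty, and an infinite branch is realized as a homomorphism using that consecutive orbits restrict to one another.) So let $C_k|_{\mathcal U}$ be a finite substructure, with ${\mathcal U}$ a finite set of at-most-$k$-element subsets of $D(\Gamma)$, and let $S$ be the induced substructure of $\Gamma$ on $\bigcup {\mathcal U}$. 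Every member of ${\mathcal U}$ is a nonempty subset of $D(S)$, hence a vertex of ${\mathcal P}(S)$, so the hypothesis gives a homomorphism $h : {\mathcal P}(S) \to \Gamma$, and I would check that its restriction to ${\mathcal U}$ is a homomorphism $C_k|_{\mathcal U} \to \Gamma$: if $(X_1,\dots,X_n) \in R^{C_k}$ is witnessed by $t_1,\dots,t_k$, then each $t_j$ has all entries in $\bigcup {\mathcal U} = D(S)$ and hence lies in $R^S$; and for every coordinate $i$ and every $u \in X_i$, the witness $t_\ell$ with $t_\ell[i] = u$ keeps all of its entries inside the corresponding sets $X_j$, which is precisely the condition defining $R^{{\mathcal P}(S)}$. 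Hence $(X_1,\dots,X_n) \in R^{{\mathcal P}(S)}$, so $(h(X_1),\dots,h(X_n)) \in R^\Gamma$. This shows every finite substructure of $C_k$ maps to $\Gamma$, so $C_k \to \Gamma$, and $\Gamma$ has a $k$-ary totally symmetric polymorphism.

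The one genuinely non-routine ingredient is the compactness step, and it is exactly there that $\omega$-categoricity is used — it is what makes the branching tree finite — which explains why the converse direction can fail for general structures even though Lemma~\ref{lem:totsymtohomo} itself does not use $\omega$-categoricity. Everything else is a matter of unwinding the definitions of $C_k$ and ${\mathcal P}(S)$ and observing that the ``refining'' tuples $t_1,\dots,t_k$ used to witness $C_k$-constraints among members of ${\mathcal U}$ are precisely the tuples needed to witness the corresponding ${\mathcal P}(S)$-constraints.
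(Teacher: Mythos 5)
Your proof is correct, and it reaches the conclusion by the same underlying compactness mechanism as the paper, but packaged differently. The paper works directly with the set structures: it fixes the arity $n$, considers for each finite $L_m \subseteq D(\Gamma)$ the homomorphisms ${\mathcal P}(\Gamma[L_m]) \rightarrow \Gamma$ (nonempty by hypothesis), quotients by $\Aut(\Gamma)$ to get finite branching from $\omega$-categoricity, extracts an infinite coherent chain via K\"onig's lemma, and reads off the totally symmetric $n$-ary polymorphism from the limit. You instead encode ``$\Gamma$ has a $k$-ary totally symmetric polymorphism'' as a single homomorphism problem $C_k \rightarrow \Gamma$ for the auxiliary structure on at-most-$k$-element subsets, invoke the general local-to-global principle (a countable structure maps to an $\omega$-categorical one iff all its finite substructures do), and discharge the finite instances by observing that every $R^{C_k}$-constraint among sets contained in $\bigcup {\mathcal U}$ is in particular an $R^{{\mathcal P}(S)}$-constraint, so the hypothesized homomorphism ${\mathcal P}(S) \rightarrow \Gamma$ restricts appropriately; this bridge from $C_k$ to ${\mathcal P}(S)$ is the one verification the paper does not need, and you carry it out correctly. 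What your route buys is modularity: the K\"onig/orbit argument is isolated in a reusable, citable lemma rather than interleaved with the construction of the polymorphism, and the hypothesis enters purely locally; what the paper's route buys is directness, since the limit of the chain of ${\mathcal P}(\Gamma[L_m])$-homomorphisms immediately \emph{is} the desired totally symmetric operation, with no auxiliary structure to introduce. Both correctly locate the use of $\omega$-categoricity in the finite-branching step.
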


\begin{proof}
  The forward direction was proved in Lemma~\ref{lem:totsymtohomo}.
  The remaining direction can be proved using a common technique for
  constructing homomorphisms to $\omega$-categorical structures.
  Given an arbitrary positive integer $n$,
  we want to produce an $n$-ary totally symmetric polymorphism $f$ of $\Gamma$.
  The idea of the proof is as follows:
  let $l_1, l_2, \dots$ be an enumeration of the elements of $D = D(\Gamma)$,
  and let $L_k = \{l_1, \dots, l_k\}$.
  For each $k \geq 1$, let $F_k$ be the set of homomorphisms from ${\mathcal P}(\Gamma[L_k])$ to $\Gamma$.
  Introduce an equivalence relation $\sim$ on $F_k$ by defining $f \sim g$ iff $f = \alpha \circ g$ for some automorphism $\alpha$ of $\Gamma$.
  Let $\tilde{F}_k$ denote the set of equivalence classes of $F_k$ under $\sim$.

  Arrange the elements of $\bigcup_{k \geq 1} \tilde{F}_k$ into a
  forest containing at least one infinite tree:
  each $f_1 \in F_1$ is defined to be the root of a separate tree, and
  for each $k > 1$, and $f_k \in F_k$, define the parent of 
  $\tilde{f}_k \in \tilde{F}_k$ to be the
  equivalence class containing the restriction of $f_k$ to the non-empty
  subsets of $L_{k-1}$.
  This definition is independent of the choice of representative in $\tilde{f}_k$,
  so each equivalence class in $\tilde{F}_k$, $k > 1$, has precisely one parent.
  Since $\Gamma$ is $\omega$-categorical, it follows that there are finitely
  many equivalence classes for a fixed $k$.
  Hence, there are finitely many trees and each tree is finitely branching in each node.
  By assumption, $\tilde{F}_k$ is non-empty for each $k \geq 1$, 
  so some tree has unbounded height.
  Now, K\"onig's tree lemma implies the existence of an infinite path $\tilde{f}_1, \tilde{f}_2, \dots$ in some tree.
  Assume that there are representatives $f_1 \in \tilde{f}_1, f_2 \in \tilde{f}_2, \dots, f_{k} \in \tilde{f}_{k}$, such that $f_{k-1}$ is the restriction of $f_k$ to the non-empty subsets of $L_{k-1}$.
  We show that this path can be extended indefinitely:
  choose $g_{k+1} \in \tilde{f}_{k+1}$ arbitrarily and let $g_{k}$ be its
  restriction to the non-empty subsets of $L_{k}$.
  Then, there exists an automorphism $\alpha$ such that $f_{k} = \alpha \circ g_{k}$.
  It follows that $f_{k}$ is the restriction of $\alpha \circ g_{k+1}$
  to the non-empty subsets of $L_{k}$, hence we can define $f_{k+1} := \alpha \circ g_{k+1}$.

  Now, for any $n$-tuple $(x_1,\dots,x_n)$ over $D$, define
  $f(x_1,\dots,x_n) = f_m(\{x_1,\dots,x_n\})$, where $m$ is an
  any integer such that $\{x_1,\dots,x_n\} \subseteq L_m$.
  By the construction of the sequence $f_1, f_2, \dots$,
  the function $f$ is 
  a well-defined totally symmetric $n$-ary function on $D$.
  To verify that $f$ is a polymorphism of $\Gamma$,
  let $R^\Gamma$ be an $r$-ary relation, and $t_1, \dots, t_n \in R^\Gamma$.
  Let $U_i = \{t_1[i],\dots,t_n[i]\}$, for $1 \leq i \leq r$.
  Assume without loss of generality that $m$ has been chosen large enough
  so that $\bigcup_{i=1}^r U_i \subseteq L_m$.
  Then $f(t_1[i],\dots,t_n[i]) = f_m(U_i)$ for all $i$.
  By definition of the set structure, we have
  $(U_1,\dots,U_r) \in R^{{\mathcal P}(\Gamma[L_m])}$.
  Since $f_m$ is a homomorphism, 
  we conclude that $(f_m(U_1),\dots,f_m(U_r)) \in R^{\Gamma}$, so
  $f$ is indeed a polymorphism.
\end{proof}

Because of the general applicability of the algorithmic approach in
Section~\ref{sect:alg}, and because
of the fact that the known structures of exponential growth that have
totally symmetric polymorphisms seem to be well-behaved (see Section~\ref{sect:beyond}),
 we make the following conjecture.

\begin{conjecture}\label{conj:brave}
Let $\Gamma$ be an $\omega$-categorical structure with finite relational
signature and totally symmetric polymorphisms of all arities.
Then $\Csp(\Gamma)$ can be solved in polynomial time.
\end{conjecture}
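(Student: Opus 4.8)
The conjecture is open, so what follows is a research plan rather than a route I am confident closes; the strategy is to push the two-stage argument of the paper as far as possible and isolate the single model-theoretic step that currently relies on sub-exponential growth.

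\emph{Stage 1: reduce to an interpretability question.} By the (unlabelled) proposition in Section~\ref{sect:mc-cores}, the model-complete core $\Gamma$ of an $\omega$-categorical structure $\Delta$ with an $n$-ary totally symmetric polymorphism again has one; and by the discussion closing that section, an efficient sampling algorithm for $\Gamma$ lifts to one for $\Delta$. So, invoking Theorem~\ref{thm:correctness}, it suffices to produce an efficient sampling algorithm for an arbitrary $\omega$-categorical model-complete core $\Gamma$ with totally symmetric polymorphisms of all arities. By Lemma~\ref{lem:interpretq}, it would in turn suffice to show that every such $\Gamma$ admits a finite-dimensional first-order interpretation in $(\mathbb{Q};<)$ — that is, to answer Question~\ref{quest:interpretations} affirmatively (for model-complete cores, which is enough, since we only need to transport the sampling algorithm, not interpretability itself).

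\emph{Stage 2: attack the interpretability question along the lines of Section~\ref{sect:classification}.} I would follow the same skeleton: reduce to the transitive case via Proposition~\ref{prop:trans-mc-cores} (which already holds for all $\omega$-categorical model-complete cores), then pass to an inclusion-maximal proper congruence $E$ and aim to show $E$ is the equality congruence, i.e.\ that $\Gamma$ is primitive. The tool that survives intact is Lemma~\ref{lem:k1}: it needs only finitely many orbitals — automatic for $\omega$-categorical structures — together with totally symmetric polymorphisms, and it already rules out any automorphism that permutes the classes of a definable equivalence relation in a nontrivial finite cycle. The steps that genuinely break are the appeals to Macpherson's dichotomy (Theorem~\ref{thm:macpherson}), to Lemma~\ref{lem:convex}, and to Theorem~\ref{thm:no-interaction}: each extracts a contradiction from an exponential lower bound on the number of orbits of $n$-subsets, which is exactly the hypothesis we are dropping. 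Replacing them requires a classification of the primitive $\omega$-categorical groups that can occur as $\Aut(\Gamma)$ for a model-complete core with totally symmetric polymorphisms of all arities — conjecturally only $\Aut((\mathbb{Q};<))$ and its reducts. The algebraic leverage on offer is the alternating-closed-walk machinery (Lemma~\ref{lem:aclwalk}, underlying Lemma~\ref{lem:k1}) together with the primitive positive definability of all orbits of tuples (Theorem~\ref{thm:mc-cores}(3)), which forces many orbitals to merge; the hope is that this is rigid enough to exclude every primitive $\omega$-categorical group other than the reducts of $(\mathbb{Q};<)$, even without the counting argument. The intransitive-to-transitive assembly would then need a replacement for Theorem~\ref{thm:no-interaction}, presumably a direct orbital analysis between pairs of orbits again driven by Lemma~\ref{lem:k1} and the convexity phenomenon of Lemma~\ref{lem:convex} (whose present proof, however, also uses sub-exponentiality).

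\emph{Main obstacle.} The crux is precisely the step that sub-exponential growth trivialises through Macpherson's theorem: proving, from the totally-symmetric-polymorphism hypothesis alone, that a primitive $\omega$-categorical model-complete core with such polymorphisms is a reduct of $(\mathbb{Q};<)$ — equivalently, Question~\ref{quest:interpretations} in the primitive case. A primitive $\omega$-categorical permutation group need not be highly set-transitive in general, so one must use the polymorphism hypothesis in an essential way to kill all such examples; it is not clear that the merging-of-orbitals arguments available here are by themselves a substitute for the full strength of the orbit-counting lower bounds, and this is where I would expect the argument to stall.
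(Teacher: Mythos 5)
You were asked to prove Conjecture~\ref{conj:brave}, but the paper itself leaves this statement open: the authors give no proof, only the supporting evidence of Section~\ref{sect:beyond} (exponential-growth structures with totally symmetric polymorphisms that happen to have first-order interpretations in $({\mathbb Q};<)$, so that Lemma~\ref{lem:interpretq} combined with Theorem~\ref{thm:correctness} yields tractability) and the explicitly open Question~\ref{quest:interpretations}. So there is no proof in the paper to compare your attempt against, and your proposal --- as you say yourself --- is a research plan, not a proof; the conjecture is not established by it.

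That said, your plan is an accurate reconstruction of the route the paper gestures at: pass to the model-complete core (the unlabelled proposition of Section~\ref{sect:mc-cores} preserves totally symmetric polymorphisms, and the closing remark of that section transfers sampling algorithms back), reduce tractability to efficient sampling via Theorem~\ref{thm:correctness}, and reduce sampling to finite-dimensional interpretability in $({\mathbb Q};<)$ via Lemma~\ref{lem:interpretq}, i.e.\ to an affirmative answer to Question~\ref{quest:interpretations} (interpretability of the core suffices, as you note, since only the sampler needs to be transported). You also locate correctly where sub-exponentiality enters essentially: Lemma~\ref{lem:k1} (through Lemma~\ref{lem:aclwalk} and Theorem~\ref{thm:mc-cores}(3)) needs only $\omega$-categoricity and the polymorphisms, whereas Theorem~\ref{thm:macpherson}, Lemma~\ref{lem:convex} and Theorem~\ref{thm:no-interaction} all derive contradictions from exponential lower bounds on orbits of $n$-subsets, and these are exactly the tools you lose. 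Without them there is no known argument that a primitive $\omega$-categorical model-complete core with totally symmetric polymorphisms of all arities is a reduct of $({\mathbb Q};<)$, nor even that it interprets in $({\mathbb Q};<)$, which is weaker and would already suffice. That missing step --- which you name but do not supply --- is the genuine gap: as submitted, the proposal is a faithful, correctly referenced map of the obstruction, but it does not prove the statement, which remains open both in the paper and after your attempt.
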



\end{document}